\DeclareMathOperator{\ind}{\mathds{1}}  % Indicator
\theoremstyle{plain}
\newtheorem{definition}{Definition}
\newtheorem{theorem}{Theorem}
\newtheorem{lemma}{Lemma}
\newtheorem{corollary}{Corollary}
\newtheorem{proposition}{Proposition}
\newcommand{\dd}{\mathrm{d}}
\newcommand{\R}{\mathbb{R}}
\newcommand{\real}{\mathbb{R}}
\newcommand{\E}{\mathbb{E}}
\newcommand{\p}{\mathbb{P}}
\newcommand{\N}{\mathcal{N}}
\newcommand{\indep}{\rotatebox[origin=c]{90}{$\models$}}
\newcommand{\indc}[1]{{\mathbf{1}_{\left\{{#1}\right\}}}}
\newcommand{\Indc}{\mathbf{1}}
\newcommand{\defn}{:=}
\newcommand{\floor}[1]{{\left\lfloor {#1} \right \rfloor}}
\newcommand{\ceil}[1]{{\left\lceil {#1} \right \rceil}}
\definecolor{myblue}{rgb}{.8, .8, 1}
\definecolor{mathblue}{rgb}{0.2472, 0.24, 0.6} % mathematica's Color[1, 1--3]
\definecolor{mathred}{rgb}{0.6, 0.24, 0.442893}
\definecolor{mathyellow}{rgb}{0.6, 0.547014, 0.24}
\newcommand{\tX}{{\widetilde{X}}}
\newcommand{\calC}{{\mathcal{C}}}
\newcommand{\calG}{{\mathcal{G}}}
\newcommand{\calH}{{\mathcal{H}}}
\newcommand{\calN}{{\mathcal{N}}}
\newcommand{\calS}{{\mathcal{S}}}
\long\def\comment#1{}
\newcommand{\algoname}{derandomized knockoffs}
\newcommand{\eqd}{\stackrel{\textnormal{d}}{=}}
\newcommand{\Null}{\mathcal{H}_0}
\newcommand{\Exs}{\mathbb{E}}
\newcommand{\bmx}{\bm{X}}
\newcommand{\bmtx}{\widetilde{\bm{X}}}
\newcommand{\bmY}{\bm Y}
\newcommand{\kFWER}{k\operatorname{-FWER}}
\title{Derandomizing Knockoffs} 
\author{
Zhimei Ren\thanks{Department of Statistics, Stanford University, Stanford, CA 94305} 
\and Yuting Wei\thanks{Statistics \& Data Science Department, Carnegie Mellon University, Pittsburgh, PA 15213}
\and Emmanuel Cand\`es\thanks{Department of Mathematics and Department of Statistics, Stanford University, Stanford, CA 94305}\\[0.2cm]
}
\date{\today}
\begin{document}
\maketitle

%%%%%%%%%%%%%%%%%%%%%%%%%%%%%%%%%%%%%%%%%%%%%%%%%%%%%%%%%%%%
\begin{abstract}
  Model-X knockoffs is a general procedure that can leverage any
  feature importance measure to produce a variable selection
  algorithm, which discovers true effects while rigorously controlling
  the number or fraction of false positives. Model-X knockoffs is a
  randomized procedure which relies on the one-time construction of
  synthetic (random) variables. This paper introduces a
  derandomization method by aggregating the selection results across
  multiple runs of the knockoffs algorithm. The derandomization step
  is designed to be flexible and can be adapted to any variable
  selection base procedure to yield \emph{stable} decisions without
  compromising statistical power.  When applied to the base procedure
  of \cite{janson2016familywise}, we prove that derandomized knockoffs
  controls both the per family error rate (PFER) and the $k$
  family-wise error rate ($k$-FWER).  Further, we carry out extensive
  numerical studies demonstrating tight type-I error control and
  markedly enhanced power when compared with alternative variable
  selection algorithms.  Finally, we apply our approach to multi-stage
  genome-wide association studies of prostate cancer and report
  locations on the genome that are significantly associated with the
  disease.  When cross-referenced with other studies, we find that the
  reported associations have been replicated.

% \textbf{Keywords.} 

\end{abstract}

% \ejc{I don't think we need a table of contents}
% \yutingcomment{sure, it will be removed in the final version}
% \tableofcontents

% \ejc{We probably need a better title.}

\vspace*{0.5cm}
% \tableofcontents
%\input{part1}
\section{Introduction}
\label{sec:intro}
There has been a surge of interest in the design of trustworthy
inferential procedures for massive data applications with a focus on
the development of variable selection algorithms that are flexible,
and at the same time, possess clear performance guarantees. Among
them, the method of knockoffs, or knockoffs for short
\citep{barber2015controlling,candes2018panning}, has proved
particularly effective in a variety of applications
\citep{gao2018model,srinivasan2019compositional,sesia2020multi}. Imagine
a scientist wishes to infer which of the many covariates she has
measured are truly associated with a response of interest; for
instance, which of the many genetic variants influence the
susceptibility of a disease. At a high-level, the knockoffs selection
algorithm begins by synthesizing `fake' copies of the covariates (fake
genetic variants in our example), which can be thought of as serving
as a control group for the features. By contrasting the values a
feature importance statistic takes on when applied to a true variable
and a fake variable, it becomes possible to tease apart those features
which have a true effect on the response. This can be achieved via a
clever filter while controlling either the expected fraction of false
positives \citep{barber2015controlling} or simply the number of false
positives \citep{janson2016familywise}.

A frequently discussed issue is that knockoffs is a randomized
procedure; that is, the fake covariates (the knockoffs) are
stochastic. Therefore, different runs of the algorithm produce
different knockoffs (unless we use the same random seed), and in big
data applications, researchers have observed that the selection
algorithm may each time return overlapping, yet, different selected
sets. This has led researchers to report those features whose
selection frequency exceeds a threshold along with the corresponding
frequencies \citep{candes2018panning,sesia2019gene}.  While
statisticians are accustomed to randomized procedures---after all, any
procedure based on data splitting is randomized in the sense that
different splits typically yield different outcomes---it is still
desirable to derandomize the knockoffs selection algorithm as to
produce consistent results. This paper achieves this goal by running
the knockoffs algorithm several times and aggregating results
across all runs.

\paragraph{An overview of our contributions}
Our derandomization scheme is inspired by the stability selection
framework of \cite{meinshausen2010stability} and
\cite{shah2013variable}, which finds its roots in bootstrap
aggregating
\citep{efron1983leisurely,breiman1996bagging,breiman1999using},
subbagging \citep{buhlmann2002analyzing} and random forest
\citep{breiman2001random}.  In a nutshell, our scheme consists in
applying a knockoffs selection algorithm multiple times, each time
with a new matrix of knockoffs, and proceeds by aggregating the
results using the same rationale behind the stability selection
criterion, which as its name suggests, puts a premium on stability and
consistency. We will demonstrate that this indeed reduces the
variability of the outcome.  In Section \ref{sec:methodology}, we will
however explain why the similarities between stability selection and
derandomized knockoffs stop here, and why the interpretation and
properties of the two procedures are very different. Moving on, we
empirically demonstrate that derandomized knockoffs achieves tight
type-I error control and markedly enhanced power when compared with
alternative variable selection algorithms, including `vanilla'
knockoffs.  We establish theoretical support for derandomized
knockoffs by proving per family error rate (PFER) control and $k$
family-wise error rate ($k$-FWER) control.

Besides methodological developments, a fair fraction of this paper is
concerned with applying our ideas to genome-wide association studies
(GWAS) in Section \ref{sec:application}. We make two contributions.
\begin{itemize}
\item In previous applications of knockoffs to GWAS, the base
  procedure is typically applied multiple times with different random
  seeds. While each run comes with a type-I error guarantee, the
  authors often report genetic variants together with their selection
  frequency to identify variants which are consistently discovered,
  see \citet{ren2020knockoffs} for some examples. One issue is that we
  would not know how to interpret a `meta-set' of variants whose
  selection frequency is above a given threshold. By this, we mean
  that we would not be able to give this meta-set type-I error
  guarantees. The methods from this paper offer a remedy.

\item We design a general and scientifically sound workflow for
  multi-stage GWAS. Suppose we have a family of SNPs
  $X_1, \ldots, X_p$ and are interested in determining whether the
  distribution of a phenotype $Y$ {\em conditional} on
  $X_1, \ldots, X_p$ depends on $X_j$ or not; that is, we want to know
  whether $Y$ depends on $X_j$ controlling for all the other variables
  $X_{-j}$.  We will show how to achieve this in a multi-stage
  approach, where one can use a first study to determine a set of
  candidate SNPs and a second study for confirmatory analysis.%  As we
\end{itemize}

\section{A framework for derandomizing knockoffs}
\label{sec:methodology}
\paragraph*{Knockoffs} To set the stage for derandomized knockoffs,
imagine we are given a response variable $Y$ and potential explanatory
variables $X = (X_1,\ldots,X_p)$. We would like to identify those
variables that truly influence the response; that is, we would like to
discover those $X_j$'s on which the distribution
$Y \mid X_1, \ldots, X_p$ depends.  Formally, a variable $X_j$ is said
to be \emph{null} if the response $Y$ is independent of $X_j$ given
all other variables; i.e.,
\begin{align}\label{eq:hypothesis}
	Y~ \indep~ X_j\mid X_{-j} 
\end{align}
(throughout, $X_{-j}$ is a shorthand for all $p$ features except the
$j$-th). Our goal is of course to test each of the $p$ nonparametric
hypotheses \eqref{eq:hypothesis}.

In this setting, the key idea underlying knockoffs is to generate
`fake' covariates $\tX = (\tX_1,\ldots,\tX_p)$ whose distribution
roughly matches that of the true covariates, except that knockoffs are
designed to be conditionally independent of the response, and hence
should never be selected by a feature selection procedure.  Assemble
the covariates in an $n \times p$ matrix $\bmx$ and the responses in
an $n \times 1$ vector $\bmY$. Then we say that the new set of
variables $\bmtx \in \R^{n \times p}$ is a knockoff copy of
$\bmx$ if the following two properties hold: first,
\begin{equation}
  \label{eq:exch}
  \bmx_{j},  \bmtx_j \mid \bmx_{-j}, \bmtx_{-j} \, \eqd \, \bmtx_{j},
  \bmx_j \mid \bmx_{-j}, \bmtx_{-j}. 
\end{equation}
This says that by looking at $\bmx$ and $\bmtx$ we cannot tell whether
the $j$th column is a true variable or a knockoff. (The point is that
if $\bmx_j$ is non null, then we can tell by looking at $\bmY$.) The
second property is that $\bmY~\indep~ \bmtx \mid \bmx$. This says that
knockoffs provide no further information about the response (knockoffs
are constructed without looking at $\bmY$).

To perform variable selection, the researcher applies her favorite
{\em feature importance statistic} to the augmented data set
$(\bmx,\bmtx,\bmY)$ and scores each of the original and knockoff
variables. For example, she can score each variable by recording the
magnitudes of the Lasso coefficients for a value of the regularization
parameter chosen by cross-validation. The scores are then combined to
produce a test statistic for each feature.  This can be as simple as
the difference between the feature importance statistics, e.g.~the
difference between the magnitude of the Lasso coefficient of the
original feature and that of its knockoff. In the sequel, we refer to
this test statistic as the {\em Lasso coefficient difference} (LCD,
\citealt{candes2018panning}).  Finally the test statistics are passed
through the knockoff filter (e.g.~SeqStep,
\citealt{barber2015controlling}) and a selection set $\hat{\calS}$ is
generated.

\paragraph*{Derandomized knockoffs} With these preliminaries, our
derandomized procedure to stabilize the selection set over different
runs is as follows:
\begin{itemize}
\item Construct $M$ conditionally independent knockoff copies
  $\bmtx^1,\ldots,\bmtx^M \in \real^{n\times p}$.
  
\item For each $m \in [M] := \{1, \ldots, M\}$, apply a \emph{base
    procedure} to produce a rejection set $\hat{\calS}^m$.

\item For each feature $j$, compute its \emph{selection frequency}
  via 
\begin{align}
\label{EqnPi}
\Pi_j \defn \dfrac{1}{M}\sum^M_{m=1}\ind \{j\in \hat{\calS}^m\}.
\end{align}
\item Lastly, given a threshold $\eta > 0$, return the final selection
  set
\begin{align}
\label{EqnSetSDK}
\hat{\calS} \,\defn\, \{j\in [p]:\Pi_j\geq \eta\}.
\end{align}
\end{itemize}
The above derandomized knockoffs procedure is summarized in
Algorithm~\ref{algo.skn}. Readers will recognize that \eqref{EqnPi}
and \eqref{EqnSetSDK} are borrowed from stability selection (please
see below for a detailed comparison).  The parameter $\eta$ controls
how many times a variable needs to be selected to be present in the
final selection set. The larger $\eta$, the fewer variables will
ultimately be selected.  Unless otherwise specified, we fix $\eta$ to
be $0.5$ throughout the paper for simplicity. 

\begin{algorithm}[t]
  \DontPrintSemicolon  
  \SetAlgoLined
  \BlankLine
  \caption{Derandomized knockoffs procedure \label{algo.skn}}
  \textbf{Input:} Covariate matrix $\bmx\in \R^{n\times p}$; response variables $\bmY \in \R^{n}$; number of realizations $M$; a base procedure;
 % with guarantee~\eqref{eqn:base}; 
  selection threshold $\eta$.\;
  %\textbf{Initialization:} $k\leftarrow 0$; $\widehat{\fdr}$ is either $\widehat{\fdr}_0$ or $\widehat{\fdr}_+$.\;
  1. \For{$m=1,\ldots,M$}{
    i.  Generate a knockoff copy $\bmtx^m$. \;
      ii. Run the base procedure with  $\bmtx^m$ as knockoffs and obtain the selection set $\hat{\calS}^m$.\;
  }
  2. Calculate the selection probability
  \begin{equation*}
    \Pi_j = \dfrac{1}{M}\sum^M_{m=1}\ind \{j\in \hat{\calS}^m\}.
  \end{equation*}
  \textbf{Output}: selection set $\hat{\calS} \defn \{j\in[p]: \Pi_j\geq \eta\}$.
\end{algorithm}

At each iteration, the statistician is allowed to use a different
knockoffs generating distribution as well as a different test
statistic. That said, consider the scenario in which each copy
$\bmtx^m$ is identically distributed and that the same test statistics
are used (e.g.~LCD). Then the law of large numbers implies that each
$\Pi_j$ converges to \mbox{$\p(j\in \hat{\calS}^1 \mid \bmx,\bmY)$} as
$M$ increases to infinity. We thus see that in the limit of an
infinite number of knockoff copies, the procedure is fully
derandomized since the outcome is determined by $\bmx$ and $\bmY$.

\paragraph*{Reduced variability} When working on a specific data set
or application, researchers are typically interested in the fraction
of false positives (FDP) and/or the number $V$ of false
positives. Even in the case where one employs a procedure controlling
the false discovery rate (this is the expected value of the FDP) or
the PFER (this is the expected value of $V$), one would always prefer
a method which has lower variability in FDP and/or $V$ so that FDP and
$V$ are close to their expectations. The reason is that on any given
data set, we would like to be sure that the fraction and/or number of
false positives are not too high. The variability of these random
variables and others, such as whether a specific variable is selected
or not, originates from different sources. First, it comes from the
draw we got to see, i.e.~the sample $\bmx, \bmY$. In the case of
knockoffs, it also comes from the random nature of the algorithm
itself producing $\bmtx$. Clearly, the derandomization scheme removes
the second source of variability, which is a desirable trait.  
% As we shall confirm later on, a consequence is that key measures of
% performance (e.g.~$V$) have indeed reduced variability.

\paragraph*{Connections to prior literature}
Certainly, aggregating results from multiple runs of a random
procedure is not a new idea. A line of work develops methods to
represent the consensus over multiple runs of one algorithm, with the
aim of reducing its sensitivity to the initialization or the
randomness inherent in the algorithm, see
e.g.~\cite{bhattacharjee2001classification,monti2003consensus}.
Another line of prior work seeks to combine multiple different
learning algorithms to improve performance, which is often referred to
as \emph{ensemble learning}. A few examples would include
\cite{strehl2002cluster,rokach2010ensemble,polikar2012ensemble}.  Yet,
most of these methods are neither directly applicable to the knockoffs
framework nor come with a finite sample type-I error control.

\paragraph{Comparisons with stability selection} It is time to expand
on the similarities and differences with stability selection. To
facilitate this discussion, it is helpful to briefly motivate and
describe the stability selection algorithm. We are given data
$(\bmx, \bmY)$ and would like to find important variables by reporting
those variables which have a nonzero Lasso coefficient. How confident
are we that our selections will replicate in the sense that we would
get a similar result on an independent data set? How do we make sure
that the nonzero coefficients are not merely due to chance?  Stability
selection addresses this issue by sampling repeatedly
$\lfloor n/2 \rfloor$ observations without replacement from the
original data set as if they were independent draws from the
population.\footnote{$\floor{x}$ denotes the largest integer that is
  not greater than $x$, and $\ceil{x}$ denotes the smallest integer
  that is not less that $x$.}  Important variables are then determined
based on their selection frequencies just as in
\eqref{EqnPi}--\eqref{EqnSetSDK}. Despite evident similarities, there
are major differences with derandomized knockoffs.
\begin{itemize}
\item First, stability selection introduces randomness via data splits
  and it is precisely this extra source of randomness which permits
  inference. In contrast, vanilla knockoffs natively provides valid
  inference and the aim of the derandomized procedure is simply to
  remove the randomness of the knockoffs.
\item Second, while stability selection benefits from the bootstrap
  aggregating procedure preventing overfitting, it only operates on a
  random subset of the data at each step. This difference explains why
  our algorithm is particularly useful in the case where the samples
  size is comparable to the number of features we are assaying, since
  subsampling inevitably leads to a loss of power. We refer the reader
  to the numerical comparisons from Section~\ref{sec:sim} that
  illustrate this point.
\item Third, the theoretical guarantees for stability selection come
  with very strong assumptions---such as the exchangeability
  assumption of null statistics---which are nearly impossible to
  justify in practice. In contrast, our theoretical results hold under
  fairly mild
  assumptions. % Finally, stability selection tends to be overly
  % conservative, which is not the case for derandomized knockoffs. 
\end{itemize}

\paragraph{A representative base procedure} While our aggregating
framework can be easily applied to a wide range of base procedures,
the current paper focuses on that proposed by
\citet{janson2016familywise}---referred to as $v$-knockoffs
throughout---which has been shown to control the PFER. Informally,
suppose we wish to make at most $v$ false discoveries over the long
run.  Then this base procedure (1) sorts the features based on the
absolute value of their test statistic $|W_j|$;\footnote{In the case
  of the LCD, $W_j = |\hat{\beta}_j| - |\tilde{\beta}_j|$, where
  $\hat{\beta}_j$ (resp.~$\hat{\beta}_j$) is the lasso coefficient
  estimate for the variable $X_j$ (resp.~$\tilde{X}_j$) when
  regressing $\bmY$ on $\bmx$ and $\widetilde{\bmx}$ jointly.} (2) it
then examines the ordered features starting from the largest $|W_j|$
and selects those examined features with $W_j > 0$; (3) the procedure
stops the first time it sees $v$ features with negative values of
$W_j$.  For more details about $v$-knockoffs, we refer the readers to
Section~\ref{Sec:details} from the Appendix.

\section{Theoretical guarantees: controlling the PFER}
\label{sec:pfer}

We now tune derandomized knockoffs parameters to control the per family
error rate.  Formally, let $\Null \subset [p]:=\{1,\cdots,p\}$ denote
the set of null variables for which \eqref{eq:hypothesis} is true, and
consider a selection procedure producing a set of discoveries
$\hat{\calS} \subset [p]$. Letting $V$ be the number of false
discoveries defined as
\begin{align}
  V \defn \#\{j: j \in \Null \cap \hat{\calS} \}, 
\end{align}
the PFER is simply the expected number of false discoveries,
$\text{PFER} = \Exs [V]$ (see, e.g.~\cite{dudoit2007multiple}).
% We
% give two results: \blue{one holding under a mild assumption that can be tightened in various cases}, and another
% which always holds. We begin with the former.

\begin{theorem}
\label{cor:monotone}
Consider derandomized knockoffs (Algorithm \ref{algo.skn}) with a base
procedure obeying $\operatorname{PFER}\le v$ (e.g.~$v$-knockoffs).
If the condition 
\begin{align}
  \label{eq:ratio_bnd}
  \p(\Pi_j \ge \eta)~\le~ \gamma \E[\Pi_j],
\end{align}
holds for every $j\in \calH_0$, then the PFER can be controlled as
\begin{align}
\label{eq:firstbound}
\E [V] ~\le~ \gamma v.
\end{align}
In particular, Markov's inequality ensures that we always have 
\begin{equation}
  \label{eq:markov}
\E [V] ~\le~ v/\eta.
\end{equation}
\end{theorem}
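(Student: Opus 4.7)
The plan is to reduce everything to a bookkeeping identity using linearity of expectation and then invoke the two hypotheses (the ratio bound on the null selection frequencies, and the PFER guarantee on the base procedure) separately.

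First, I would rewrite the number of false discoveries of the aggregated procedure as a sum of indicators over the null set,
\begin{equation*}
V = \sum_{j \in \Null} \ind\{\Pi_j \ge \eta\},
\end{equation*}
using the definition $\hat{\calS} = \{j : \Pi_j \ge \eta\}$. Taking expectations and applying the assumed inequality $\p(\Pi_j \ge \eta) \le \gamma \E[\Pi_j]$ to each null index gives
\begin{equation*}
\E[V] \;=\; \sum_{j \in \Null} \p(\Pi_j \ge \eta) \;\le\; \gamma \sum_{j \in \Null} \E[\Pi_j].
\end{equation*}

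Next, I would unpack $\Pi_j$ via its definition~\eqref{EqnPi} and swap the order of summation, rewriting
\begin{equation*}
\sum_{j \in \Null} \E[\Pi_j] \;=\; \frac{1}{M} \sum_{m=1}^M \sum_{j \in \Null} \p(j \in \hat{\calS}^m) \;=\; \frac{1}{M} \sum_{m=1}^M \E[V^m],
\end{equation*}
where $V^m$ denotes the number of false discoveries produced by the base procedure in the $m$-th run. The PFER hypothesis on the base procedure says that each $\E[V^m] \le v$, so the average is also at most $v$, yielding $\E[V] \le \gamma v$, which is~\eqref{eq:firstbound}.

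For the unconditional bound~\eqref{eq:markov}, the observation is that $\Pi_j \in [0,1]$ is a nonnegative random variable, so Markov's inequality gives $\p(\Pi_j \ge \eta) \le \E[\Pi_j]/\eta$ for every $j$; that is, the assumption \eqref{eq:ratio_bnd} always holds with $\gamma = 1/\eta$. Plugging this choice of $\gamma$ into the first bound yields $\E[V] \le v/\eta$. There is no real obstacle in this proof: the only (mild) subtlety is recognizing that the hypothesis on the base procedure is a \emph{per-realization} PFER bound, which applies identically to every $m \in [M]$ because the $M$ knockoff copies are constructed and processed in the same way; once this is noted, the bound is just linearity of expectation plus Markov.
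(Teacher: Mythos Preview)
Your proof is correct and follows essentially the same approach as the paper: write $V$ as a sum of indicators over nulls, apply the ratio bound \eqref{eq:ratio_bnd} termwise, and then use the PFER guarantee of the base procedure together with Markov's inequality for the unconditional bound. The only cosmetic difference is that you explicitly average $\E[V^m]$ over the $M$ runs, whereas the paper collapses $\sum_{j\in\Null}\E[\Pi_j]$ directly to $\E[V_1]$; your version is arguably slightly cleaner since it does not implicitly assume the runs are identically distributed.
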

To prove Theorem \ref{cor:monotone}, observe that
\begin{align}
\label{eq:proof_pfer}
\E[V] = \E\left[\sum_{j\in\calH_0} \Indc\{\Pi_j\ge \eta\} \right] 
= \sum_{j\in \calH_0}\p\left(\Pi_j\ge \eta\right) {\le} \sum_{j\in \calH_0} \gamma \E[\Pi_j] 
= \gamma\E[V_1]\le \gamma v,
\end{align}
where $V_{1}$ denotes the number of false discoveries in
$\hat{\calS}^1$; the first inequality follows from
\eqref{eq:ratio_bnd} and the second from the property of the base
procedure.

Returning to the comparison with stability selection, we note that
PFER control holds regardless of the choice of $M$ and without
any assumption on the exchangeability of the selected variables.
% \footnote{The
% original version of the stability selection framework
% \citep{meinshausen2010stability} operates by aggregating results
% over ${n}\choose{\lfloor n/2 \rfloor}$ possible subsets of size
% $\lfloor n/2 \rfloor$; this number is exponential in $n$.}
% Also, PFER control does not require a very strong exchangeability assumption on
% the selection of non-null variables.

\subsection{Guarantees under mild assumptions}
\label{sec:asst_theory}
Set $\eta = 1/2$. In this case, we have seen that \eqref{eq:ratio_bnd}
holds with $\gamma=2$. This is however too conservative in all the
cases we have ever encountered. In fact, we will be surprised to ever
see an example where the ratio $\p(\Pi_j \ge 1/2)/\E[\Pi_j]$ exceeds
one. Consider for instance the setting from Figure
\ref{fig:pfer_amp_gaussian_within}. In this case, Figure
\ref{fig:pfer_amp_gaussian_ratio} plots the realized ratios
$\p(\Pi_j\ge 1/2)/\E[\Pi_j]$ for each null variable, and we can
observe that all the ratios are below one.

Turning to formal statements, Proposition \ref{prop:asst_ratio} below
examines assumptions under which pairs $(\eta,\gamma)$ obey condition
\eqref{eq:ratio_bnd}.  The idea is very similar to that of
\cite{shah2013variable}, where a general bound is first established
and then followed by a sharpened version holding under constraints on
the shape of the distribution $\Pi_j$.

\begin{definition}
\label{def:monotonicity}
Let $M$ be a positive integer and $X$ a random variable supported on
$\{0,1/M,\ldots,1\}$. The probability mass function (pmf) of $X$ is
said to be monotonically non-increasing if for any
{$m_1 \le m_2\in \{0,1,\ldots,M\}$},
\[
  \p(X = {m_1}/{M}) ~\ge~ \p(X = {m_2}/{M}).
\]
\end{definition}

\begin{figure}
	\centering
	\includegraphics[width = .95\textwidth]{./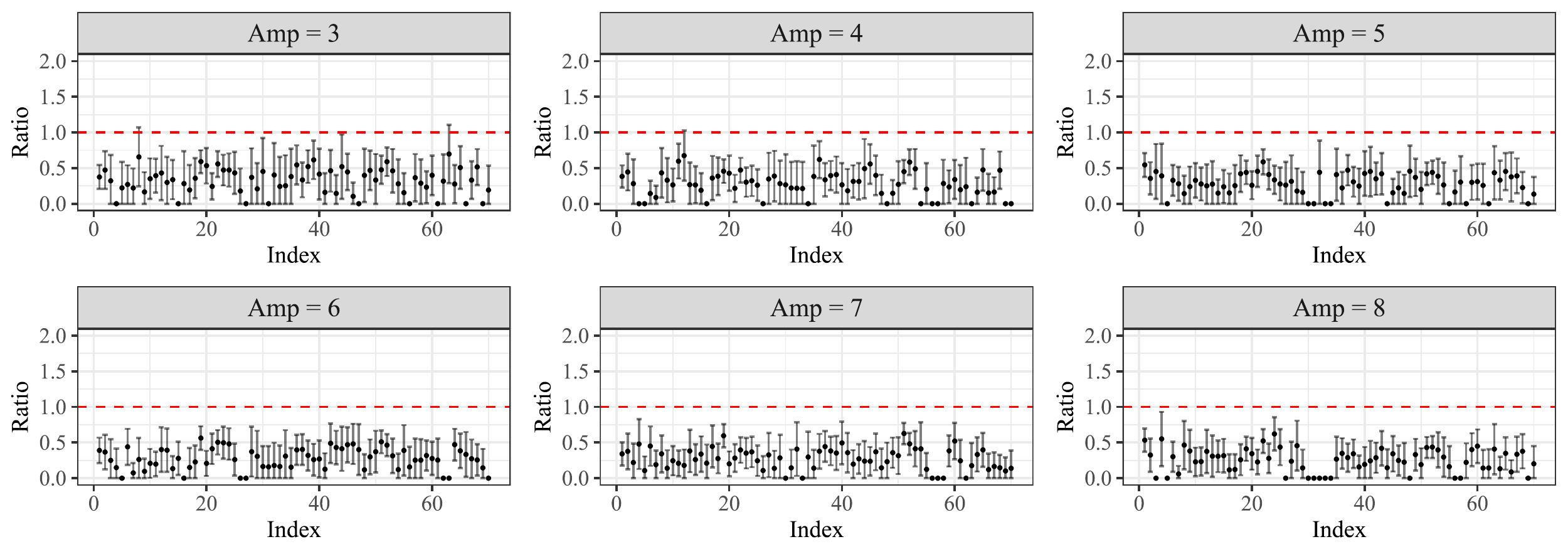}
        \caption{Realized ratios between $\p(\Pi_j\ge 1/2)$ and
          $\E[\Pi_j]$, along with $95\%$ confidence intervals,
          estimated from $1,000$ repetitions. The setting is that from
          Figure \ref{fig:pfer_amp_gaussian_within}. We caution the
          reader when interpreting the reported values of the ratios
          since there is a multiple selection issue at play here. The
          largest observed ratios indicate that the means are larger
          but also that the difference between the empirical means and
          the true means may also be large. To drive this point home,
          suppose we are to look at $m$ random variables all with
          means less than 0.8, say. Suppose each empirical mean has a
          standard deviation equal to 0.1. Then we would expect to see
          a few empirical means above 1 even though all theoretical
          means are below 0.8.}
	\label{fig:pfer_amp_gaussian_ratio}
\end{figure}

\begin{proposition}
  \label{prop:asst_ratio}
  \begin{enumerate}[label=(\alph*)]
    \item Assume the pmf of $\Pi_j$ is monotonically non-increasing
      for each $j\in\calH_0$, then condition \eqref{eq:ratio_bnd}
      holds with $\gamma$ being the optimal value of the following
      linear program (LP):
  \begin{equation}
    \label{eq:cvx_opt}
    \begin{array}{ll}
      \text{\em maximize} & \quad \sum_{m\ge M\eta} \,\, y_m \\
      \text{\em subject to} & \quad y_m \ge 0,\\
                      & \quad y_{m-1}\ge y_m,~m\in[M],\\
                      & \quad \sum^M_{m=0}y_m\, m/M = 1.
                      \end{array} 
  \end{equation}
\item Assume that for any $j\in \calH_0$, 
	\begin{align}
    \label{eq:asst_partialsum}
    \int^\eta_0\p (\Pi_j \in [\eta-u,\eta))\dd u \geq \int^{\eta-1/M}_0 \p (\Pi_j\in [\eta,\eta+u))\dd u,
	\end{align}
  then condition \eqref{eq:ratio_bnd} holds with $\gamma$ being the  optimal value of the following
      linear program:
      \begin{equation}
        \label{eq:opt_partialsum}
        \begin{array}{ll}
    \text{\em maximize} & \quad \sum_{m\ge M\eta} y_m\\
  \text{\em subject to} & \quad y_m\ge 0,~m\in \{0,1,2,\ldots,M\},\\
         & \quad \sum_{m=0}^{M} y_m\,\, {m}/{M}=1,\\
                    & \quad \sum_{m=0}^M y_m\ge 2,\\
          & \quad \sum^{\ceil{\eta M}-1}_{m=1}my_m \ge \sum^{\floor{2\eta
            M-1} - \ceil{\eta M}}_{m=0}\left(2\eta M -1  -\ceil{\eta
            M}-m\right) y_{\ceil{\eta M}+m}.
    \end{array}
  \end{equation}

\item As a special case of (b), assume that for any $j\in \calH_0$,
  the pmf of $\Pi_j$ is unimodal where the mode is less than or equal
  to $\eta$ and $\p(\Pi_j = 0) \ge \p(\Pi_j = \ceil{\eta M}/M)$. Then
  \eqref{eq:asst_partialsum} holds and, therefore,
  \eqref{eq:ratio_bnd} holds with $\gamma$ being the optimal value of
  \eqref{eq:opt_partialsum}.

\item Suppose there exists a constant $\beta \in [0,1]$ such
    that for any $j\in\calH_0$, the pmf of $\Pi_j$ satisfies
\begin{align}
\label{eq:asst_sharperdecay}
  \p(\Pi_j = m/M) \le \beta \cdot\p (\Pi_j = (m-1)/M),
\qquad\text{for }m\in[M].
\end{align}
Then condition \eqref{eq:ratio_bnd} holds with $\gamma$ being
the optimal value of the LP, 
\begin{equation}
\label{eq:cvx_opt_sharp}
  \begin{array}{ll}
    \text{\em maximize} & \quad \sum_{m\ge M\eta} \,\, y_m \\
    \text{\em subject to} & \quad y_m \ge 0,\\
                      & \quad \beta y_{m-1}\ge y_m,~m\in[M],\\
                      & \quad \sum^M_{m=0}y_m\, m/M = 1.
  \end{array} 
\end{equation}
\end{enumerate}
\end{proposition}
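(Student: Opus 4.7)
My plan for all four parts is to cast the bound as a linear programming (LP) relaxation over pmfs of $\Pi_j$. Write $p_m := \p(\Pi_j = m/M)$ for $m=0,\ldots,M$, so that $\p(\Pi_j\ge\eta) = \sum_{m\ge M\eta} p_m$ and $\E[\Pi_j] = \sum_m (m/M)p_m$. Setting $y_m := p_m/\E[\Pi_j]$ normalizes the denominator, turning the ratio $\p(\Pi_j\ge\eta)/\E[\Pi_j]$ into the linear objective $\sum_{m\ge M\eta} y_m$ subject to $y_m\ge 0$ and $\sum_m (m/M)y_m = 1$. The worst-case ratio is then the LP optimum, obtained by adjoining the structural constraint encoded by each part's hypothesis.

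Parts (a) and (d) are immediate: monotonicity $p_{m-1}\ge p_m$ in (a) transfers to $y_{m-1}\ge y_m$, reproducing LP \eqref{eq:cvx_opt}; the sharper decay $p_m\le \beta p_{m-1}$ in (d) transfers to $y_m\le \beta y_{m-1}$, reproducing LP \eqref{eq:cvx_opt_sharp}.

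For part (b), I would discretize \eqref{eq:asst_partialsum} via Fubini. A direct computation yields
\begin{align*}
\int_0^{\eta} \p\bigl(\Pi_j\in[\eta-u,\eta)\bigr)\,\dd u &= \E\!\left[\Pi_j\,\ind\{\Pi_j<\eta\}\right] = \frac{1}{M}\sum_{m=1}^{\ceil{\eta M}-1} m\, p_m,\\
\int_0^{\eta-1/M} \p\bigl(\Pi_j\in[\eta,\eta+u)\bigr)\,\dd u &= \E\!\left[(2\eta-1/M-\Pi_j)^+ \ind\{\Pi_j\ge\eta\}\right] \\
&= \frac{1}{M}\sum_{k=0}^{\floor{2\eta M-1}-\ceil{\eta M}} (2\eta M -1-\ceil{\eta M}-k)\, p_{\ceil{\eta M}+k}.
\end{align*}
Dividing by $\E[\Pi_j]$ and substituting $y_m$ yields the last inequality constraint of \eqref{eq:opt_partialsum}. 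The constraint $\sum_m y_m\ge 2$ encodes the background assumption $\E[\Pi_j]\le 1/2$ via $\sum_m y_m = 1/\E[\Pi_j]$, which is natural for null selection frequencies; together with nonnegativity and the normalization, this reproduces LP \eqref{eq:opt_partialsum}.

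Part (c) reduces to (b) by showing that its hypotheses imply \eqref{eq:asst_partialsum}. My plan is a term-by-term pairing: I match each RHS summand $(2\eta M-1-\ceil{\eta M}-k)p_{\ceil{\eta M}+k}$ with the LHS summand $(\ceil{\eta M}-1-k)p_{\ceil{\eta M}-1-k}$ (the pairing is well-defined because $\floor{2\eta M-1}-\ceil{\eta M}\le \ceil{\eta M}-1$). The weight on the LHS dominates since $\ceil{\eta M}-1-k \ge 2\eta M-1-\ceil{\eta M}-k$ (equivalently, $\ceil{\eta M}\ge\eta M$). For the pmf values, I split on where the mode $m^*\le \eta M$ lies: if $m^*\le \ceil{\eta M}-1-k$, the non-increasing right tail gives $p_{\ceil{\eta M}-1-k}\ge p_{\ceil{\eta M}+k}$ directly; if $m^*>\ceil{\eta M}-1-k$, then chaining the non-decreasing left tail, the hypothesis $p_0\ge p_{\ceil{\eta M}}$, and the non-increasing right tail on $[\ceil{\eta M},M]$ gives $p_{\ceil{\eta M}-1-k}\ge p_0\ge p_{\ceil{\eta M}}\ge p_{\ceil{\eta M}+k}$. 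The main obstacle is bookkeeping boundary indices when $\eta M$ or $2\eta M$ fails to be an integer, but the coefficient comparison is always slack enough (since coefficients on the RHS are at most the fractional residual $2\eta M - 1 - \floor{2\eta M - 1}\in[0,1)$ plus an integer strictly less than $\ceil{\eta M}$) to absorb the leftover terms.
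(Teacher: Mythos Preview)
Your proposal is correct and follows essentially the same approach as the paper: write $p_m=\p(\Pi_j=m/M)$, normalize by $\E[\Pi_j]$ to obtain $y_m$, and observe that the true $y$ is feasible for the stated LP, so the ratio is bounded by the LP optimum. The paper carries this out in detail only for (a), invoking the coin-flip property to get $\E[\Pi_j]\le 1/2$ (which you also correctly identify as the source of the constraint $\sum_m y_m\ge 2$ in (b)), and then disposes of (b), (c), (d) in one line each. Your Fubini computation for (b) and your term-by-term pairing argument for (c) are sound and considerably more explicit than what the paper provides; in particular, your two-case split on the location of the mode relative to $\lceil\eta M\rceil-1-k$, chained through the hypothesis $p_0\ge p_{\lceil\eta M\rceil}$, is exactly the right way to verify the assertion the paper leaves unproved.
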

For illustration, Figure \ref{fig:ratio_M} plots the optimal value of
\eqref{eq:cvx_opt} versus $M$ with $\eta=0.501$, $0.751$ and $1$,
respectively. Taking $M=31$ and $\eta=0.501$ for example, we see that
\eqref{eq:ratio_bnd} holds with $\gamma = 1$. Also, and this is
important for later, \eqref{eq:ratio_bnd} holds with $\gamma = 1$ for
$M = 31$, $\eta = 1/2$.

The proof of Proposition~\ref{prop:asst_ratio} is deferred to Appendix
\ref{PfThmFB} and we pause here to parse the claims. The
monotonicity assumption in part (a) states that the chance that a null
variable gets selected 50 times is at most that it gets selected 49
times, which is at most that it gets selected 48 times and so
on. (When we say chance, recall that the probability is taken over
$\bmx,\bmY,\bmtx^1, \ldots, \bmtx^M$.) In part (b),
\eqref{eq:asst_partialsum} is a relaxed version of the monotonicity
condition. To be sure, if the pmf of $\Pi_j$ is monotonically
non-increasing, then \eqref{eq:asst_partialsum} holds.  Setting
$F_-(x) := \p(X<x)$, condition \eqref{eq:asst_partialsum} says this:
the area between the two curves $y=F_-(x)$ and $y=F_-(\eta)$ (the
latter does not vary with $x$) over the interval $[0, \eta]$---the
blue area in Figure \ref{fig:ex_dist}---is larger than the area
between the same two curved curves over $[\eta, 2\eta-1/M]$---the red
area in Figure \ref{fig:ex_dist}.
%\ejc{Shouldn't this be swapped?} 
%\ejc{I don't understand: there must be correspondance between the
%  $x$'s: if $x = \eta$ in the former case and $x =
%  2\eta - 1/M$ in the latter, then what you claim is just wrong.}
%  \mycomment{I do not intend to compare the function values of $F_-(\eta)-F_-(x)$
%    and $F_-(x)-F_-(\eta)$; instead,
%    I am comparing the area between the two curves $F_-(x)$ and $F_-(\eta)$
%for $x\in[0,\eta ]$ and that for $x\in[\eta,2\eta-1/M]$, that is, the integral of $F_-(\eta)-F_-(x)$
%from $0$ to $\eta$ and the integral of $F_-(x-)-F_-(\eta)$ from $\eta$ to $2\eta-1/M$.}) 
In other words, the pmf of $\Pi_j$ is skewed towards the left as
illustrated in Figure \ref{fig:ex_dist}(b). % Under the conditions in part (c) states a
%   special case of part (b): if the pmf of $\Pi_j$ is unimodal with its
%   mode less than or equal to $\eta$ and
%   $\p(\Pi_j=0)\ge\p(\Pi_j=\ceil{\eta M}/M)$, 
% condition
%   \eqref{eq:asst_partialsum} holds, and hence we obtain the bound in
%   part (b). 
Part (d) shows that we can sharpen the bound (as illustrated in Figure
\ref{fig:ratio_M_gamma_1}) if the pmf of $\Pi_j$ decays at a faster
rate---$\p(\Pi_j = m/M)\le \beta \p(\Pi_j=(m-1)/M)$---where the
smaller $\beta$, the faster the decay. In this paper, we only consider
$\beta = 1$ (the weakest possible condition), which just says that the
pmf is monotonically non-increasing.

\begin{figure}[ht]
  \centering
  \begin{minipage}{0.49\textwidth}
    \centering
  \begin{tikzpicture}[scale=0.8]
    \draw[fill = blue!10,draw = blue!0](0,0.5) rectangle (0.5,2.3);
    \draw[fill = blue!10,draw = blue!0](0.5,1.1) rectangle (1,2.3);
    \draw[fill = blue!10,draw = blue!0](1,1.6) rectangle (1.5,2.3);
    \draw[fill = blue!10,draw = blue!0](1.5,2) rectangle (2,2.3);
    \draw[fill = red!10,draw = blue!0](2.5,2.6) rectangle (3,2.3);
    \draw[fill = red!10,draw = blue!0](3,2.8) rectangle (3.5,2.3);
    \draw[fill = red!10,draw = blue!0](3.5,3) rectangle (4.5,2.3);
    \draw[thick,color=gray,step=.5cm,dashed] (0,0) grid (5,3);
    \draw[->] (-1,0) -- (5.5,0)
    node[below right] {$x$};
    \draw[->] (0,-1) -- (0,3.5)
    node[left] {$F_-(x)$};
    %\draw plot[id=x] function{x*x};
    \node at (2.5,-0.5) {$\eta$};
    \node[scale=.7] at (4.5,-0.5) {$2\eta-\frac{1}{M}$};
    \draw[thick,color=blue](0,.5) -- (0.5,.5);
    \draw[thick,color=blue](0.5,1.1) -- (1,1.1);
    \draw[thick,color=blue](1,1.6) -- (1.5,1.6);
    \draw[thick,color=blue](1.5,2) -- (2,2);
    \draw[thick,color=blue](2,2.3) -- (2.5,2.3);
    \draw[thick,color=red](2.5,2.6) -- (3,2.6);
    \draw[thick,color=red](3,2.8) -- (3.5,2.8);
    \draw[thick,color=red](3.5,3) -- (4.5,3);
    \draw[thick,color=red!60,densely dashdotted](0,2.3) -- (5,2.3);
    \draw[thick,color=red!60,densely dashdotted](2.5,0) -- (2.5,3);
  \end{tikzpicture}\\
  (a) $F_-(x) = \mathbb{P}(X<x)$.
  \label{fig:cdf}
  \end{minipage}
  \hfill
  \begin{minipage}{0.49\textwidth}
    \centering
  \begin{tikzpicture}[scale=0.8]
    \draw[fill = blue!10,draw = blue!50](0,0) rectangle (0.5,2);
    \draw[fill = blue!10,draw = blue!50](0.5,0) rectangle (1,2.5);
    \draw[fill = blue!10,draw = blue!50](1,0) rectangle (1.5,2);
    \draw[fill = blue!10,draw = blue!50](1.5,0) rectangle (2,1.5);
    \draw[fill = blue!10,draw = blue!50](2,0) rectangle (2.5,1);
    \draw[fill = blue!10,draw = blue!50](2.5,0) rectangle (3,1);
    \draw[fill = blue!10,draw = blue!50](3,0) rectangle (3.5,0.5);
    \draw[fill = blue!10,draw = blue!50](3.5,0) rectangle (4,0.5);
    %\draw[thick,color=gray,step=.5cm,dashed] (0,0) grid (5,3);
    \draw[->] (-1,0) -- (6,0)
    node[below right] {$x$};
    \node at (0.25,-0.5) {$0$};
    \node at (0.75,-0.5) {$\frac{1}{M}$};
    \node at (2.75,-0.5) {$\eta$};
    \draw[->] (0,-1) -- (0,3.5)
    node[left] {$f(x)$};
    %\draw plot[id=x] function{x*x};
    \draw[thick,color=blue](0,2) -- (0.5,2);
    \draw[thick,color=blue](0.5,2.5) -- (1,2.5);
    \draw[thick,color=blue](1,2) -- (1.5,2);
    \draw[thick,color=blue](1.5,1.5) -- (2,1.5);
    \draw[thick,color=blue](2,1) -- (2.5,1);
    \draw[thick,color=blue](2.5,1) -- (3,1);
    \draw[thick,color=blue](3,.5) -- (3.5,0.5);
    \draw[thick,color=blue](3.5,.5) -- (4,.5);
    \draw[thick,color=blue](4,0) -- (4.5,0);
    \draw[thick,color=blue](4.5,0) -- (5,0);
    \draw[thick,color=blue](5,0) -- (5.5,0);
    \draw[thick,color=red!60,densely dashdotted](3,0) -- (3,3);
  \end{tikzpicture}\\
  (b) Probability mass function.
  \label{fig:pmf}
  \end{minipage}
  \caption{An example of a distribution obeying
    \eqref{eq:asst_partialsum}.}
  \label{fig:ex_dist}
\end{figure}
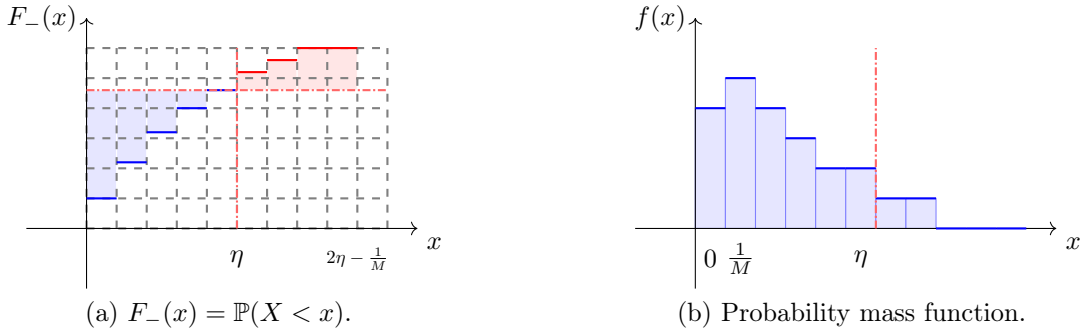

\begin{figure}[ht]
\centering
\begin{subfigure}{0.3\textwidth}
  \includegraphics[width=\textwidth]{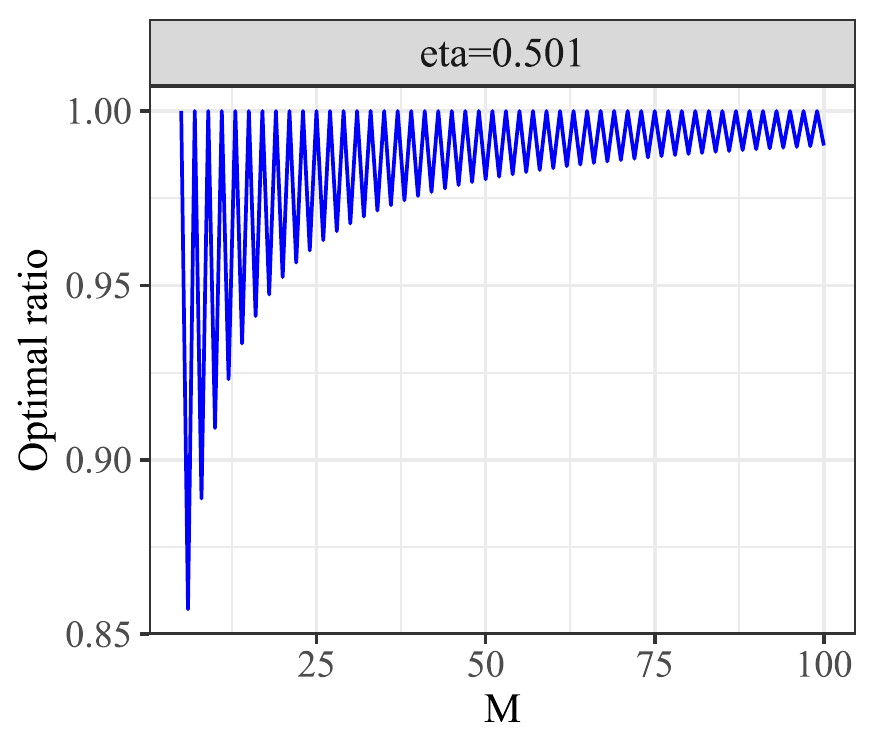}
  \caption{$\eta=0.501$}
\end{subfigure}
\hfill
\begin{subfigure}{0.3\textwidth}
  \includegraphics[width=\textwidth]{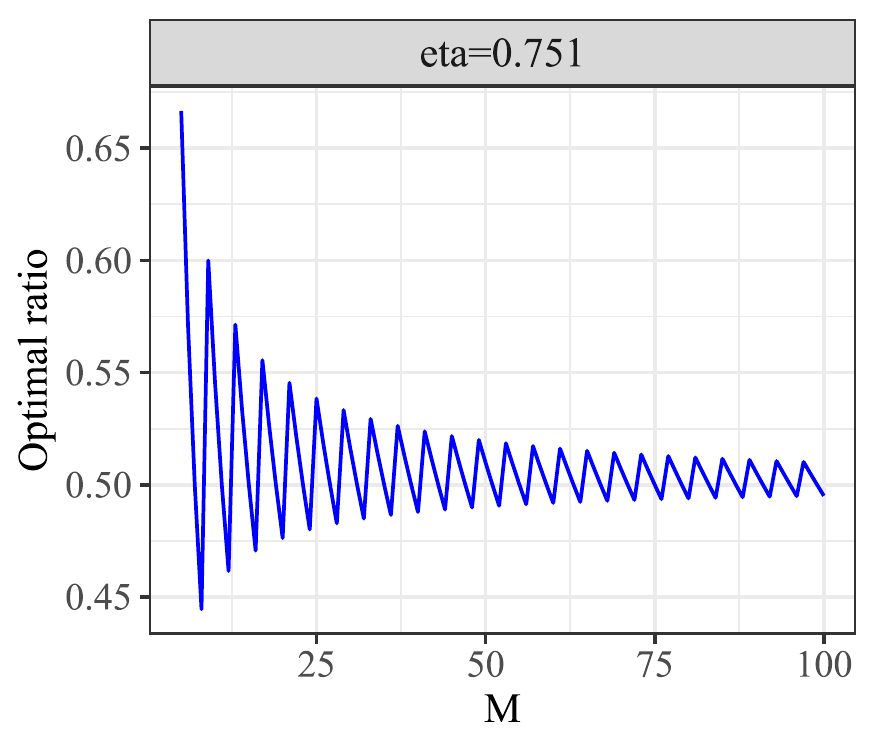}
  \caption{$\eta=0.751$}
\end{subfigure}
\hfill
\begin{subfigure}{0.3\textwidth}
  \includegraphics[width=\textwidth]{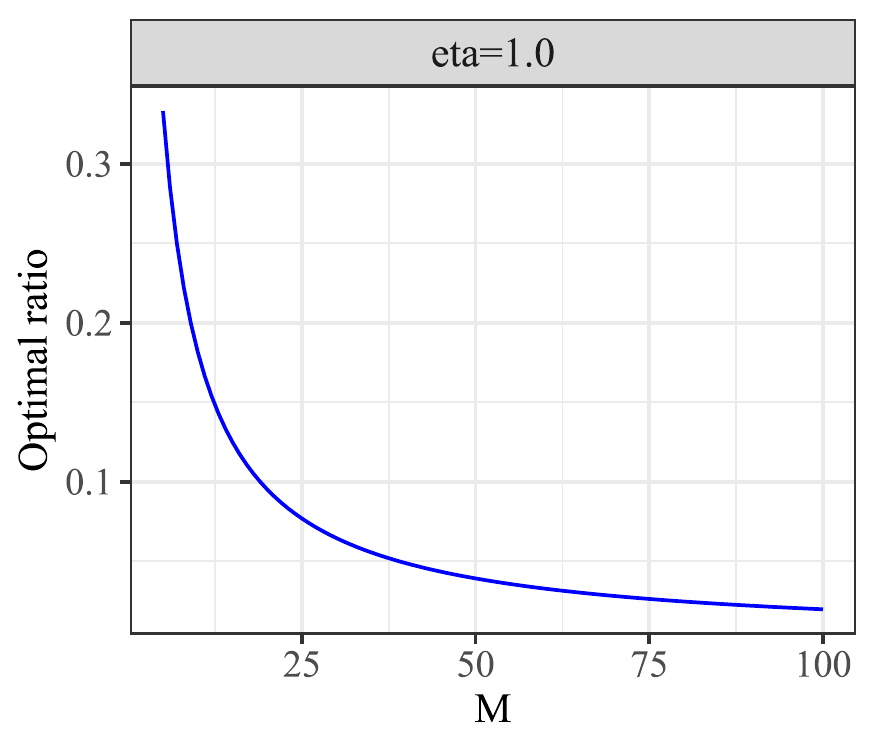}
  \caption{$\eta=1$}
\end{subfigure}
\caption{Optimal value of \eqref{eq:cvx_opt} as a function of the
number $M$ of replicates.}
\label{fig:ratio_M}
\end{figure}

\begin{figure}[ht]
\centering
\begin{subfigure}{0.3\textwidth}
  \includegraphics[width=\textwidth]{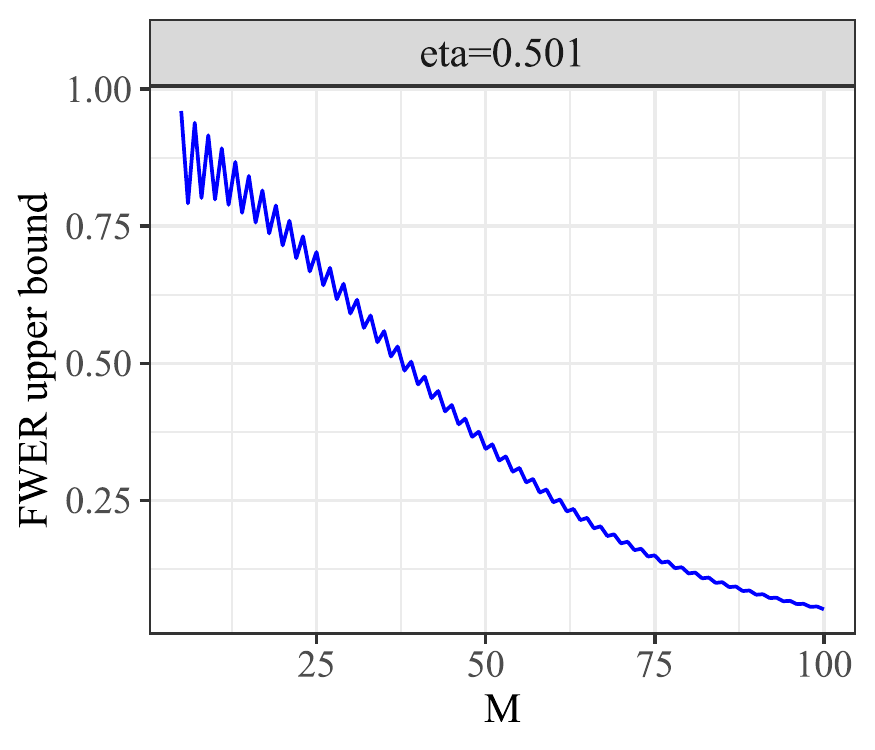}
  \caption{$\eta=0.501$}
\end{subfigure}
\hfill
\begin{subfigure}{0.3\textwidth}
  \includegraphics[width=\textwidth]{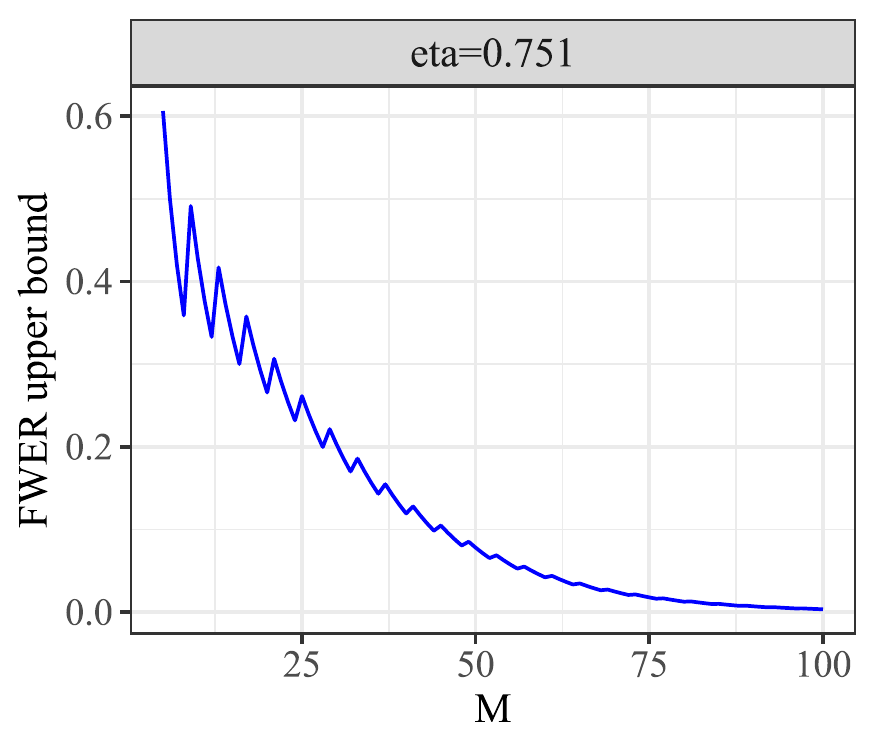}
  \caption{$\eta=0.751$}
\end{subfigure}
\hfill
\begin{subfigure}{0.3\textwidth}
  \includegraphics[width=\textwidth]{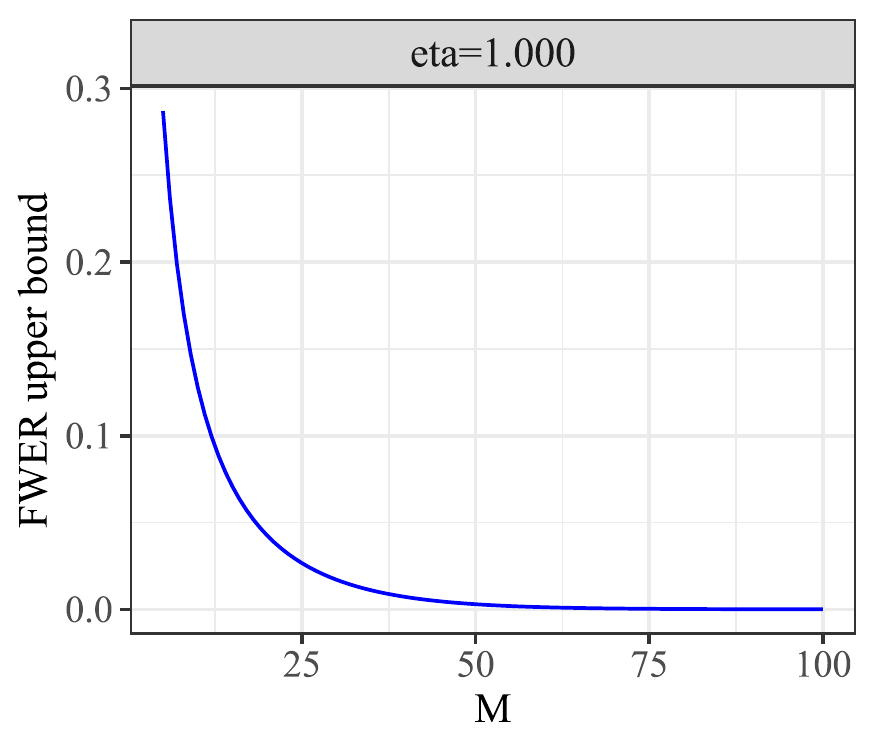}
  \caption{$\eta=1$}
\end{subfigure}
\caption{Optimal value of \eqref{eq:cvx_opt_sharp}, with $\beta=0.9$, as a function of the
number $M$ of replicates.}
\label{fig:ratio_M_gamma_1}
\end{figure}

\subsection{Numerical evaluation of the ``derandomization'' effect}
\label{sec:simulation_pfer}
To illustrate the effect of ``derandomization'', we compare
\algoname~with vanilla knockoffs in a small-scale and a large-scale
simulation study.  {Our method is implemented in the \textsf{R
    derandomKnock} package, available at
  \url{https://github.com/zhimeir/derandomKnock}; code to reproduce
  all the numerical results from this paper can be found at
  \url{https://github.com/zhimeir/derandomized_knockoffs_paper}}.  We
evaluate the difference in the type-I error, the power and the
stability of the selection set.  Throughout this section, we set
$\eta=0.5$ and $M=31$, which according to Proposition
\ref{prop:asst_ratio} yields $\E [V] \le v$ (recall that $v$ is the
nominal level of the base procedure) under the monotonicity
assumption. Here, $Y$ is generated from a linear model conditional on
the feature vector $X$, namely,
\begin{align}
Y\mid X_1,\ldots,X_p \sim \N(\beta_1 X_1+\ldots+\beta_p X_p,1).
\end{align}
As for the covariates, $X$ is drawn from a multivariate Gaussian
distribution with parameters to be specified below. We remark
that under this model, testing conditional independence is the same as testing
whether $\beta_j = 0$.
%$\calN(0, \bm \Sigma)$ where $\bm \Sigma$ comes from the standard autoregressive model AR($\rho$) where $\Sigma_{ij} = \rho^{|i-j|}$ and $\rho$ is the correlation coefficient to be specified in each experiment.
% In our experiment, we generate a sparse regression vector $\bm \beta$, hold it fixed and perform $200$ trials, where the averaged results over these trials are reported. 
% \ejc{You repeat the wrod experiment twice in just a few words} \yutingcomment{corrected}

%In the small-scale experiment, the simulated dataset is of size $n=200$ and $p=100$. The correlation coefficient for $X$ is set to be $\rho =0.2$. 
%The size of the support of $\bm \beta$ is $30$ and the nonzero $\beta_j$ takes value $\pm A/\sqrt{n}$, where the signal amplitude $A$ ranges in $\{3,4,\ldots,8\}$ and the sign is determined by i.i.d.~coin flips. The location of the signals is randomly chosen from $[p]$. The target PFER level is $v=1$ for both procedures. 

\begin{figure}[ht]
\centering
\includegraphics[width = \textwidth]{./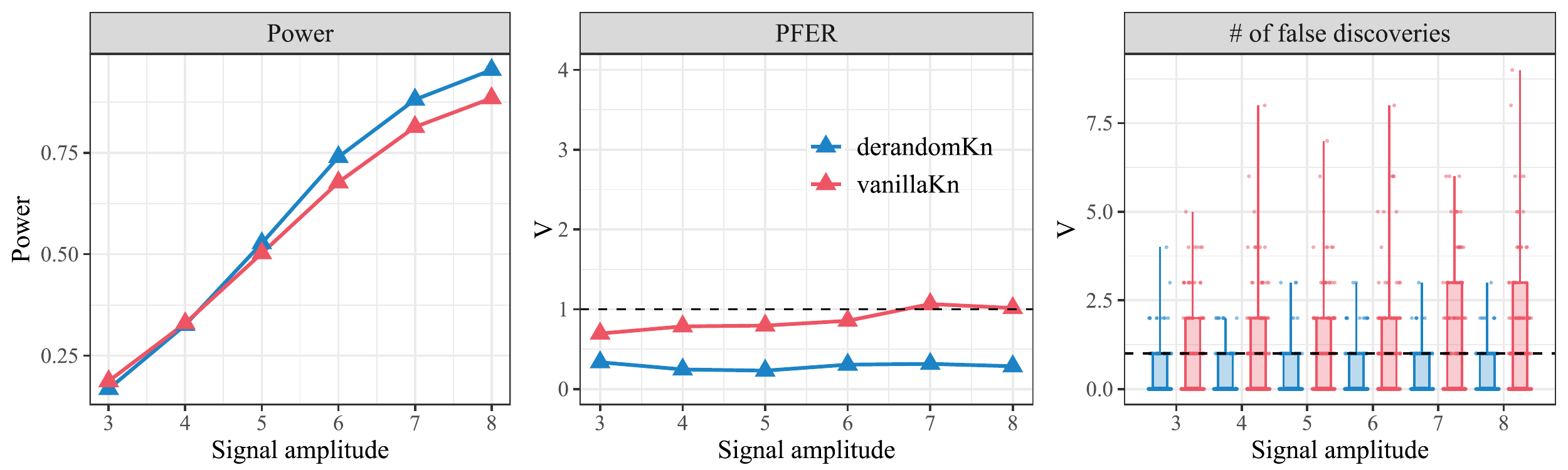}
\caption{Performance of derandomized and vanilla knockoffs in the
  small-scale study. Here, $n=200$, $p=100$,
  $X\sim\calN(\bm 0,\bm \Sigma)$ with $\Sigma_{ij}=0.6^{|i-j|}$, and
  $Y\mid X$ is generated from a linear model with $30$ non-zero
  coefficients. Each nonzero coefficient $\beta_j$ takes value
  $\pm A/\sqrt{n}$ where the signal amplitude $A$ ranges in
  $\{3,4,\ldots,8\}$ and the sign is determined by i.i.d.~coin
  flips. The locations of the non-zero signal are randomly chosen from
  $[p]$. We show the averaged results over 200 trials. The parameter
  $\bm \beta$ is fixed across trials so that the distribution of
  $(X, Y)$ does not vary. The dashed black line indicates the target
  PFER level $v=1$. In the boxplot, the box is drawn from the $10$th
  quantile to the $90$th quantile; the whiskers represent the maximum
  and the minimum of the data; each jittered dot represents a raw data
  points outside of the $[10, 90]$th percentile range.}
\label{fig:pfer_amp_gaussian_within}
\end{figure}

Figure \ref{fig:pfer_amp_gaussian_within} compares the performance of
derandomized and vanilla knockoffs in the small-scale study. The
construction of knockoffs in this study is based on a version
suggested by \citet{Spector2020private}, and we use the LCD statistic
to tease the signal and noise apart.  We can see that both procedures
control the PFER, while the power of \algoname~is slightly better than
that of vanilla knockoffs. The boxplot shows that derandomization
significantly decreases the \emph{marginal} selection variability as
claimed earlier (we additionally provide the frequencies of the number
of false discoveries resulting from both methods in Table
\ref{tab:pfer_amp_gaussian} in Appendix \ref{sec:extra_tables}).

PFER control is theoretically guaranteed with our parameter choices
since the ratio between $\p(\Pi_j\ge 1/2)$ and $\E[\Pi_j]$ is below
one for all null variables $j$, as seen earlier in Figure
\ref{fig:pfer_amp_gaussian_ratio}.  A different way to establish
validity is to check the monotonicity condition from Proposition
\ref{prop:asst_ratio} (which in turn implies that none of the ratios
exceed one).
%  We caution the reader in interpreting the reported
% values of the ratios since there is a multiple selection issue at play
% here. The largest observed ratios indicate that the means are larger
% but also that the errors (the difference between the empirical mean
% and the true mean) may also be large. Suppose for instance that we
% have $m$ random variables all with means less than 0.8, say. Suppose
% the accuracy of each empirical mean is 0.1 meaning that each empirical
% mean has a standard deviation equal to 0.1. Then we would expect to
% see a few of empirical means above 1 even though all theoretical means
% are below 0.8.
Figure~\ref{fig:pfer_amp_gaussian_pidist} shows the {\em pooled}
histograms of all (nonzero) null $\Pi_j$'s; the non-increasing
property of the pooled distributions is clear.

% \blue{ We remark that there is a
%   multiple selection issue in understanding the plot of the
%   ratios---it is possible to see realized ratios greater than $\gamma$
%   (due to the randomness in $(\bmx,\bmY,\bmtx)$) but it does not mean
%   that our assumption is violated; instead, it reads that the ``the
%   maximum of of the realized ratios is greater than $\gamma$'';
% % In the population level, these ratios are bounded from the above by $\gamma$, however, it is 
% % highly possible that for some realizations, the ratio goes larger than $\gamma.$
% % ; 
% in other words, if we were to re-run the simulation with a  different random seed, 
% it is likely to observe ratios with other indices  above the threshold.
% }

\begin{figure}[ht]
\centering
\includegraphics[width = \textwidth]{./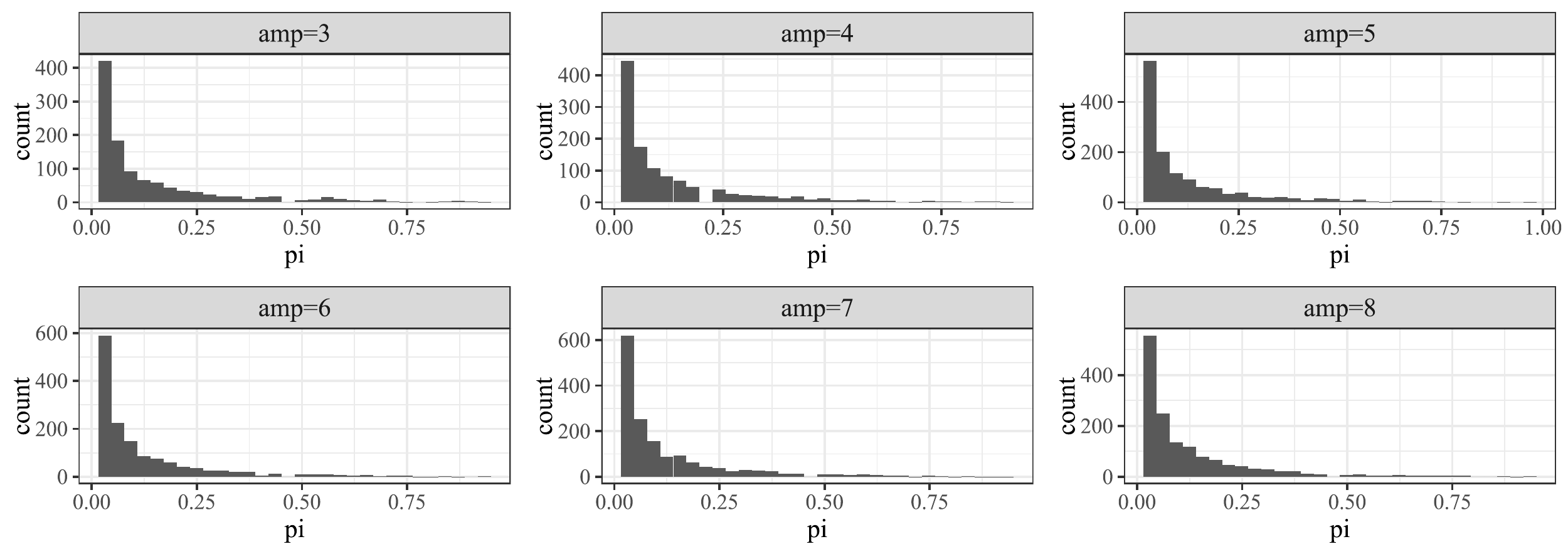}
\caption{Pooled histograms of all nonzero null $\Pi_j$'s under different signal amplitude. The experiment setting is the same as in Figure \ref{fig:pfer_amp_gaussian_within}.}
\label{fig:pfer_amp_gaussian_pidist}
\end{figure}

Our large-scale study uses the same LCD statistic and $M=31$, with the
knockoff construction based on the version described in
\citet{candes2018panning}.  As shown in Figure
\ref{fig:pfer_amp_gaussian_large_within}, we observe similar results:
both derandomized and vanilla knockoffs control the PFER as expected;
the power of \algoname~is slightly higher than that of the vanilla
knockoffs; and \algoname~has a lower marginal variability than vanilla
knockoffs.  In addition, the frequencies of the number of false
discoveries $V$ are recorded in Table
\ref{tab:pfer_amp_gaussian_large} for both methods respectively; the
pooled histograms of all null $\Pi_j$'s are constructed in Figure
\ref{fig:pfer_amp_gaussian_large_pidist} and we refer the readers to
Appendix \ref{sec:additional_figs} for other diagnostic plots.

\begin{figure}[ht]
	\centering
	\includegraphics[width = \textwidth]{./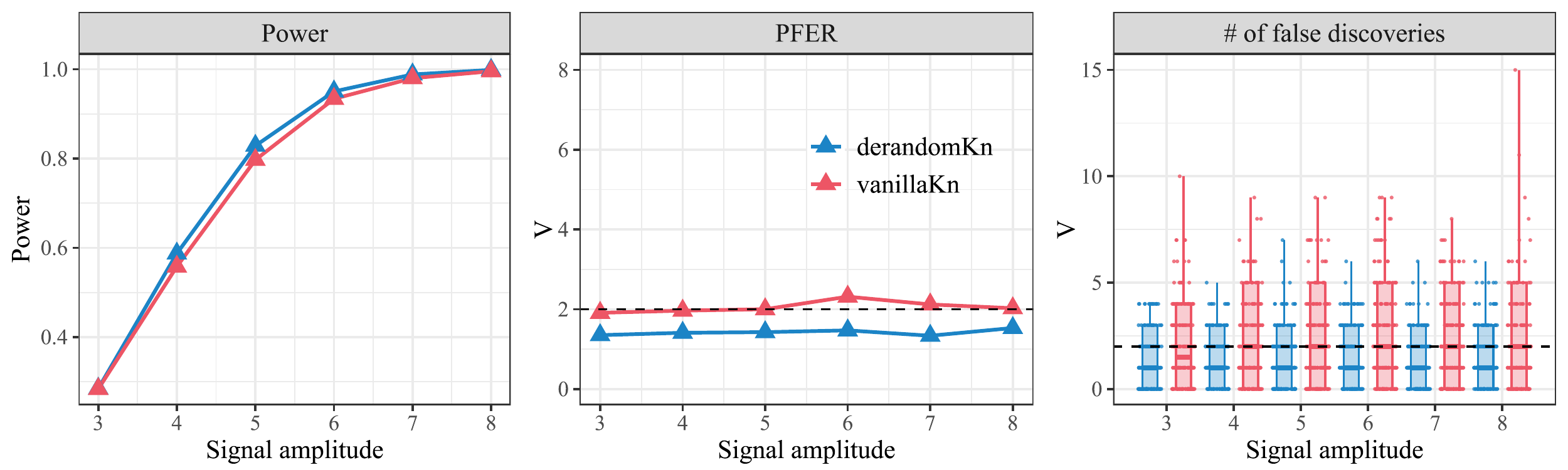}
        \caption{Performance of derandomized and vanilla knockoffs in
          the large-scale setting.  Here, $n=2000$, $p=1000$ and
          $\Sigma_{ij}=0.5^{|i-j|}$. $Y\mid X$ is generated from a
          linear model with $60$ non-zero coefficients. The experiment
          settings are otherwise the same as in Figure
          \ref{fig:pfer_amp_gaussian_within}. The dashed black line
          corresponds to the target PFER level $v=2$. The construction
          of boxplots is as in Figure
          \ref{fig:pfer_amp_gaussian_within}.
}
  \label{fig:pfer_amp_gaussian_large_within}
\end{figure}

\subsection{Improving assumption-free guarantees}
\label{sec:free_theory}
We are primarily interested in a  relatively large number 
$M$ of repetitions in order to enable stable decision making.
In the case
where $M$ is low, it is possible to significantly improve on the
assumption-free bound  $\E [V] ~\le~ v/\eta$. 
Indeed, assuming that the knockoff features are
conditionally i.i.d., then the number of selections $M\Pi_j$ is binomial
conditional on $\bmx$ and $\bmY$,
i.e.~$M\Pi_j | \bmx, \bmY \sim \operatorname{Bin}(M,\mathbb{P}(j\in
\hat{\calS}^1 \mid \bmx,\bm{Y}))$. Therefore, the PFER can be directly computed via
\begin{align*}
  \E [V]  = \Exs \left[\sum_{j\in \calH_0} \p(j\in \hat{\calS}\mid \bmx, \bm{Y})\right]
   &= \Exs \left[ \sum_{j\in \calH_0} \p (M\Pi_j\geq M\eta \mid \bmx, \bm{Y}) \right].
\end{align*}
For example, let us consider the case $\eta = 0.5$ and $M = 3$.
Setting $p_j \defn \mathbb{P}(j\in \hat{\calS}^1\mid \bmx, \bm{Y} )$
gives
\begin{align*}
\E [V]  &= \Exs \left[\sum_{j\in \calH_0}  p^3_j + 3 p_j^2(1 - p_j) \right] = 
\Exs \left[\sum_{j\in \calH_0}p_j \left(p_j^2 + 3 p_j(1 - p_j)\right) \right] \leq 1.125 v,
\end{align*}
where the last inequality follows from two facts, namely,
\begin{align}
\label{eq:max}
  \max_{x\in [0,1]}(x^2 + 3x(1-x)) = 1.125 ~\text{ and }~
  \sum_{j\in \calH_0} \mathbb{P}(j\in \hat{\calS}^1) \leq v.
 \end{align} 
 This bound is of course better than $\E [V] \le 2v$ from
 \eqref{eq:markov}.  Such calculations for any value of
 $\eta$ and $M$ provide a tighter control and can be carried out in a
 completely offline fashion. In this spirit, Figure
 \ref{fig:assptfree_bounds} shows a valid assumption-free bound for a
 number of repetitions ranging from $2$ to $50$.
\begin{figure}[h]
\centering
\begin{subfigure}{0.47\textwidth}
\centering
  \includegraphics[width =.8\textwidth]{./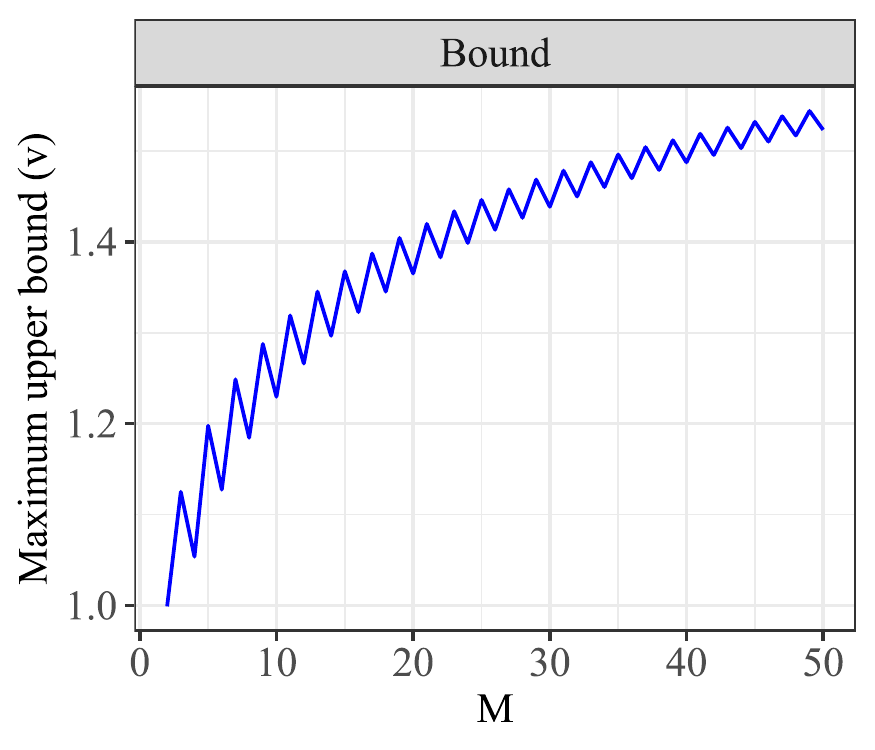}
%\caption{The upper bound (multiples of $v$) versus $M$.}
\end{subfigure}
\begin{subfigure}{0.47\textwidth}
\centering
  \includegraphics[width = .8\textwidth]{./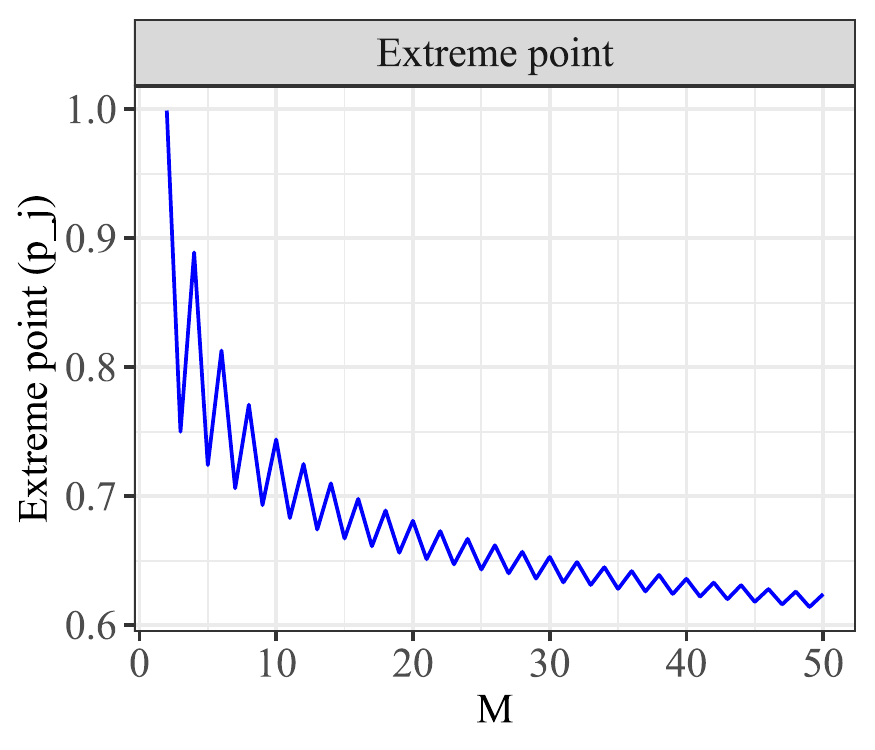}
%\caption{The selection probability at which the upper bound is achieved.}      \end{subfigure}   
\end{subfigure}

\caption{(a) Refined assumption-free PFER bounds as a function of the
  number of repetitions. The PFER bound is the number reported on the
  y-axis times the PFER of the base procedure. (b) Value of $p_j$ at
  which the upper bound as in \eqref{eq:max} (shown for $M = 3$)
  is achieved.}
\label{fig:assptfree_bounds} 
\end{figure}

%%%%%%%%%%%%%%%%%%%%%%%%%%%%%%%%%%%%%%%%%%%%%%%%%%%%%%%%%%%%%%%%%%%%%%%%%%%%%%%%
\section{Theoretical guarantees: controlling the $k$-FWER}
\label{sec:fwer}
Another widely used type-I error measure is the \emph{$k$ family-wise
  error rate} ($k$-FWER): defined as the probability of making at
least $k$ false discoveries, $\kFWER = \p(V\ge k)$.  Dating back to
Bonferroni \citep{dunn1961multiple} and \cite{holm1979simple}, many
procedures guaranteeing $\kFWER$ control have been proposed. Most
operate on p-values and many require various assumptions on the
dependence structure between these p-values (see,
e.g.~\citet{karlin1980classes,hochberg1988sharper,benjamini2001control,romano2010balanced}).
We refer the readers to \cite{guo2014further,duan2020familywise}
and the references therein for a survey of these methods.

We now demonstrate how to tune the parameters for derandomized
knockoffs to control the $\kFWER$. Our exposition parallels that from
the previous section.
\begin{theorem}
\label{thm:fdx_finerbnd}
Let $V$ be the number of false discoveries after applying derandomized
knockoffs (Algorithm \ref{algo.skn}) with a base procedure obeying
$\text{PFER} \le v$. Suppose condition~\eqref{eq:ratio_bnd} holds and
that for each $k\ge 1$, 
\begin{align}
  \label{eq:condition_v}
  \p(V\ge k) ~\le~ \dfrac{\rho \E[V]}{k}.
\end{align}
Then the $\kFWER$ is controlled via 
\begin{align}
  \label{eq:explicitbnd}
  \p(V\ge k)~\le~ \dfrac{\rho \gamma v}{k}.
\end{align}
In particular, by Markov's inequality one always has $\rho=1$, {and consequently,} 
\[\p(V\ge k)~\le~ \gamma v/k.\]
\end{theorem}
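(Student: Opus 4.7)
The plan is to chain Theorem~\ref{cor:monotone} with the two hypotheses \eqref{eq:ratio_bnd} and \eqref{eq:condition_v}; the statement is essentially a two-line composition once those ingredients are in place. First I would invoke Theorem~\ref{cor:monotone}: because \eqref{eq:ratio_bnd} holds and the base procedure obeys $\operatorname{PFER}\le v$, the derivation displayed in \eqref{eq:proof_pfer} yields $\E[V] \le \gamma v$. Next I would apply hypothesis \eqref{eq:condition_v} directly to get $\p(V\ge k)\le \rho\E[V]/k$, and substituting the bound from the previous step produces
\[
\p(V\ge k)\le \frac{\rho\gamma v}{k},
\]
which is precisely \eqref{eq:explicitbnd}.

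For the final assertion, I observe that $V$ is a nonnegative integer-valued random variable, so Markov's inequality gives $\p(V\ge k)\le \E[V]/k$ unconditionally; equivalently, \eqref{eq:condition_v} holds for free with $\rho=1$. Plugging this into the bound above yields $\p(V\ge k)\le \gamma v/k$, as claimed.

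There is no real obstacle here: the theorem is a clean decoupling that separates the ``derandomization'' contribution (captured by $\gamma$ via \eqref{eq:ratio_bnd}, and analyzable through Proposition~\ref{prop:asst_ratio}) from the tail behavior of the count $V$ produced by the base procedure (captured by $\rho$ via \eqref{eq:condition_v}). The only point worth flagging—though unnecessary for the proof of the statement itself—is that improving on the trivial $\rho=1$ requires structural information on the law of $V$ in the spirit of the monotonicity/unimodality assumptions used earlier for $\gamma$; absent such information, the Markov-based bound $\gamma v/k$ is exactly what one obtains.
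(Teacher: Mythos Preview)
Your proposal is correct and mirrors the paper's own argument almost verbatim: chain the assumption \eqref{eq:condition_v} with the PFER bound $\E[V]\le\gamma v$ from Theorem~\ref{cor:monotone} to obtain $\p(V\ge k)\le \rho\E[V]/k\le \rho\gamma v/k$, and note that Markov's inequality supplies $\rho=1$ unconditionally.
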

The proof of this result is straightforward since we have 
\[
  \p(V\ge k)\le \dfrac{\rho \E[V]}{k} \le \dfrac{\rho \gamma v}{k}, 
\]
where the last inequality follows from Theorem \ref{cor:monotone}.  

Set $h(x) := x$ and let $Z\sim\mathrm{NB}(v,1/2)$, where
$\mathrm{NB}(m,q)$ denotes a negative binomial random variable, which
counts the number of successes before the $m$-th failure in a sequence
of independent Bernoulli trials with success probability $q$. With
this, the right-hand side of \eqref{eq:explicitbnd} can be expressed
as $\rho \gamma\E[h(Z)]/k$ (by simply observing that $\Exs[Z] = v$).
This leads to the following extension:
\begin{corollary}
\label{cor:kfwer_genbnd}
Let $h :\R\mapsto \R$ be a convex, non-negative and non-decreasing
function.  In the setting of Theorem \ref{thm:fdx_finerbnd}, suppose
that
\begin{align}
  \label{eq:condition_hofv}
  \p(V\ge k) \le \dfrac{\rho \E[h(V)]}{h(k)}.
\end{align}
Then the $\kFWER$ obeys 
\begin{align}
  \label{eq:kfwer_generalbnd}
  \p(V\ge k)\le \dfrac{\rho \E[h(Z/\eta)]}{h(k)},
  \qquad 
  Z \sim \mathrm{NB}(v,1/2).
\end{align}
In particular, Markov's inequality shows that \eqref{eq:kfwer_generalbnd}
always holds with $\rho = 1$.
\end{corollary}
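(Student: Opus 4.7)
With the assumption \eqref{eq:condition_hofv} already in hand, the whole task reduces to proving the pivotal inequality
\[
\E[h(V)] \;\le\; \E\!\left[h(Z/\eta)\right], \qquad Z\sim\mathrm{NB}(v,1/2).
\]
Chaining this with \eqref{eq:condition_hofv} gives \eqref{eq:kfwer_generalbnd} immediately, and the $\rho=1$ case is Markov's inequality applied to $h(V)$, namely $\p(V\ge k)=\p(h(V)\ge h(k))\le \E[h(V)]/h(k)$, combined with the pivotal inequality. My plan to establish the pivotal inequality is a three-step chain: a pathwise Markov step on the indicators defining $V$, Jensen's inequality exploiting convexity of $h$, and a distributional comparison between the per-run false-discovery count $V_1$ of the base procedure and $Z$.

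\textbf{Step 1 (pathwise reduction to a per-run average).} For every $j\in\calH_0$, the trivial bound $\ind\{\Pi_j\ge\eta\}\le \Pi_j/\eta$ holds (the indicator is at most $1\le \Pi_j/\eta$ on $\{\Pi_j\ge\eta\}$ and $\Pi_j/\eta\ge 0$ elsewhere). Summing over $j\in\calH_0$ and unwinding $\Pi_j$,
\[
V \;=\; \sum_{j\in\calH_0}\ind\{\Pi_j\ge\eta\} \;\le\; \frac{1}{\eta}\sum_{j\in\calH_0}\Pi_j \;=\; \frac{1}{M}\sum_{m=1}^M \frac{V_m}{\eta},
\]
where $V_m := \#(\calH_0\cap \hat{\calS}^m)$.

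\textbf{Step 2 (Jensen and averaging).} Since $h$ is non-decreasing and convex, the pathwise bound from Step 1 yields
\[
h(V) \;\le\; h\!\left(\frac{1}{M}\sum_{m=1}^M \frac{V_m}{\eta}\right) \;\le\; \frac{1}{M}\sum_{m=1}^M h(V_m/\eta).
\]
Taking expectations and noting that the $V_m$'s are identically distributed (the same base procedure applied to conditionally i.i.d.\ knockoff copies) collapses the average into a single term, giving $\E[h(V)]\le \E[h(V_1/\eta)]$.

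\textbf{Step 3 and the obstacle.} The remaining inequality $\E[h(V_1/\eta)]\le \E[h(Z/\eta)]$ is the only place the specific structure of the base procedure enters, and it is the main obstacle: the condition $\text{PFER}\le v$ alone only controls $\E[V_1]$ and is not enough to bound $\E[h(V_1/\eta)]$ for general convex non-decreasing $h$. For the representative $v$-knockoffs base procedure, however, the Janson--Su argument gives a clean coupling: conditional on $\{|W_j|\}_j$ and on the positions of the nulls in the sorted order, the signs $\{\sign(W_j)\}_{j\in\calH_0}$ are i.i.d.\ $\mathrm{Bernoulli}(1/2)$, and $V_1$ is bounded by the number of null positives occurring before the $v$-th null negative, a quantity distributed exactly as $\mathrm{NB}(v,1/2)$. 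Hence $V_1 \le_{\mathrm{st}} Z$; since $x\mapsto h(x/\eta)$ is non-decreasing, stochastic dominance yields $\E[h(V_1/\eta)]\le \E[h(Z/\eta)]$, closing the chain and proving the corollary.
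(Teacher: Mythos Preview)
Your proposal is correct and follows essentially the same route as the paper: the pathwise bound $V\le \frac{1}{M}\sum_m V_m/\eta$ via $\ind\{\Pi_j\ge\eta\}\le\Pi_j/\eta$, then monotonicity plus Jensen to pass $h$ inside the average, identical marginal distributions of the $V_m$ to reduce to $\E[h(V_1/\eta)]$, and finally the stochastic domination $V_1\le_{\mathrm{st}}\mathrm{NB}(v,1/2)$ from the Janson--Su coin-flip argument. Your remark that the bare assumption $\mathrm{PFER}\le v$ is insufficient and that the stochastic domination specific to $v$-knockoffs is what is actually needed is a useful clarification the paper leaves implicit.
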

The proof of Corollary \ref{cor:kfwer_genbnd} is deferred to Appendix \ref{sec:PfCorfdx}.

\subsection{Guarantees under mild assumptions}
\label{sec:fwer_mildassumption}
% \ejc{There are quite a few assumptions in the theorem below :-( What's
%   a bit unsettling here is that you make an assumption about your
%   procedure to then establish a property of the procedure. A person
%   might ask: why not directly assume your property?
%   (14) and (11) look similar. A comment may be useful here.}

% \ejc{The paragraph below is incomprehensible. What is $m$ in (16)?
%   What is $q$? Why is $\alpha$ both the nominal level and the
%   exponent in $x^\alpha$? Why do we write $\alpha \in [k-1]$? What
%   does it mean? Is $\alpha$ fixed or not?  One would think so. Why do we have an integral
%   instead of sum as before? Here is
%   what you need to do. You set $h(x) = x$ and say that the bound from
%   the theorem can be case as something like $\E(Z/\eta)/2k$. Then you
% explain that setting $h(x) = x^a$ gives (16) and $h(x) = \exp(\lambda
% x)$ gives (17). (16) and (17) need to be understandable.}  

% \mycomment{rewrite this section as follows.}

While \eqref{eq:condition_v} holds with $\rho = 1$, we observe in
simulations that this value is often quite conservative and we give an
example where \eqref{eq:condition_v} holds with $\rho = 1/2$. The
proof and an extension of the proposition below are given in Appendix
\ref{sec:proof_of_finerfwer}.
\begin{proposition}
\label{prop:fwer_finerbnd}
In the setting of Theorem \ref{thm:fdx_finerbnd}, suppose the pmf of
$V$ is skewed to the left of $k$ in the sense that
\begin{align}
  \label{eq:asst_v}
   \sum_{u=1}^{k-1} \, \p (V\in [k-u,k)) ~\geq~ \sum_{u=1}^k \,\p (V\in[k,k+u))
\end{align}
(observe the similarity with \eqref{eq:asst_partialsum}). Then
condition \eqref{eq:condition_v} holds with $\rho = 1/2$.
\end{proposition}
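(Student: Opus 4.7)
The plan is to translate the interval-sum condition \eqref{eq:asst_v} and the target bound $\p(V\ge k)\le \E[V]/(2k)$ into the same language, namely weighted sums of the pmf $p_m := \p(V=m)$, and observe that the former implies the latter essentially on the nose.

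First I would swap the order of summation on both sides of \eqref{eq:asst_v}. On the left, a given $m\in\{0,1,\ldots,k-1\}$ lies in $[k-u,k)$ precisely when $u\ge k-m$, so as $u$ ranges over $\{1,\ldots,k-1\}$ the value $p_m$ is counted $m$ times (and not at all when $m=0$). On the right, for $m\ge k$ the value $p_m$ is counted once for each $u\in\{1,\ldots,k\}$ with $u\ge m-k+1$, which is $2k-m$ times when $k\le m\le 2k-1$ and zero when $m\ge 2k$. Thus \eqref{eq:asst_v} is equivalent to
\[
\sum_{m=0}^{k-1} m\, p_m \;\ge\; \sum_{m=k}^{2k-1} (2k-m)\, p_m.
\]

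Next I would rewrite the target inequality $\p(V\ge k)\le \E[V]/(2k)$ in the same form. Multiplying through by $2k$ and splitting $\E[V]$ at $k$, it becomes
\[
\sum_{m=k}^{\infty} (2k-m)\, p_m \;\le\; \sum_{m=0}^{k-1} m\, p_m.
\]
Since $2k-m\le 0$ for $m\ge 2k$, discarding those terms only increases the left-hand side, so it suffices to establish
\[
\sum_{m=k}^{2k-1} (2k-m)\, p_m \;\le\; \sum_{m=0}^{k-1} m\, p_m,
\]
which is exactly the rewritten form of \eqref{eq:asst_v}. This gives $\p(V\ge k)\le \E[V]/(2k)$, i.e., \eqref{eq:condition_v} with $\rho=1/2$.

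There is no real obstacle in this argument beyond the bookkeeping in the first step; once the interval sums are resolved into pmf weights, the comparison is immediate because the ``heavy tail'' terms at $m\ge 2k$ have negative coefficients and can only help. The only thing to be careful about is the edge case $k=1$ (where the left-hand sum in \eqref{eq:asst_v} is empty), which forces $p_m=0$ for $1\le m\le 1$ under the hypothesis and likewise makes both sides of the target bound zero, so the conclusion holds trivially.
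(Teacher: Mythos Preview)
Your argument is correct. Rewriting both the hypothesis \eqref{eq:asst_v} and the target $\p(V\ge k)\le \E[V]/(2k)$ as weighted sums of $p_m$ and comparing coefficients is a clean, self-contained route; the only minor imprecision is the remark about $k=1$ (the two sides of the actual target $\p(V\ge 1)\le \E[V]/2$ need not vanish, but your general argument already covers that case without a separate check).

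The paper takes a different path: it deduces the proposition from a general ``halved Markov'' lemma due to \cite{huber2018halving}. One introduces an independent $U\sim\text{Unif}[-k,k]$ and shows $\p(V\ge k)\le \p(V+U\ge k)\le \E[V]/(2k)$; the left inequality is exactly where the skewness hypothesis enters, and the right is a direct computation. For integer-valued $V$ the integral condition of that lemma coincides with \eqref{eq:asst_v}, and your interchange-of-summation step is essentially the discrete verification of this. The advantage of the paper's route is modularity: the same auxiliary-uniform idea, with $U\sim\text{Unif}[-tk/\nu,k/\nu]$ or $U\sim\text{Unif}[-t/\lambda,1/\lambda]$, immediately yields the moment and exponential refinements in Proposition~\ref{prop:fwer_finerbnd_extension}. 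Your approach, by contrast, is more elementary and avoids the auxiliary randomization entirely, at the cost of being tailored to this particular statement.
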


In applications, $k$ and $\alpha$ are supplied and we provide below
some guidance on the selection of $v$ and $\eta$ to control the
$k$-FWER at level $\alpha$. In order to do so, however, we must extend
the base procedure to control the PFER at levels $v$ which may not be
integer valued. 

\paragraph{Non-integer $v$}
Let $\floor{v}$ be the integer part of $v$ and sample a random
variable $U\sim \mathrm{Bern}(v-\floor{v})$. If $U=1$, run the
$(\floor{v}+1)$-knockoffs and $\floor{v}$-knockoffs otherwise. It is
easy to see that such an algorithm controls the PFER at level $v$.

\paragraph{Choices of parameters for larger values of $k$}
When $k\ge \max(1/(4\alpha),2)$, we shall fix $\eta$ to be $0.5$ and
choose $M$ such that the optimal value of \eqref{eq:cvx_opt} is
one. 
% \yutingcomment{(13) or (11)?} \mycomment{Fixed.}
In this way, all the reported variables are selected at least half of
the times by the base procedure. We then set $v = 2k\alpha$.  For
example, when $k=3$ and $\alpha=0.1$, we simply set $v =0.6$ and use
the base procedure at level 0.6 as above. 

\paragraph{Choice of parameters for smaller values of $k$}
In the case where $k=1$ or $k<1/(4\alpha)$, we fix $v=1$.  When $k=1$,
note that the LHS in \eqref{eq:asst_v} vanishes so we cannot make use
of this condition. In order to control the FWER at the nominal level
$\alpha$, it suffices to control $\gamma$ by $\alpha$ since
$\p(V\ge 1)\le \E[V] \le \gamma$. As shown in Figure
\ref{fig:ratio_M}, under the monotonicity assumption of each $\Pi_j$,
pairs $(\eta, M)$ yield different values of $\gamma$. Among those
pairs for which $\gamma\le \alpha$, we shall select the pair with the
smallest value of $\eta$. For example, suppose we wish to have
$\text{FWER} \le 0.1$. Then Figure \ref{fig:alpha_bnd}(a) plots the
admissible pairs $(\eta,M)$ (under the monotonicity condition), among
which we shall use $\eta = 0.95$ and $M=20$ (one may opt to use a
larger value of $M$ to increase stability if the computational cost
allows it).  Similarly, when $k<1/(4\alpha)$, we cannot use $\eta=0.5$
and $v=2k\alpha$ because such a parameter combination would always
yield no selection since $v<\eta$. Figure \ref{fig:alpha_bnd}(b) plots
the pairs $(\eta,M)$ controlling the $2$-FWER by $0.1$ (once again,
under the monotonicity condition).  According to this plot, we can
pick $\eta = 0.81$ and $M$ within the computational limit as large as
possible in order to enhance stability. In our simulation, we shall
take $M=30$.

\begin{figure}[h]
  \centering
  \begin{minipage}{0.49\textwidth}
  \centering
  \includegraphics[width = 0.8\textwidth]{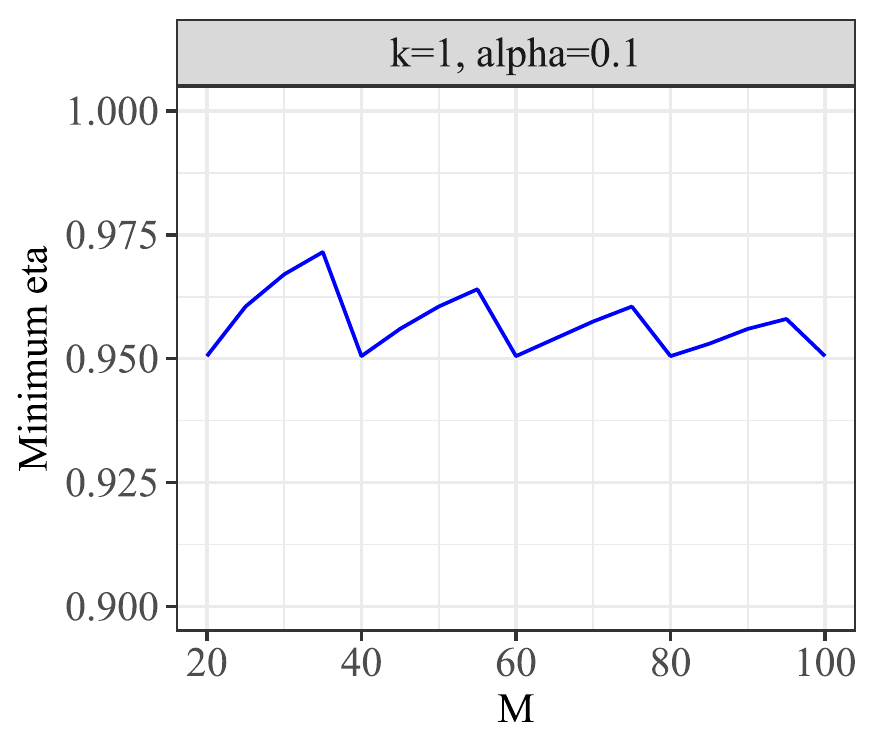}
  \caption*{(a) $k=1$ and $\alpha=0.1$.}
  \end{minipage}
  \begin{minipage}{0.49\textwidth}
  \centering
    \includegraphics[width = 0.8\textwidth]{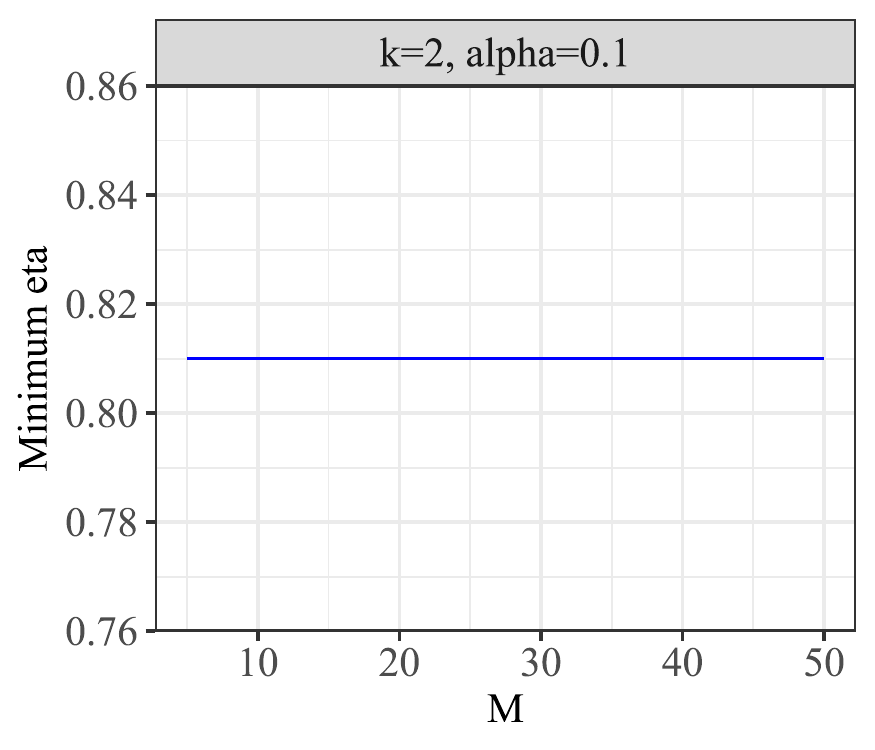}
    \caption*{(b) $k=2$ and $\alpha=0.1.$}
  \end{minipage}
  \caption{Pairs $(\eta,M)$ controlling the $k$-FWER control at level $\alpha$ with $v=1$.}
  \label{fig:alpha_bnd}
\end{figure}

\subsection{Numerical evaluation of the derandomization effect} 
We perform two numerical experiments to gauge the performance of
derandomized knockoffs. In this study, the response $Y$ is sampled
from a logistic model
\begin{align}
\label{eq:logismdl}
  Y\mid X_1,\ldots,X_p\sim \mathrm{Bern} 
~\left(\dfrac{\exp(\beta_1X_1+\ldots+\beta_pX_p)}{1+\exp(\beta_1X_1+\ldots+\beta_pX_p)}\right)
\end{align}
and $X$ is drawn from a multivariate Gaussian distribution with
parameters to be specified later on.  As in Section
\ref{sec:simulation_pfer}, the vector of regression coefficients is
sparse so that most of the hypotheses are actually null; 
under this model, testing conditional independence is the same as testing
whether $\beta_j = 0$. 

We evaluate derandomized knockoffs on a small-scale and a large-scale
data set.  In the small-scale study, the knockoff construction is the
same as that from Section \ref{sec:simulation_pfer}, and the LCD
statistic is used as our importance statistic.  $M = 30$ knockoff
copies generated in each run and the selection threshold is
$\eta = 0.81$.  Under the monotonicity constraint, the value of
\eqref{eq:cvx_opt} is $0.39$.  According to Theorem
\ref{thm:fdx_finerbnd}, we thus control the $2$-FWER at level $0.1$.
Figure \ref{fig:fwer_amp_gaussian_within} displays the results of the
small-scale experiment, where derandomized and vanilla knockoffs obey
$\text{$2$-FWER} \le 0.1$. As before, the boxplot shows that
\algoname~exhibits less marginal randomness than vanilla knockoffs. At
the same time, we can clearly see a substantial power gain.

We empirically verify the monotonicity assumption and the skewness
property \eqref{eq:asst_v} by plotting the histograms of null
$\Pi_j$'s and false discoveries $V$ respectively in Figure
\ref{fig:fwer_amp_gaussian_pidist} and
\ref{fig:fwer_amp_gaussian_Vdist}.  Under the monotonicity assumption,
we expect to see the ratios obeying
$\p(\Pi_j\ge 0.81)/\E[\Pi_j]\leq 0.39$, which is indeed the case as
shown in Figure \ref{fig:fwer_amp_gaussian_ratio}.

In the large-scale experiment, the construction of knockoffs and the
feature importance statistics are the same as in Section
\ref{sec:simulation_pfer}; the number of knockoff copies is $M = 31$
and the selection threshold is $\eta=0.5$ (this yields $\gamma = 1$
under the monotonicity constraint). Figure
\ref{fig:fwer_amp_gaussian_large_within} presents the results from
which we observe that derandomized knockoffs achieves comparable
power, lower $k$-FWER, and reduced variability when compared with
vanilla knockoffs.  Additionally, we plot the ratios between
$\p(\Pi_j\ge \eta)$ and $\E[\Pi_j]$, histograms of the $\Pi_j$'s and
the number of false discoveries in Figure
\ref{fig:fwer_amp_gaussian_large_ratio}-\ref{fig:fwer_amp_gaussian_large_Vdist}
in Appendix \ref{sec:additional_figs}.

\begin{figure}[ht]
	\centering
	\includegraphics[width=\textwidth]{./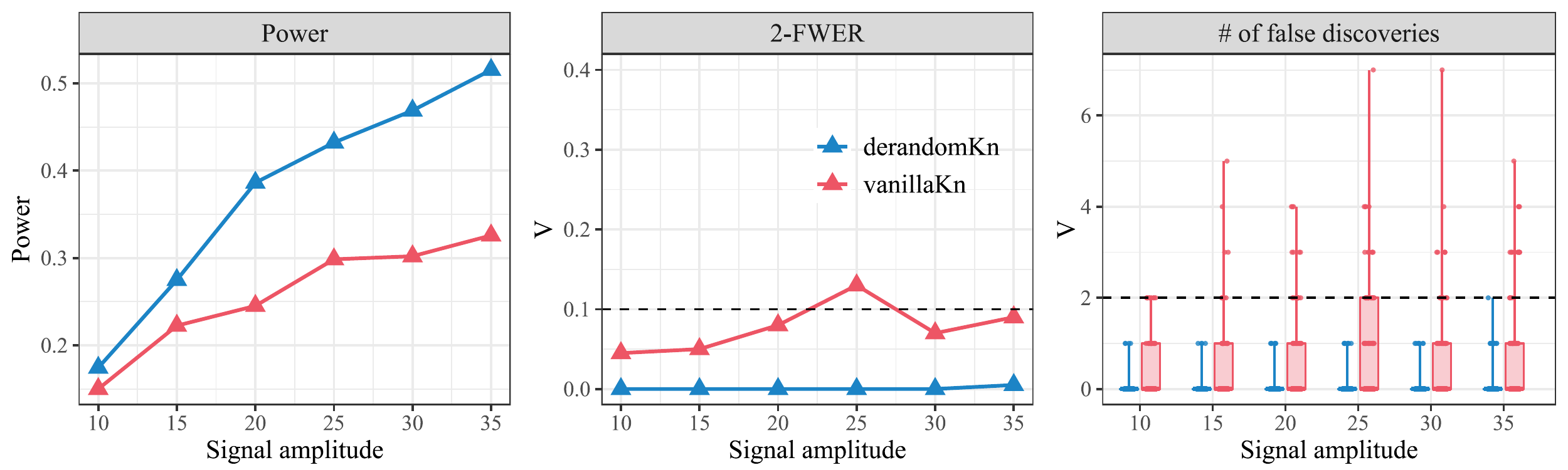}
        \caption{Performance of \algoname~($\eta = 0.81$ and $v = 1$)
          and vanilla knockoffs. The target $2$-FWER level is $0.1$.
          In this setting, $n=300$ and $p=50$,
          $X\sim \calN(0,\bm \Sigma)$ with
          $\Sigma_{ij} = 0.5^{|i-j|}$. $Y\mid X$ is sampled from a
          logistic model \eqref{eq:logismdl} with $30$ non-zero
          entries in $\bm \beta$.  These nonzero entries take values
          $\pm A/\sqrt{n}$, where the signal amplitude $A$ ranges in
          $\{10,15,\ldots,35\}$ and the sign is determined by
          i.i.d.~coin flips. The setting is otherwise the same as in
          Figure \ref{fig:pfer_amp_gaussian_within}. We indicate the
          target $2$-FWER level $\alpha=0.1$ and $\text{PFER} = 2$
          with a dashed line. Each point in the first two panels
          represents an average over $200$ replications.  The
          construction of boxplots is as in Figure
          \ref{fig:pfer_amp_gaussian_within}. Exact frequencies of the
          number of false discoveries are provided in 
          Table \ref{tab:fwer_amp_gaussian}, Appendix
          \ref{sec:extra_tables}.}
  \label{fig:fwer_amp_gaussian_within}
\end{figure}

\begin{figure}[ht]
	\centering
	\includegraphics[width = 0.95\textwidth]{./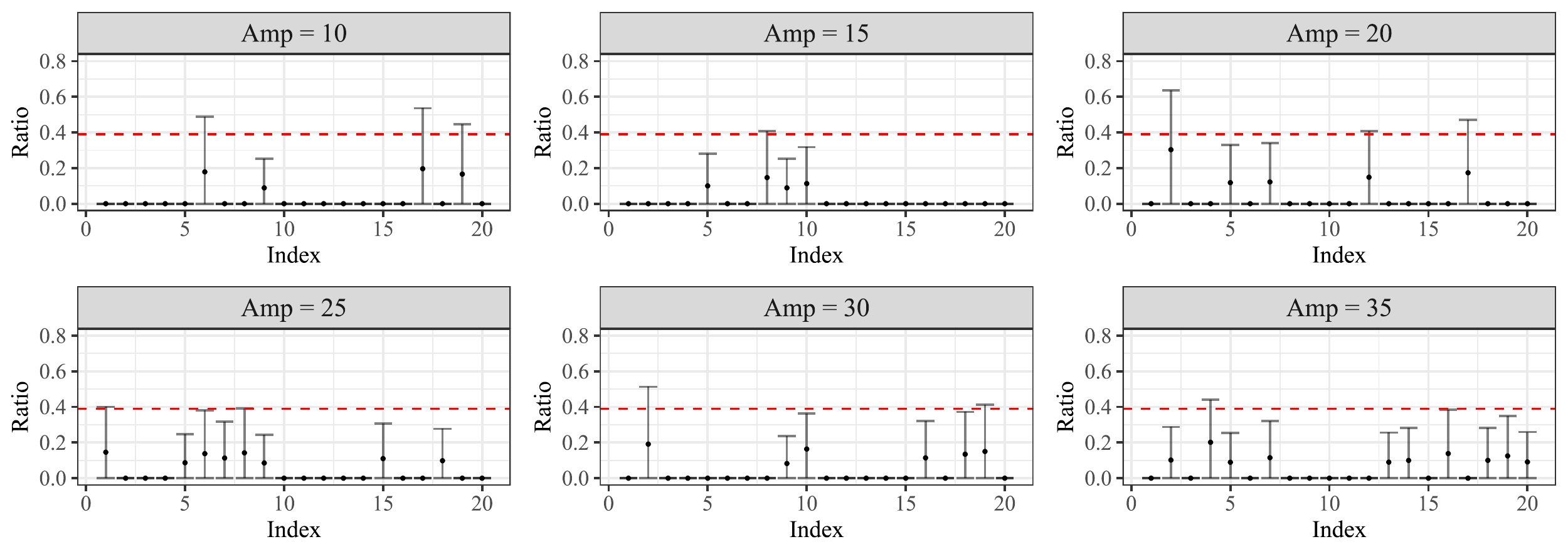}
        \caption{Realized ratios between $\p(\Pi_j\ge 0.81)$ and
          $\E[\Pi_j]$ with corresponding $95\%$ confidence intervals
          estimated from $200$ repetitions. The setting is the same as
          in Figure \ref{fig:fwer_amp_gaussian_within}. The red dashed
          line corresponds to the target upper bound of $0.39$.}
	\label{fig:fwer_amp_gaussian_ratio}
\end{figure}
\begin{figure}[ht]
	\centering
	\includegraphics[width = 0.95\textwidth]{./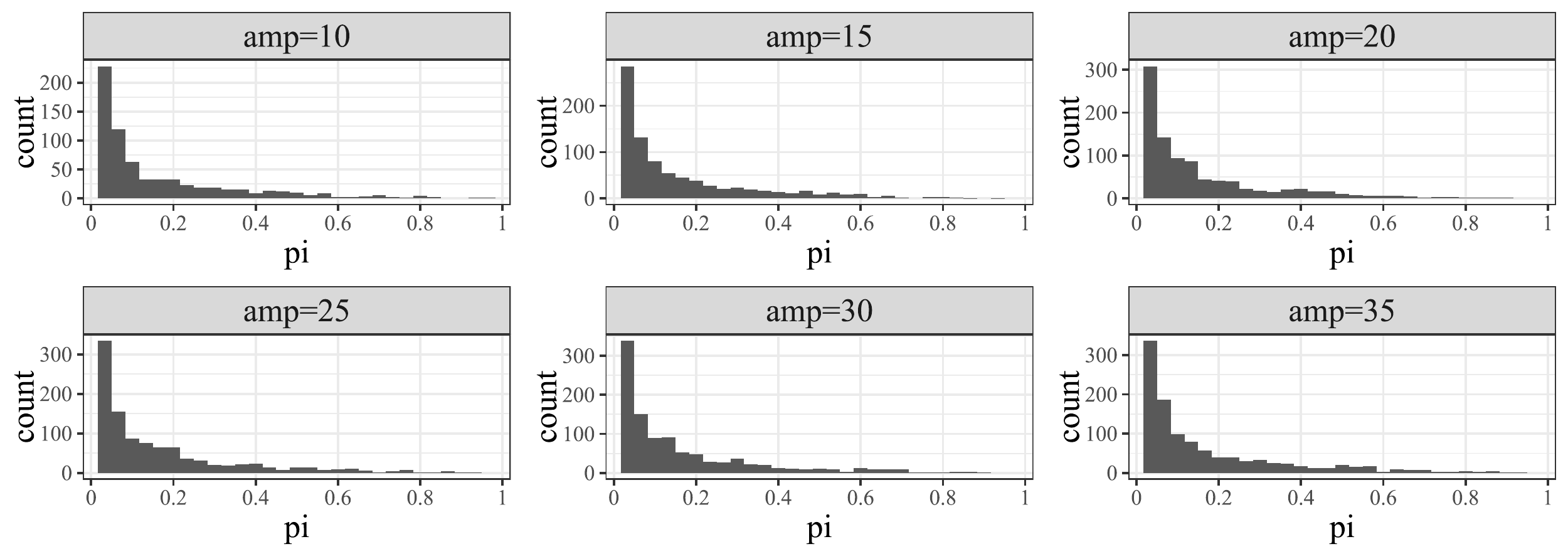}
	\caption{Pooled histograms of all nonzero null $\Pi_j$'s. The experiment setting is the same as in Figure \ref{fig:fwer_amp_gaussian_within}.}
	\label{fig:fwer_amp_gaussian_pidist}
\end{figure}

\begin{figure}[ht]
	\centering
	\includegraphics[width = 0.95\textwidth]{./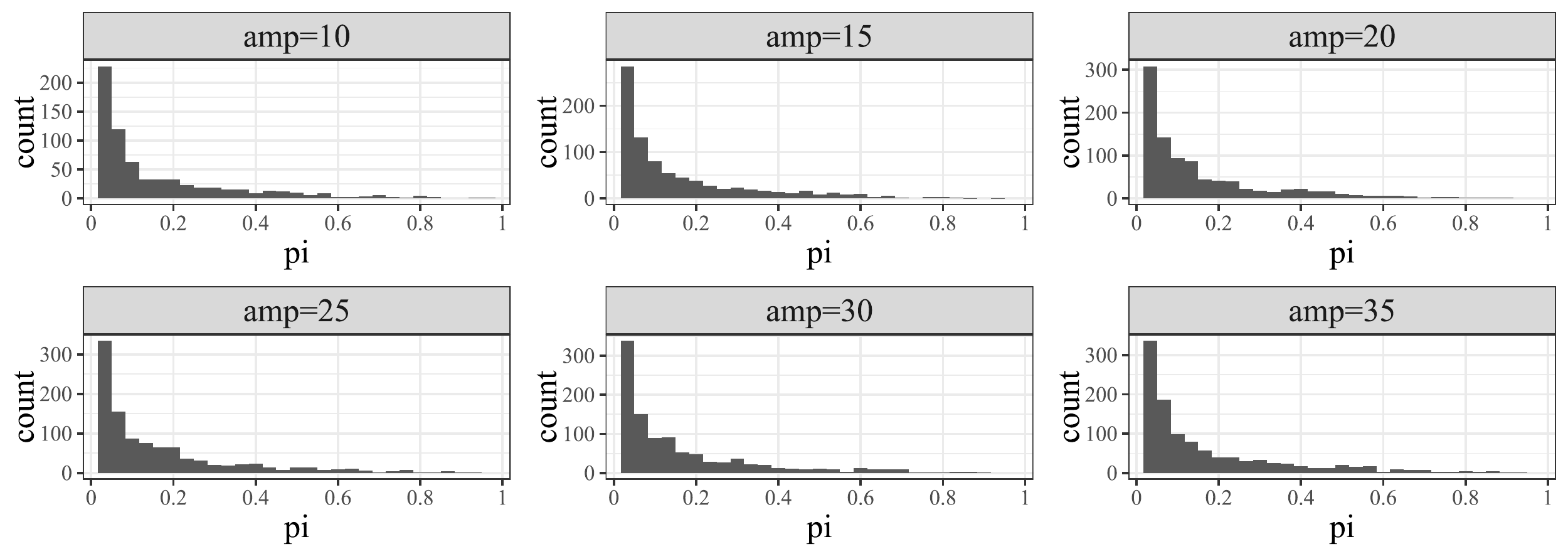}
	\caption{Histograms of the number of false discoveries
          $V$. The setting is the same as in Figure
          \ref{fig:fwer_amp_gaussian_within}.}
	\label{fig:fwer_amp_gaussian_Vdist}
\end{figure}

%For the small-scale experiment, we consider a more conservative criterion and aims at controlling FWER at level $0.1$; in the large-scale experiment, we control the $3$-FWER at level $0.1$. 
%In the small-scale experiment, the synthetic dataset is of size $n=500$ and $p=100$. The correlation coefficient for $X$ is $\rho =0.5$. The size of the support of $\bm \beta$ is $30$ and the nonzero $\beta_j$ takes value $\pm A/\sqrt{n}$, where the signal amplitude $A$ ranges in $\{6,8,\ldots,16\}$ and the sign is determined by i.i.d.~coin flips. The location of the signals is randomly chosen from $[p]$. The target FWER level is $\alpha=0.1$, i.e. $\p (V\geq 1)\le 0.1$. 

%The large-scale experiment uses a synthetic dataset of size $n=2000$ and $p=1000$. The correlation coefficient for $X$ is $\rho =0.2$.
%The size of the support of $\bm \beta$ is $60$ and the nonzero $\beta_j$ takes value $\pm A/\sqrt{n}$, where the signal amplitude $A$ ranges in $\{6,8,\ldots,16\}$ and the sign is determined by i.i.d.~coin flips. 
%The location of the signals is randomly chosen from $[p]$. We consider a target $3$-FWER level $\alpha=0.1$, i.e. $\p (V\geq 3)\le 0.1$. 

\begin{figure}[ht]
	\centering
	\includegraphics[width=\textwidth]{./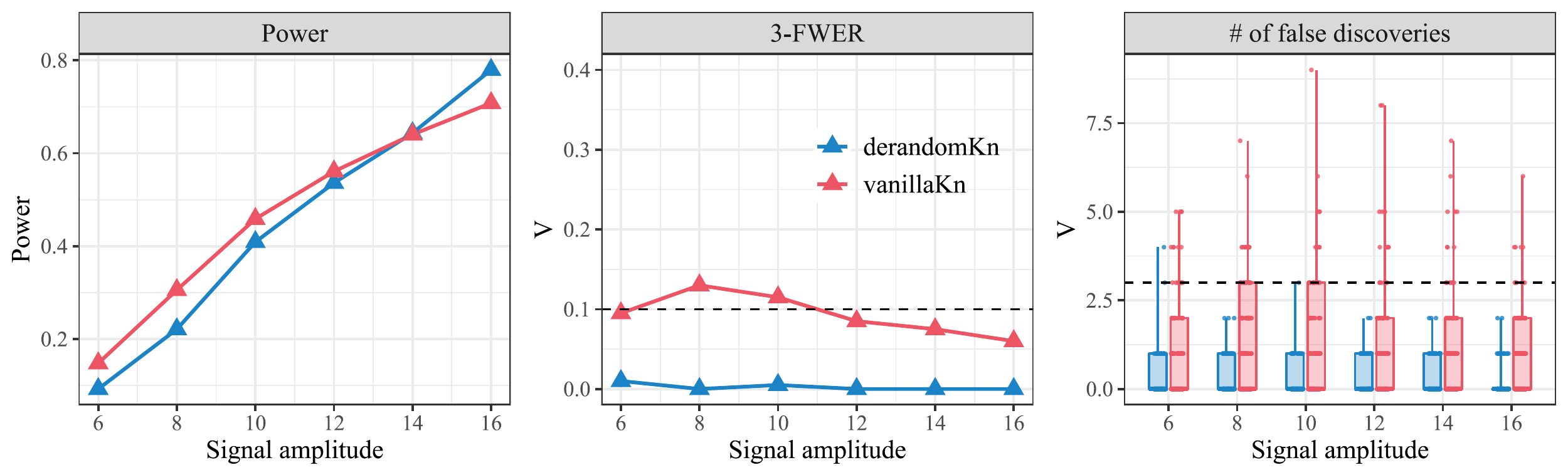}
  \caption{Results from \algoname~($\eta = 0.5$ and $v = 0.6$)
  and vanilla knockoffs. In this setting, $n=2000$, $p=1000$
  and $\Sigma_{ij} = 0.2^{|i-j|}$. The number of signals is
  $60$, and the signal amplitude $A$ ranges in $\{6,8,\ldots,16\}$.  
  The setting is otherwise the same as in Figure
  \ref{fig:fwer_amp_gaussian_within}.  We indicate the target
  3-FWER level $\alpha=0.1$ and $\text{PFER} = 3$ with a
  dashed line.  Each point in the first two panels represents
  an average over $200$ replications. The
          construction of boxplots is as in Figure
          \ref{fig:pfer_amp_gaussian_within}. }
	\label{fig:fwer_amp_gaussian_large_within}
\end{figure}

\section{Numerical simulations}
\label{sec:sim}
Thurs far, we merely demonstrated the enhanced stability of randomized
knockoffs when compared to the base procedure.  It is, however,
unclear whether this stability improvement comes at a price of a
potential power loss, as is often seen in the literature on stability
selection. This section dispels this concern by comparing our
procedure with its competitors via further numerical studies.

\subsection{PFER control}
\label{sec:numerical_pfer}
In our first experiment, we keep the two simulation settings from
Section \ref{sec:pfer} and compare \algoname~with a p-value based
method applying a Bonferroni correction and with stability selection.
The p-values used in the Bonferroni's procedure are computed from the
conditional randomization test (CRT; \citealt{candes2018panning}) in
the small-scale study, and from multivariate linear regression in the
large-scale study. The CRT p-values are in general more powerful but
require intensive computations. This is why we only calculate them in
the small-scale study. For stability selection, we use the
\textsf{stabs} R-package \citep{hofner2015stabs} and take the
selection threshold to be $0.75$ suggested by the example from the
R-package.
 
\paragraph{Small-scale study}
We can see in Figure \ref{fig:pfer_amp_gaussian} that all three
procedures control the PFER as expected.  The power of
\algoname~{is slightly better than} that of the Bonferroni procedure with CRT
p-values. Stability selection suffers a huge power loss vis a vis the
other methods due to sample splitting and a conservative choice of
threshold. % \ejc{Why do you say that the power loss comes from
  % splitting when I see that the PFER is zero? Isn't it that the
  % parameters have been chosen too conservatively?}\mycomment{It should
  % be both: even with the most liberal choice of threshold, Stability
  % Selection does not perform well.}

\begin{figure}[ht!]
	\centering
  \begin{minipage}{0.45\textwidth}
    \centering
	  \includegraphics[width =.8 \textwidth]{./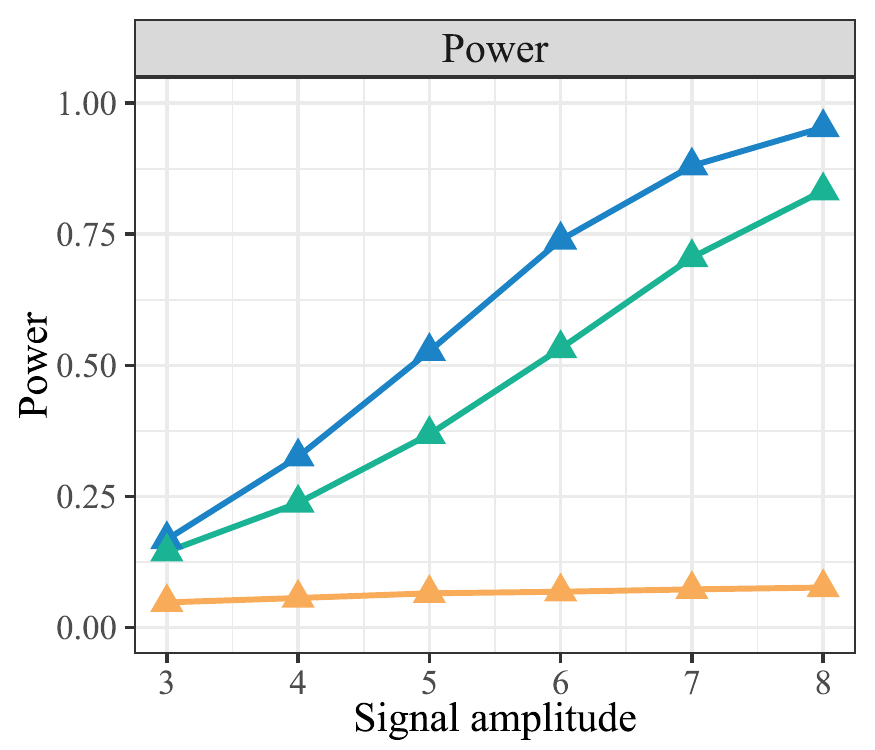}
  \end{minipage}
%  \hfill
  \begin{minipage}{0.45\textwidth}
    \centering
  	\includegraphics[width =.8 \textwidth]{./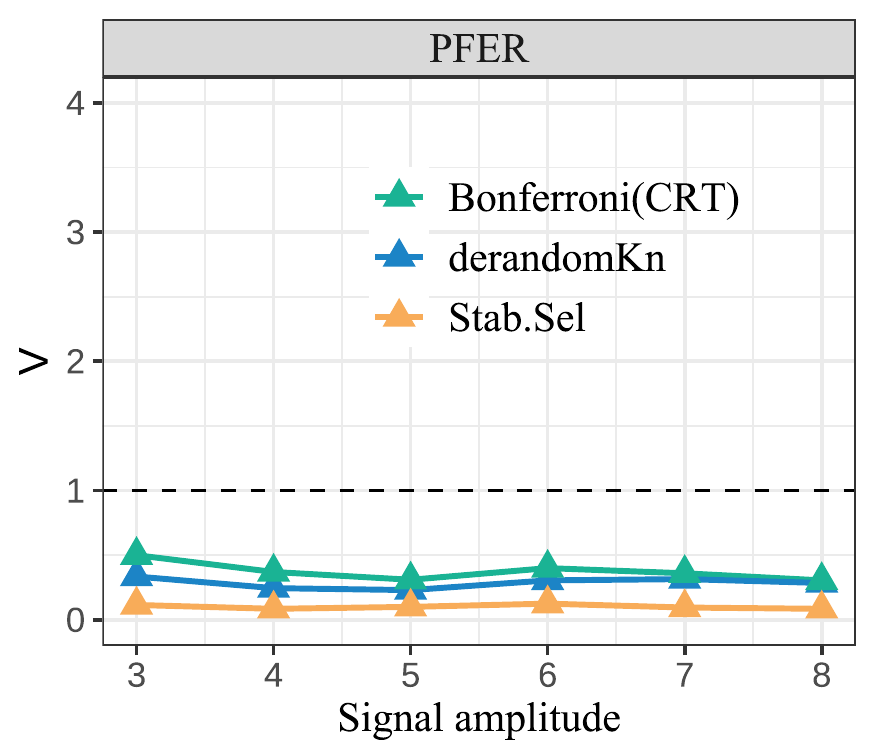}
  \end{minipage}
  \caption{Results from \algoname, stability selection and the
    Bonferroni correction with CRT p-values. The simulation setting is
    the same as that in Figure \ref{fig:pfer_amp_gaussian_within}.}
	\label{fig:pfer_amp_gaussian}
\end{figure}

\paragraph{Large-scale study}
In this case, we see from Figure \ref{fig:pfer_amp_gaussian_large}
that the three procedures control the PFER (up to
fluctuations). Derandomized knockoffs has a higher power than the
other methods.  The Bonferroni correction suffers a power loss because
of the low quality of the p-values. Once again, stability selection
loses power due to sample splitting and a conservative choice of
threshold.

\begin{figure}
	\centering
  \begin{minipage}{.45\textwidth}
    \centering
	\includegraphics[width = .8\textwidth]{./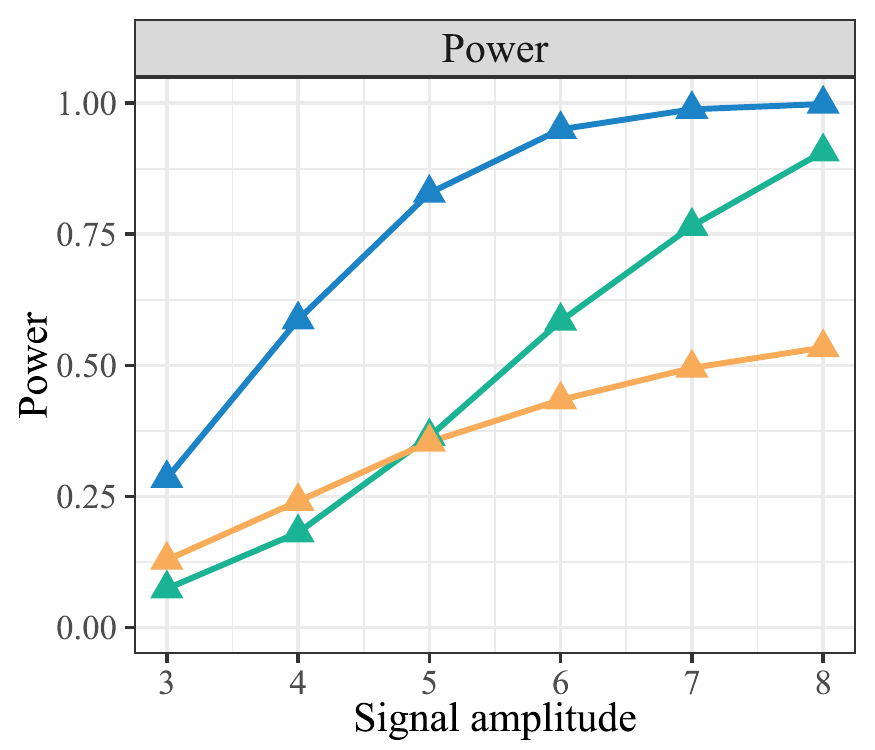}
  \end{minipage}
%  \hfill
  \begin{minipage}{.45\textwidth}
  \centering
	\includegraphics[width = .8\textwidth]{./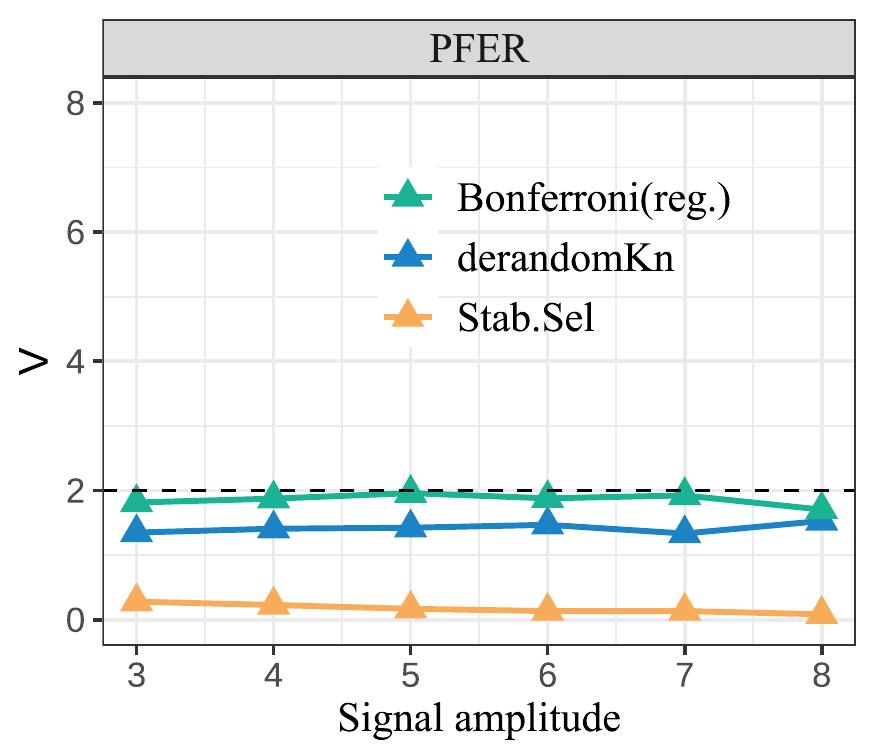}
  \end{minipage}
  \caption{Results from \algoname, stability selection and the Bonferroni correction with regression p-values.  The simulated setting is the same as that in Figure \ref{fig:pfer_amp_gaussian_large_within}.}
	\label{fig:pfer_amp_gaussian_large}
\end{figure}

\subsection{$k$-FWER control}
We now turn our attention to evaluating $k$-FWER control, following
the set of simulation settings from Section \ref{sec:fwer}.  In the
small-scale experiment, we compare \algoname~with stability selection
(the implementation is the same as in Section
\ref{sec:numerical_pfer}) and the Bonferroni's method where the
p-values are computed via CRT.  In the large-scale experiment, it is
computationally very expensive to compute the CRT p-values as observed
earlier. How about other p-values? The truth is that p-value based
methods face a serious problem since, to the best of our knowledge, it
is totally unclear how to compute valid p-values for conditional
hypothesis testing; that is, for determining whether $\beta_j = 0$.
The reader may consider that in our high-dimensional logistic
regression setting, the maximum likelihood estimator does not even
exist.\footnote{Even if we were to reduce the dimensionality,
  classical high-dimensional likelihood theory is plain wrong
  \citep{candes2020phase} so that the p-value based methods are
  extremely problematic.}  This is the reason why in the large-scale
experiment, we forego the comparison with the Bonferroni procedure and
only compare with the stability selection procedure (with the same
parameters used in the small-scale experiment).

Figure \ref{fig:fwer_amp_gaussian} concerns the small-scale study and
show the realized power and FWER.  Figure
\ref{fig:fwer_amp_gaussian_large} displays the same statistics in the
large-scale study.
% We see that FWER control is achieved and that the power of
% \algoname~nearly matches that of Bonferroni's method (with CRT
% p-values). 
As before, we see that both procedures control the FWER. We can also
see that \algoname~yields a much higher power.

\begin{figure}[ht]
	\centering
  \begin{minipage}{.45\textwidth}
    \centering
	  \includegraphics[width = 0.8\textwidth]{./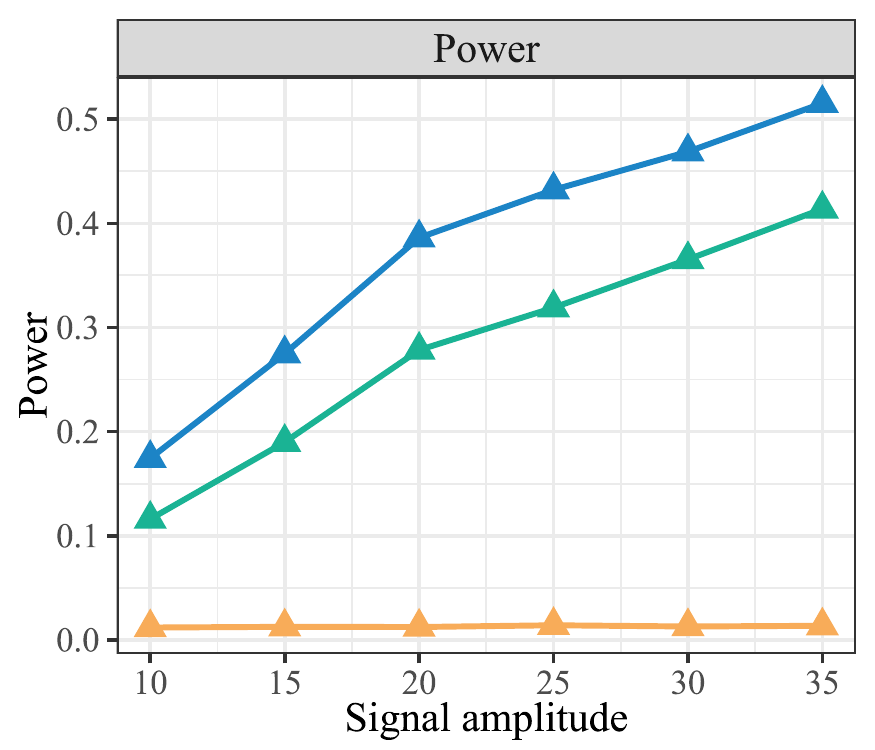}
  \end{minipage}
  \begin{minipage}{.45\textwidth}
    \centering
  	\includegraphics[width = 0.8\textwidth]{./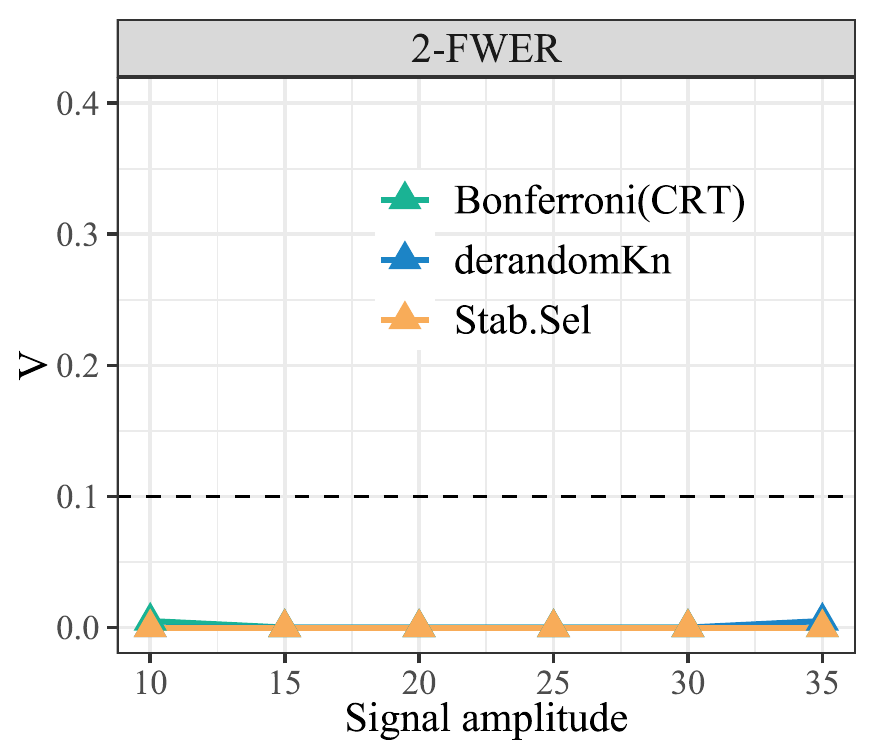}
  \end{minipage}
  \caption{Results from \algoname~and stability selection.  
    The simulation setting is the same as that in Figure
    \ref{fig:fwer_amp_gaussian_within}.}
	\label{fig:fwer_amp_gaussian}
\end{figure}

\begin{figure}[ht]
	\centering
  \begin{minipage}{.45\textwidth}
    \centering
	  \includegraphics[width = 0.8\textwidth]{./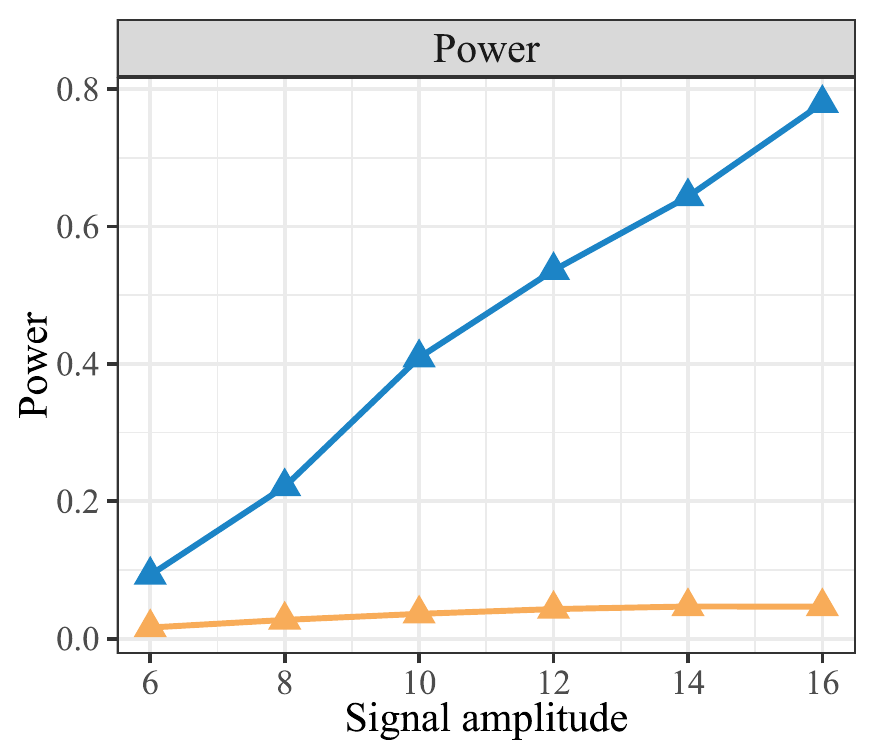}
  \end{minipage}
  \begin{minipage}{.45\textwidth}
    \centering
	  \includegraphics[width = 0.8\textwidth]{./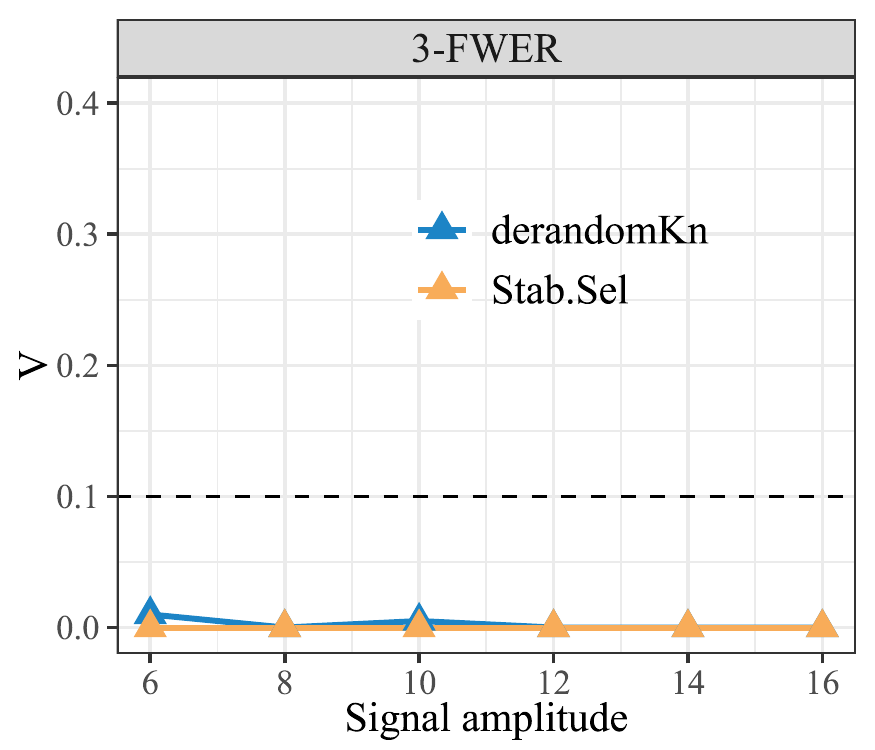}
  \end{minipage}
  \caption{Results from \algoname~and stability selection. 
  The simulation setting is the same as that in Figure
    \ref{fig:fwer_amp_gaussian_large_within}.}
	\label{fig:fwer_amp_gaussian_large}
\end{figure}

\section{Application to multi-stage GWAS}
\label{sec:application}
\subsection{Background}
%GWAS is often conducted in stages (see e.g., \citet{kote2011seven,thomas2009multistage,lambert2013meta}). 

The main goal of GWAS 
% \ejc{The aim of GWAS or the the goal of GWAS} 
is to detect single-nucleotide polymorphisms (SNPs) associated with
certain phenotypes. The task is commonly carried out in multiple
stages, see e.g.,
\cite{kote2011seven,thomas2009multistage,lambert2013meta}. The purpose
of the early stages is often exploratory so that researchers tend to
consider more liberal type-I error criteria (such as FDR) to allow for
the inclusion of more candidates.  The end-stage study is, in
contrast, confirmatory, thus asking for a more stringent type-I error
criterion (such as FWER). Informally, we can say that the early stage
study narrows down the choices to a subset of ``candidate SNPs'',
whereas the end-stage study pins down the final discoveries.  Figure
\ref{fig:multistage_gwas} provides a pictorial description of a
typical multi-stage GWAS workflow. Here, we would like to apply
derandomized knockoffs to the end stage, paving the way to reliable
and stable decision making.

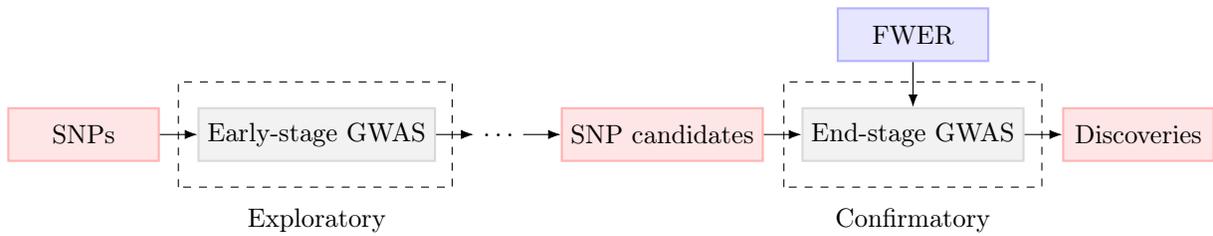
\begin{figure}[ht]
	\centering
  \begin{tikzpicture}[scale = 0.7,
    bluenode/.style = {rectangle,draw=blue!30, fill=blue!10, thick,minimum width=20mm,minimum height=7mm},
    rednode/.style = {rectangle,draw=red!30, fill=red!10, thick,minimum width=20mm,minimum height=7mm},
    graynode/.style = {rectangle,draw=gray!30, fill=gray!10, thick,minimum width=20mm,minimum height=7mm},
%    dashnode/.style = {rectangle,draw=gray,dashed,thick,minimum width=40mm,minimum height=15mm},
    ]
  %Nodes
    \node[rednode] (snp) {SNPs};
    \draw[dashed](1.8,-1) rectangle (7,1); 
%    \node[dashnode] [right=5mm of snp] {};
    \node[graynode] (stage1) [right=5mm of snp] {Early-stage GWAS};
    \node [below=5mm of stage1] {Exploratory};
    \node (mid) [right=5mm of stage1] {$\ldots$};
    \node[rednode] (C) [right=5mm of mid] {SNP candidates};
    \node[graynode] (stage2) [right=5mm of C] {End-stage GWAS};
    \draw[dashed](13.3,-1) rectangle (18.2,1); 
    \node[rednode] (dis) [right=5mm of stage2] {Discoveries};
    \node [below=5mm of stage2] {Confirmatory};
    \node[bluenode] (fwer) [above=6mm of stage2] {FWER};
  %Lines
    \draw[-Latex] (snp) -- (stage1);
    \draw[-Latex] (stage1) -- (mid);
    \draw[-Latex] (mid) -- (C);
    \draw[-Latex] (C) -- (stage2);
    \draw[-Latex] (stage2) -- (dis);
    \draw[-Latex] (fwer) -- (stage2);
	\end{tikzpicture}
	\caption{A typical workflow of multi-stage GWAS.}
	\label{fig:multistage_gwas}
\end{figure}
%Letting $X=(X_1,\ldots,X_p)$ denote the SNPs in the pool and $Y$ the phenotype, our goal is to test the hypotheses $\calH_j$'s as stated in \eqref{eq:hypothesis}.
%The early-stage GWAS screens the SNPs and reduce the SNPs of interest to a subset of ``candidate SNPs'', $C\subset[p]$ (note that the set $C$ does not necessarily contain all of the non-null SNPs). 
%We then apply \algoname~to test the hypotheses $\calH_j$'s for all $j\in C$ simultaneously with suitable FWER control.

\subsection{Answering the \emph{same} question in stages}

We pause to discuss challenges associated with multi-stage
studies. Although our methods apply regardless of the relationship
between a phenotype $Y$ and genetic variants $X_1, \ldots, X_p$, and
always yield type-I error control, it may simplify the discussion to
consider a standard linear model relating the quantitative $Y$ to $X$
to bring the reader onto familiar grounds (recall this is purely
hypothetical).  Consider a geneticist who has genotyped a number of
sites. She wants to know whether the coefficient $\beta_j$ associated
with the variant $X_j$ vanishes or not.  Suppose now that in a first
stage---e.g.~after analyzing the results of a first study---she thins
out the list of possibly interesting variants, those for which she
suspects $\beta_j$ may not be zero. In a later confirmatory study, we
want her to determine whether the coefficients of the screened
variables in a model that still includes all the variants
$X_1, \ldots, X_p$ she was originally interested vanish or not. It
might be tempting to test in the second stage whether coefficients
vanish in the {\em reduced model} only including those variables that
passed screening. However, note that this would lead to test
hypotheses that are different from those we started with, not merely a
subset of them.  To bring this point home, imagine that only one
variable passed screening. Then in the second stage, this strategy
would lead to test a {\em marginal} test of hypothesis, which is not
what our geneticist wants (she wants a {\em conditional} test). This
change of hypotheses so strongly influenced by the random results of
the selection of the first stage seems hardly coherent with the goal
of the scientific study. (For more discussion about full versus
reduced model inference, we refer the reader to
\citet{wu2010screen,wasserman2009high,barber2019knockoff} and
\citet{fan2008sure,voorman2014inference,belloni2014inference,ma2017semi}.)

In light of this, this paper proposes a pipeline for multi-stage GWAS
that answers the \emph{same} question throughout the stages. More
specifically, from the very first stage, we are committed to testing
the conditional independence hypothesis:
\begin{align*}
\calH_j:~Y~\indep~X_j\mid X_{-j},
\end{align*}
where $X_{-j}$ corresponds to \emph{all} the SNPs except $X_j$. This
means that if $\calC$ is the candidate set selected by previous
stages, we test $\calH_j$ for each $j\in \calC$ in the end stage.  We
do this by applying \algoname, which controls type-I errors regardless
of the procedures used in the previous stages. This is very different
from existing approaches which switch the inferential target and would
test whether a variable $j \in \calC$ is significant in a model that
only includes variables in $\calC$
\citep{lee2013exact,tibshirani2016exact,tian2018selective,fithian2014optimal,barber2019knockoff}.

\subsection{A synthetic example with real genetic covariates}
\label{sec:syntheticex}
We now rehearse the pipeline for multi-stage GWAS and showcase the
performance of \algoname~using a synthetic example with \emph{real
  genetic covariates}.  Recall that the key assumption underlying the
knockoffs procedure is the knowledge of the distribution of $X$.  For
real data analysis, the distribution of $X$ can only be approximated,
which raises the concern of whether such an approximation invalidates
FWER control.  With this in mind, we set up a simulation with real
covariates and synthetic phenotypes generated from a known
model.  The knockoffs are constructed with the approximated
distribution of $X$ and we verify whether or not FWER is controlled at
the desired level since we know the true conditional model. (Recall
that FWER constrol does not in any away depend upon the unknown
relationship between $Y$ and $X$.)

\newcommand{\Var}{\operatorname{Var}}

\paragraph{Data acquisition}
We obtain the covariate matrices by subsetting the genoypte matrix
from the UK Biobank data set \citep{bycroft2018uk}, where we only
include the $9{,}537$ SNPs from chromosome $22$.  A subset of
$n=6,000$ samples is drawn for the early-stage GWAS and $20$ other
disjoint subsamples~(each of size $n=3{,}000$)~for the end-stage GWAS
(see Figure \ref{fig:realx} for illustration), where all samples are
unrelated British individuals.  The FWER and power reported are
averaged over these $20$ independent subsets.  Conditional on $X$, the
variable $Y$ is generated from a linear model:
\begin{align}
	Y \mid X_1,\ldots,X_p \sim \N(\beta_1X_1+\ldots+\beta_pX_p,1).
\end{align}
The $40$ non-null features are equally divided into $20$ clusters,
where the clusters are evenly spaced on the chromosome.  The
magnitudes and signs of the coefficients vary across clusters, and are
the same within each cluster.  The relative absolute values of the
magnitudes are chosen uniformly at random so that the ratio between
the smallest and the largest is $1/19$; the signs of the coefficients
are determined by independent coin flips.  The heritability, defined
as $h^2 :=\Var(X^\top \beta)/\Var(Y)$, is used as a control parameter;
informally, this is the fraction of the variance of the phenotype
explained by genetic factors.

%\begin{figure}[ht]
%\centering
%\includegraphics[width=.8\textwidth]{./}
%\caption{Splitting the UK Biobank dataset and obtaining covariate matrices for the simulation.}
%\label{fig:realx}
%\end{figure}

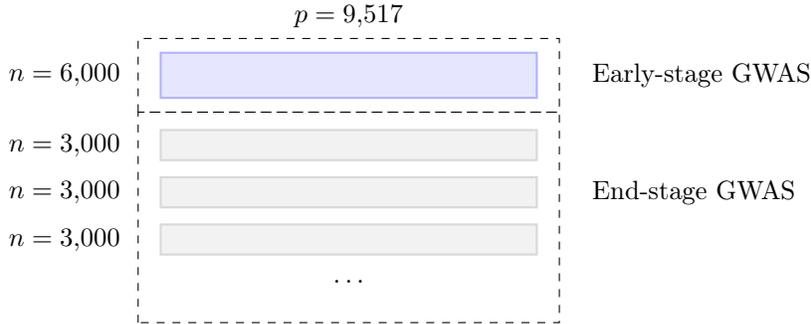
\begin{figure}[ht]
	\centering
  \begin{tikzpicture}[scale = 0.7,
    bluenode/.style = {rectangle,draw=blue!30, fill=blue!10, thick,minimum width=50mm,minimum height=6mm},
    graynode/.style = {rectangle,draw=gray!30, fill=gray!10, thick,minimum width=50mm,minimum height=4mm},
    ]
  %Nodes
    \node[bluenode] (pre) {};
    \node [above=2mm of pre] {$p=9{,}517$};
    \node [left=4mm of pre] {$n=6{,}000$};
    \node [right=6mm of pre] {Early-stage GWAS};
    \draw[dashed](-4,-0.7) rectangle (4,0.7);
    \draw[dashed](-4,-4.7) rectangle (4,-0.7);
    \node[graynode] (end1) [below=4mm of pre] {};
    \node [left=4mm of end1] {$n=3{,}000$};
    \node[graynode] (end2) [below=2mm of end1] {};
    \node [right=6mm of end2] {End-stage GWAS};
    \node [left=4mm of end2] {$n=3{,}000$};
    \node[graynode] (end3) [below=2mm of end2] {};
    \node [left=4mm of end3] {$n=3{,}000$};
    \node [below=2mm of end3] {$\ldots$};
	\end{tikzpicture}
\caption{Splitting the UK Biobank data set and obtaining covariate matrices.}
\label{fig:realx}
\end{figure}
\paragraph{Pre-processing and clustering}
We follow the pre-processing steps in \citet{sesia2020multi}, and only
consider the biallelic SNPs with minor allele frequency above $0.1\%$
and in Hardy-Weinberg equilibrium ($10^{−6}$).  A typical challenge in
GWAS is the presence of linkage disequilibrium, which makes it
essentially difficult to pin-point the important SNPs.  To address
this difficulty, we use the treatment suggested in
\citet*{sesia2020multi} and partition the genetic variants into
groups---we shall use the words ``groups'' and ``clusters''
interchangeably---via adjacency-constrained hierarchical clustering.
After clustering, we treat the clusters as the inferential object and
test whether or not a cluster of SNPs is independent of the phenotype
conditional on all other clusters of SNPs.  Formally, let $G$ be a
partition of $[p]$. For each cluster $g\in G$, the hypothesis to be
tested is
\begin{align}
  \calH_{g}:~Y~\indep~X_g\mid X_{-g}.\label{eq:group_cond_ind}
\end{align}
For example, suppose as in Figure \ref{fig:cluster} that we have $12$ SNPs
divided into four clusters. With $g = \{5,6\}$, hypothesis 
\eqref{eq:group_cond_ind} translates into
$(X_5,X_6)~\indep~Y\mid X_{-\{5,6\}}$.  
The sizes of the clusters determine how fine our discoveries are.  In
the experiment we consider five levels of resolution: 
$2\%$, $10\%$, $20\%$, $50\%$ and $100\%$, where the resolution is
defined as the number of groups divided by the number of SNPs.
\begin{figure}[ht]
	\centering
  \begin{tikzpicture}[scale = 0.7,
    bluenode/.style = {circle,draw=blue!20, fill=blue!0, very thick, minimum size = 0.5mm},
    bluenodef/.style = {circle,draw=blue!20, fill=blue!10, very thick, minimum size = 0.5mm},
    ]
  %Nodes
    \node(0)  {};
    \node[bluenode] (1) [right=5mm of 0] {};
    \node[below = 0.5mm of 1] {1};
    \node[bluenode] (2) [right=5mm of 1] {};
    \node[below = 0.5mm of 2] {2};
    \node[bluenode] (3) [right=5mm of 2] {};
    \node[below = 0.5mm of 3] {3};
    \node[bluenode] (4) [right=5mm of 3] {};
    \node[below = 0.5mm of 4] {4};
    \node[bluenodef] (5) [right=5mm of 4] {};
    \node[below = 0.5mm of 5] {5};
    \node[bluenodef] (6) [right=5mm of 5] {};
    \node[below = 0.5mm of 6] {6};
    \node[bluenode] (7) [right=5mm of 6] {};
    \node[below = 0.5mm of 7] {7};
    \node[bluenode] (8) [right=5mm of 7] {};
    \node[below = 0.5mm of 8] {8};
    \node[bluenode] (9) [right=5mm of 8] {};
    \node[below = 0.5mm of 9] {9};
    \node[bluenode] (10) [right=5mm of 9] {};
    \node[below = 0.5mm of 10] {10};
    \node[bluenode] (11) [right=5mm of 10] {};
    \node[below = 0.5mm of 11] {11};
    \node[bluenode] (12) [right=5mm of 11] {};
    \node[below = 0.5mm of 12] {12};
    \node(13) [right=5mm of 12] {};
  %Lines
    \draw (0) -- (1);
    \draw (1) -- (2);
    \draw (2) -- (3);
    \draw (3) -- (4);
    \draw (4) -- (5);
    \draw (5) -- (6);
    \draw (6) -- (7);
    \draw (7) -- (8);
    \draw (8) -- (9);
    \draw (9) -- (10);
    \draw (10) -- (11);
    \draw (11) -- (12);
    \draw (12) -- (13);
  %Braces
    \draw [decorate,decoration={brace},xshift=0pt,yshift=-4pt]
      (0.6,0.5) -- (5.3,0.5) node [black,midway,xshift=0cm,yshift = .3cm]
      {\footnotesize $g_1$};
    \draw [decorate,decoration={brace},xshift=0pt,yshift=-4pt]
      (5.7,0.5) -- (7.8,0.5) node [black,midway,xshift=0cm,yshift = .3cm]
      {\footnotesize $g_2$};
    \draw [decorate,decoration={brace},xshift=0pt,yshift=-4pt]
      (8.2,0.5) -- (11.5,0.5) node [black,midway,xshift=0cm,yshift = .3cm]
      {\footnotesize $g_3$};
    \draw [decorate,decoration={brace},xshift=0pt,yshift=-4pt]
      (11.8,0.5) -- (15.3,0.5) node [black,midway,xshift=0cm,yshift = .3cm]
      {\footnotesize $g_4$};
  \end{tikzpicture}
	\caption{Visual illustration of the partition of SNPs.}
	\label{fig:cluster}
\end{figure}
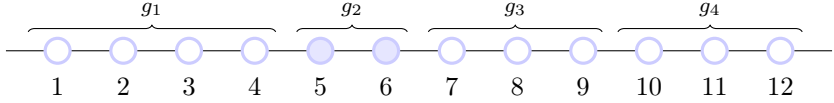

\paragraph{Early-stage GWAS}
We apply two procedures to identify candidate hypotheses:
\begin{enumerate}
\item The first is the group model-X knockoffs procedure
  \citep{sesia2020multi} with FDR controlled at level $0.5$. The
  selected clusters then become the candidate SNP clusters for the end
  stage.
\item The second is motivated by the practice of early-stage studies,
  which often only provide summary statistics such as p-values and,
  lead to the selection of candidate SNPs via p-value cutoffs. To evaluate the
  performance of \algoname~in such a context, we screen candidate SNPs
  as follows: compute a marginal regression p-value for each SNP and
  select those SNPs with a p-value below
  $5\times 10^{-6}~(\approx 0.05/p)$.  Then a SNP cluster is
  considered a ``candidate'' if at least one of its SNPs passes the
  p-value threshold.
\end{enumerate}
Figure \ref{fig:multistage_gwas_kn} summarizes the basic steps of our
hypothetical multi-stage GWAS.
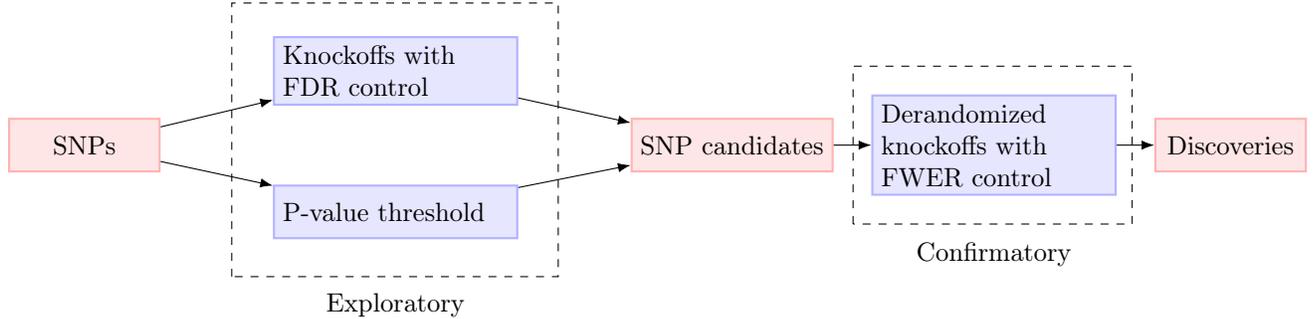
\begin{figure}[ht]
	\centering
  \begin{tikzpicture}[scale = 0.7,
    bluenode/.style = {rectangle,draw=blue!30, fill=blue!10, thick,minimum width=20mm,minimum height=7mm},
    rednode/.style = {rectangle,draw=red!30, fill=red!10, thick,minimum width=20mm,minimum height=7mm},
    graynode/.style = {rectangle,draw=gray!30, fill=gray!10, thick,minimum width=20mm,minimum height=7mm},
%    dashnode/.style = {rectangle,draw=gray,dashed,thick,minimum width=40mm,minimum height=15mm},
    ]
  %Nodes
    \node[rednode] (snp) {SNPs};
    \draw[dashed](2.8,-2.5) rectangle (9,2.7);
    \node (auxnode) [right=5mm of snp,text width=5cm] {};
    \node[bluenode] (stage10) [above=4mm of auxnode,text width=3cm] {Knockoffs with FDR control};
    \node[bluenode] (stage11) [below=4mm of auxnode,text width =3cm] {P-value threshold};
    \node [below=6mm of stage11] {Exploratory};
    \node[rednode] (C) [right=5mm of auxnode] {SNP candidates};
    \node[bluenode] (stage2) [right=5mm of C,text width=3cm] {Derandomized knockoffs with FWER control};
    \draw[dashed](14.6,-1.5) rectangle (19.9,1.5); 
    \node[rednode] (dis) [right=5mm of stage2] {Discoveries};
    \node [below=5mm of stage2] {Confirmatory};
  %Lines
    \draw[-Latex] (snp) -- (stage10);
    \draw[-Latex] (snp) -- (stage11);
    \draw[-Latex] (stage10) -- (C);
    \draw[-Latex] (stage11) -- (C);
    \draw[-Latex] (C) -- (stage2);
    \draw[-Latex] (stage2) -- (dis);
	\end{tikzpicture}
	\caption{Schematic representation of the multi-stage GWAS used
          in the numerical study.}
	\label{fig:multistage_gwas_kn}
\end{figure}

\paragraph{Knockoffs and feature importance statistics construction}
Let $\calG$ be the set of candidate SNP \emph{clusters} identified
during the earlier stage. We proceed to construct a \emph{conditional
  knockoff} copy \emph{only} for $X_{\calG}$: the knockoff copy
$\tX_{\calG}$ satisfies the knockoff properties conditional on
$X_{-\calG}$; that is,
$\tX_{\calG}~\indep~ Y\mid (X_{\calG},X_{-\calG})$ and
\begin{align}
  \label{eq:cond_ex}
  (X_{\calG},\tX_{\calG})_{\text{swap}(g)}|X_{-\calG} \stackrel{\mathrm{d}}{=} (X_{\calG},\tX_{\calG})|X_{-\calG},
\end{align}
where $g$ denotes a SNP cluster that belongs to $\calG$. Above, $(X_{\calG},\tX_{\calG})_{\text{swap}(g)}$ is 
obtained from $(X_{\calG},\tX_{\calG})$ by swapping the features $X_j$ and $\tX_j$ for each SNP $j\in g$.
We model the distribution of $X$ via a hidden Markov model (HMM), and
construct (group) HMM knockoffs based on a variant of an existing
procedure \citep{sesia2019gene,sesia2020multi}.  The HMM used here
describes the distribution of SNPs in unrelated British individuals
\citep{sesia2020multi}.\footnote{See \cite{sesia2020controlling} for
  more sophisticated models capable of handling population structure.}
Details about the construction of $\tX_\calG$ are deferred to Appendix
\ref{Sec:construct_kn}.  As for the feature importance statistic, we
use the (group) LCD statistics introduced in \citet{sesia2020multi}.

%\paragraph{Computing feature importance statistics}
%The feature importance statistic $W_c$ for each $c\in C$ is then computed by running the Lasso on the augmented dataset $(\bmx_C,\bmtx_C,\bmx_{-C},\bmy)$
%\begin{align}
%  \min_{\bm \beta\in \R ^{2p_1}, \bm \gamma \in\R ^{p-p_1} }~
%  \dfrac{1}{2}\|\bmy - \bmx_C \bm \beta-\bmtx_C\widetilde{\bm \beta} - \bmx_{-C} \bm \gamma\|_2^2
%  + \lambda(\|\bm \beta\|_1+\|\widetilde{\bm \beta}\|_1+\|\bm \gamma\|_1), 
%\end{align}
%where $p_1$ is the number of SNPs (not the clusters) in $C$ and the penalty term $\lambda$ is determined by $10$-fold cross-validation. Let $[\bm \beta^{\star},\widetilde{\bm \beta}^{\star},\bm \gamma^{\star}]$ denote the solution and the statistics $W_c \defn \sum_{j\in c}| \beta^{\star}_j| - \sum_{j\in c}|\widetilde{\beta}^{\star}_{j}|$ for all $c\in C$. 
%\yutingcomment{explain notation}

\paragraph{Results} 
Figure \ref{fig:pval_illustration} presents the marginal p-values for
a subset of SNPs in the form of a Manhattan plot, where it is easily
seen that the selected SNPs are clustered together.  Figure
\ref{fig:real_x_result_2} shows the realized power, end-stage power
(the number of true discoveries divided by the number of non-nulls
selected by pre-GWAS) and FWER with different resolutions in the case
where the screening step thresholds marginal p-values.  Figure
\ref{fig:real_x_result_1} presents the same statistics in the case
where the screening step uses knockoffs.  In both cases, it can
readily be observed that even with an approximate distribution of $X$,
our procedure successfully controls the FWER. Furthermore, while the
candidates selected by p-value thresholding are clustered together
(which potentially affects power), the two-stage procedure still makes
a reasonable number of findings. 

\begin{figure}
  \centering
  \includegraphics[width=\textwidth]{./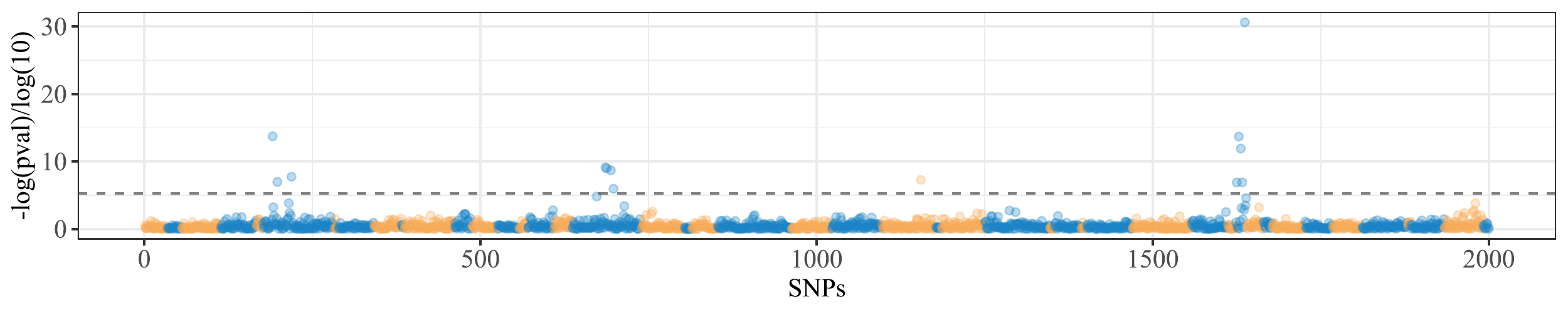}
  \caption{The Manhattan plot of a subset of $2000$ SNPs in the
    synthetic example. The colors correspond to the clusters at 2\%
    resolution.}
  \label{fig:pval_illustration}
\end{figure}

\begin{figure}[ht]
\centering
\begin{minipage}{1\textwidth}
\includegraphics[width = \textwidth]{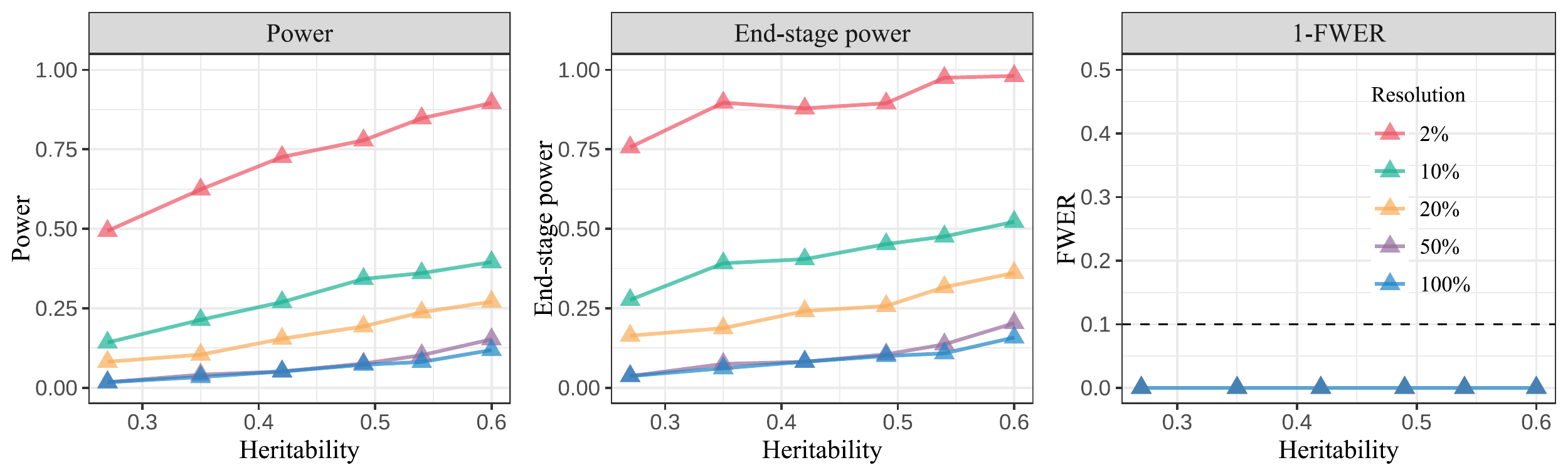}
\caption{Realized power (left), end-stage power (middle) and FWER (right) in the synthetic multi-stage GWAS. The early-stage GWAS selects SNPs with p-values less than $5\times 10^{-6}$. The target FWER level of \algoname~is $0.1$.}
\label{fig:real_x_result_2}
\end{minipage}

\begin{minipage}{1\textwidth}
\includegraphics[width = \textwidth]{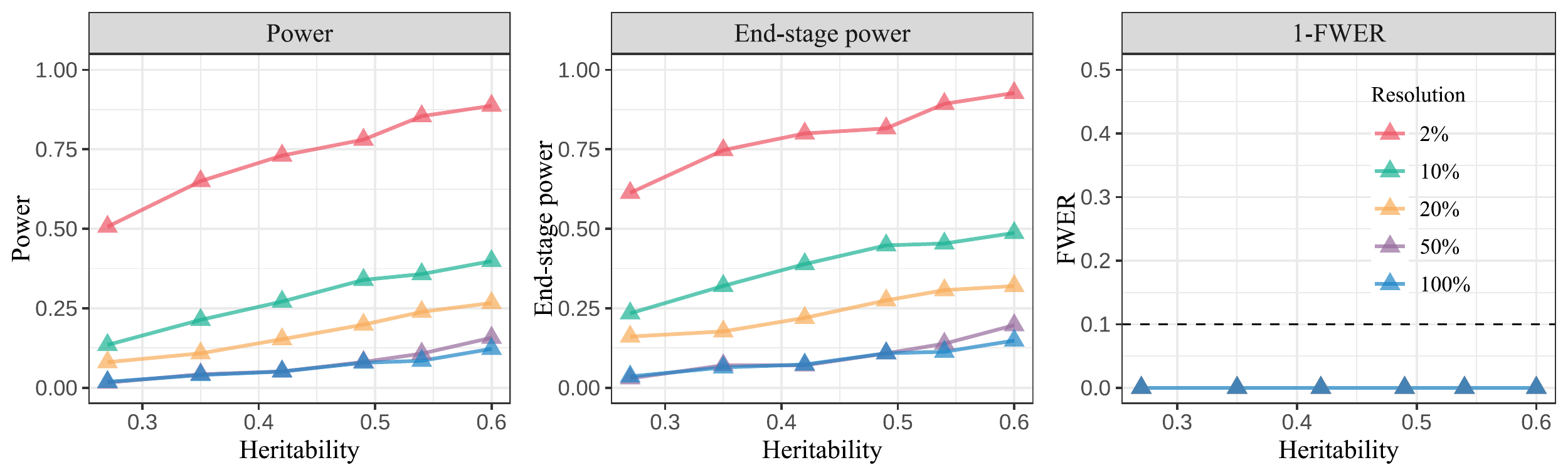}
\caption{Realized power (left), end-stage power (middle) and FWER (right) in the synthetic multi-stage GWAS. The early-stage GWAS uses model-X knockoffs with target FDR level $0.5$. The target FWER level of \algoname~is $0.1$.}
\label{fig:real_x_result_1}
\end{minipage}
\end{figure}

\subsection{End-stage GWAS of prostate cancer}
We finally apply our procedure to an end-stage GWAS of prostate
cancer.  We take the meta-analysis conducted by
\cite{schumacher2018association} as the early-stage study, and apply
\algoname~on a data set from UK Biobank for a confirmatory analysis.
 The UK biobank data set contains genetic information on $161$K
unrelated British male individuals and their disease status, 
i.e., whether or not a participant has reported being diagnosed with
prostate cancer.
%\ejc{I don't get
%this. Why do we talk about having $189$K females then?}
%\mycomment{You are right! I have rephrased this part. }
%Since only male can get prostate cancer, we will consider the subset containing
%only the male individuals in the following analysis.

After selecting p-values from
\citet{schumacher2018association} below $10^{-3}$, we end up with
$4072$ pre-selected SNPs. (The set of SNPs recorded in
\citet{schumacher2018association} can be different from that in the UK
Biobank data set. Here, we only consider the intersection of the two
sets.)

As in our earlier numerical study, the next step is to partition
\emph{a priori all} the SNPs into clusters at a level of resolution
%two different levels of resolution, namely, 
$2\%$. The resulting average length of the clusters is $0.226$ Mb.  A
cluster is called a candidate cluster if at least one of its SNPs is a
candidate SNP.  Ten runs of conditional group HMM knockoffs are
constructed for the candidate clusters. We compute the group LCD statistics
as in the synthetic example from Section
\ref{sec:syntheticex}. Six additional covariates, namely, age and the top five principal
components of the genotypes are included in the knockoffs
predictive model as follows: instead of using the phenotypes as the
response, we use the residuals of the
phenotypes {\em after} regressing out these six additional covariates.
The inclusion of these covariates allows us to account for the
(remaining) population structure in the data, which increases the
detection power.  Finally, we apply \algoname~with
target FWER level $0.1$.  %Table
%\ref{tab:num_discovery} lists the number of discoveries made by
%\algoname~at different resolutions, whereas 
Table \ref{tab:dis_res2} provides detailed information on the final list of clusters when the
resolution is $2\%$.

We compare our findings with those from the existing literature. Since
our discoveries are SNP clusters and different studies may contain
different sets of SNPs, we cannot directly compare the results across
different studies.  Here we consider findings to be confirmed by
another study if the latter reports a SNP whose position is within the
genomic locus spanned by a cluster we discovered.  With this, it turns
out that \emph{all} of our $8$ findings are confirmed by other
studies. Further, $7$ matches are exact in the sense that the leading
SNP of a discovered cluster is reported significant in the
literature. Specifically, clusters represented by rs12621278,
%rs10486567, 
rs1512268, 
%rs17762878, 
rs6983267, rs7121039, rs10896449
and rs1859962 are replicated by \citet{wang2015large}, which is a
large GWAS conducted in the Asian population; rs1016343 is confirmed
by \citet{hui2014study}---a study specifically investigating the
associations of six SNPs including rs1016343 in a Chinese population;
%rs4242382 is validated by \citet{zhao20148q24}, which is a
%meta-analysis specially studying the effect of SNP rs4242382 on
%prostate cancer; 
the association between rs7501939 and prostate cancer
is in \citet{elliott2010evaluation}, which is a study focusing on the
association of two SNPs including rs7501939 with several diseases. 
%The latter study include participants from multiple sources. 

What would happen if we were a little more liberal?  To find out, we
also run the \algoname set to control the $3$-FWER at level $0.1$: all
the SNPs discovered earlier appear in the new discovery set. The more
liberal procedure makes seven {\em additional} discoveries and the
corresponding SNPs are listed in Table \ref{tab:dis_res2_k3}.

%\begin{table}[ht]
%  \centering
%  \begin{tabular}{ccc}
%    \textbf{Resolution(\%)} & \textbf{\# of discoveries} & \textbf{Average cluster length (Mb)}\\
%    \hline
%    20 & 1 & 0.018\\
%    2 & 11 & 0.226
%  \end{tabular}
%  \caption{Number of discoveries made by \algoname~at different resolutions.}
%  \label{tab:num_discovery}
%\end{table}

\begin{table}[ht]
  \centering
  \begin{tabular}{ccccc}
    \textbf{Lead SNP} & \textbf{Chromosome} & \textbf{Position range (Mb)} & \textbf{Size} & \textbf{Confirmed by?}\\
    \hline
    rs12621278 & 2 & 173.28-173.58 & 68 & \citet{wang2015large}\\
    %rs10486567 & 7 & 27.56-28.04 & 87 & \citet{wang2015large}\\
    rs1512268 & 8 & 23.39-23.55 & 48 & \citet{wang2015large}\\
    %rs17762878 & 8 & 127.85-128.07 & 70 & \citet{wang2015large}*\\
    rs1016343 & 8 & 128.07-128.24 & 45 & \citet{hui2014study}\\
    rs6983267 & 8 & 128.40-128.47 & 37 &\citet{wang2015large}\\
    %rs4242382 & 8 & 128.47-128.56 & 34 &\citet{zhao20148q24}\\
    rs7121039 & 11 & 2.18-2.31 & 40 & \citet{wang2015large}*\\
    \hline
    rs10896449 & 11 & 68.80-69.02 & 62 &\citet{wang2015large}\\
    rs7501939 & 17 & 36.05-36.18 & 55 & \citet{elliott2010evaluation}\\
    rs1859962 & 17 & 69.07-69.24 & 40 & \citet{wang2015large}\\
    \end{tabular}
    \caption{Discoveries made by \algoname~at 2\% resolution and
      the target FWER level set to $0.1$. The asterisk indicates the
      following: the
      lead SNP is not found in the literature; however,  the
      literature reports a SNP within the position range of our reported cluster.}
  \label{tab:dis_res2}
\end{table}

\begin{table}[ht]
  \centering
  \begin{tabular}{ccccc}
  \textbf{Lead SNP} & \textbf{Chromosome} & \textbf{Position range (Mb)} & \textbf{Size} & \textbf{Confirmed by?}\\
  \hline
  rs905938 & 1 & 154.80-155.04 & 66 & \citet{wang2015large}*\\
  rs6545977 & 2 & 62.81-63.60 & 65 & \citet{wang2015large}*\\
  %rs12621278 & 2 & 173.28-173.58 & 68\\
  rs77559646 & 2 & 242.07-242.16 & 35 & \citet{kaikkonen2018ano7}\\
  rs2510769 & 4 & 95.28-95.60 & 59 & \citet{wang2015large}*\\
  rs10486567 & 7 & 27.56-28.04 & 87  &\citet{haiman2011characterizing}\\
  \hline
  %rs1512268 & 8 & 23.39-23.55 & 48\\
  rs17762878 & 8 & 127.85-128.07 & 70 & \citet{wang2015large}* \\
  %rs1016343 & 8 & 128.07-128.24 & 45\\
  %rs6983267 & 8 & 128.40-128.47 & 37\\
  rs4242382 & 8 & 128.47-128.56 & 34&\citet{zhao20148q24} \\
  %rs7121039 & 11 & 2.18-2.31 & 40\\
  %rs10896449 & 11 & 68.80-69.02 & 62\\
  %rs7501939 & 17 & 36.05-36.18 & 55\\
  %rs1859962 & 17 & 69.07-69.24 & 40\\
  \end{tabular}
  \caption{{\em Additional} discoveries made by \algoname~at 2\% resolution and
    the target $3$-FWER level set to $0.1$ (all the SNPs reported in Table~\ref{tab:dis_res2} is also discovered by this procedure). The setting is otherwise the same 
            as in Table \ref{tab:dis_res2}.}
  \label{tab:dis_res2_k3}
\end{table}

% \section{Prior art}
% \label{sec:prior}
% \input{prior-art}

\section{Discussion}
\label{sec:discussion}
We proposed a framework for derandomized knockoffs inspired by
stability selection. By exploiting multiple runs of the knockoffs
algorithm, our method offers a more stable solution for selecting
non-null variables. % which is particularly valuable in applications
% where false discoveries are less tolerable; e.g.~in drug development
% or gene knockout technology.
Leveraging a base procedure with controlled PFER, we show how to
achieve PFER and $\kFWER$ control, these being error metrics perhaps
more suitable than FDR for confirmatory stage studies
\citep{tukey1980we,goeman2011multiple} as well as more
resource-consuming applications such as end-stage GWAS
\citep{meijer2016multiple,sham2014statistical}, clinical trials
\citep{crouch2017controlling} and neuro-imaging
\citep{eklund2016cluster}.  Furthermore, we execute our methodology on
a GWAS example and and find that \emph{all} our findings are confirmed
by related studies.

% The most related to ours is a knockoff-based $k$-FWER controlling procedure introduced by \citet{janson2016familywise}. 
% We shall regard this procedure as one of the base procedures and examine the advantages of aggregating multiple runs of this procedure. 
% It is worth mentioning that the FWER-controlling procedures \citep{hochberg1987multiple} usually require a valid test for the single hypothesis, which, as pointed out by \citet{candes2018panning}, is highly nontrivial in our conditional-independence testing setting (specified in the sequel).
% \yutingcomment{move this out from discussion section}

\paragraph{Future work}
While the current paper empirically demonstrates enhanced statistical
power, it would be of interest to theoretically validate power gains,
at least in some simple settings.  Also, the machinery described here
is general and can be applied to a variety of base procedures---even
procedures that do not come with controlled PFER. One would,
therefore, ask whether our theoretical framework can be adapted to
accommodate such base procedures.  Finally, perhaps the most natural
question is whether our ideas can be adapted to the more liberal FDR
criterion or related error rates such as the false discovery
exceedence.

% Finally, although the
% effectiveness of our method has been validated on synthetic data, it
% would be interesting to verify its performance with experimental
% biologists on real data---for instance, whether the discoveries made
% here for the end-stage GWAS of prostate cancer contain any true
% biological signals. \ejc{I don't understand this either. What do you
% want to do with a biologist? Why don't we compare our findings with
% what has been reported in the literature, please see the work we have
% done in the new paper with Matteo.} 

%%%%%%%%%%%%%%%%%%%%%%%%%%%%%%%%%%%%%%%%%%%%%%%%%%%%%%%%%%%%%%%%%%%%%%%%%%%%%%5

\subsection*{Acknowledgment}
The authors would like to thank Malgorzata Bogdan, Lihua Lei,
and Chiara Sabatti for helpful discussions.  The authors also would like
to thank the PRACTICAL consortium, CRUK, BPC3, CAPS, PEGASUS for
providing GWAS summary statistics (more information can be found at
\url{http://practical.icr.ac.uk/blog/?page_id=8164}). Z.~R.~is
supported by the Math + X award from the Simons Foundation, the JHU
project 2003514594, the ARO project W911NF-17-1-0304 , the NSF grant
DMS 1712800 and the Discovery Innovation Fund for Biomedical Data
Sciences.  Y.~W.~is supported partially by the NSF grants CCF 2007911
and DMS 2015447. E.~C.~is partially supported by NSF via grants DMS
1712800 and DMS 1934578 and by the Office of Naval Research grant
N00014-20-12157.

\bibliographystyle{apalike}
\bibliography{ref}

\vspace{1cm}

% \clearpage
\appendix
\section{More details on $v$-knockoffs}
\label{Sec:details}
\subsection{Algorithm description}
Suppose at the $m$-th round, the feature importance statistic $\bm{W}^m$ has been computed. The $v$-knockoffs algorithm starts by ordering the features 
according to the magnitudes of $W^m_j$, i.e.
\begin{align*}
	|W^m_{r_1}|\geq \ldots\geq|W^m_{r_j}|\geq\ldots\geq|W^m_{r_p}|,
	\qquad \text{for some permutation }~ r_1,\ldots,r_p.
\end{align*}
Given a  pre-specified integer $v$, the procedure examines the features one by one from $r_1$ to $r_p$ according to the ordering above until the first time there are $v$ negative $W^m_j$'s. Formally, the stopping criterion is defined as 
\begin{align*}
	T_v^m \defn \min \, \Big\{k\in[p] \,\mid\, \sum^k_{j=1}\indc{W^m_{r_j}<0}\geq v \Big\}.
\end{align*}
The selected set is then defined to be the collection of the features examined before $T_v^m$ which have positive signs of feature importance statistics, i.e.,
\begin{align}
\label{eqn:vknockoffs}
  \hat{\calS}^m \defn \{r_j \,\mid\, j<T_v^m,~ W^m_{r_j}>0\}.
\end{align}
% The process is illustrated in the following Figure \ref{fig:framework_illustration}. 
\begin{figure}[h!]
	\centering
	\begin{tikzpicture}
	\filldraw[color=blue!60, fill=blue!5, very thick](0,0) circle (0.4);
	\node[align = center] at (0,0){$\bmx$};
	\draw[-Latex] (0.5,0) -- (1,0);
	\filldraw[color=blue!60, fill=blue!5, very thick](1.5,0) circle (0.4);
	\node[align = center] at (1.5,0){$\bmtx^m$};
	\draw[-Latex] (2,0) -- (2.5,0);
	\filldraw[color=blue!60, fill=blue!5, very thick](3,0) circle (0.4);
	\node[align = center] at (3,0){$\bm{W}^m$};
  \node[align = center] at (2.3,0.3){$\mathbf{Y}$};
	\draw[-Latex] (3.5,0) -- (6.5,0);
	\filldraw[color=blue!60, fill=blue!5, very thick](7,0) circle (0.4);
  \node[align = center] at (7,0){$\hat{\calS}^m$};
	\node[align = center] at (5,0.3){base procedure};
	\end{tikzpicture}
	\caption{Visual illustration of the base procedure.}
	
	\label{fig:framework_illustration}
\end{figure}
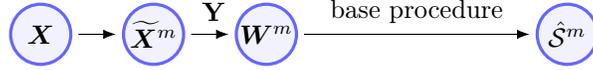

\subsection{Theoretical properties}
% Let us first review a basic property that our base procedure enjoys.
% Recall that $\mathrm{NB}(m,q)$ denotes a negative binomial random variable, which counts the number of successes before the $m$-th failure in a sequence of independent Bernoulli trials with success probability $q$. 
% One can make the following observation about the number of false discoveries of the $v$-knockoffs (cf.~\eqref{eqn:vknockoffs}). 

\begin{lemma}[\cite{janson2016familywise}]
\label{LemVknock}
% \mycomment{We shall probably restate this lemma, because we do not require a linear model in our framework.}
For any integer $v$, the number of false discoveries produced by the \emph{$v$-knockoffs} procedure 
\begin{align}
  V_m \defn \#\{ j \in \Null \cap \hat{\calS}^m \},
\end{align}
is stochastically dominated by a negative binomial variable
$\mathrm{NB}(v,1/2)$.
\end{lemma}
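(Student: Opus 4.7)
The plan is to leverage the sign-flipping property of valid knockoff statistics and then run a coupling argument against a hypothetical null-only version of the procedure. The key preparatory step is the standard knockoffs fact that, conditional on the magnitudes $\bigl(|W_1^m|,\ldots,|W_p^m|\bigr)$ and on the signs of the non-null $W_j^m$'s, the signs of the null $W_j^m$'s are mutually independent and uniform on $\{-1,+1\}$. This is a direct consequence of the exchangeability property \eqref{eq:exch} combined with the sign-flip invariance that any legitimate feature importance statistic must satisfy. After this step the magnitudes and the non-null signs are frozen, and only the null signs carry randomness.

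Next I would set up the coupling. Conditional on the magnitudes, the sorted ordering $r_1,\ldots,r_p$ is deterministic, so the null features appear in a fixed relative order within this list; call their signs $s_1,s_2,\ldots$ (i.i.d.\ Rademacher by the previous step). Consider the hypothetical ``null-only'' scan that walks down this null subsequence and stops at the $v$-th negative sign: it examines $T^{\mathrm{null}}$ nulls, exactly $v-1$ of which are negative within the prefix $s_1,\ldots,s_{T^{\mathrm{null}}-1}$, and would declare $V_0 := T^{\mathrm{null}}-v$ null-positives as discoveries. By construction, $V_0$ is the number of successes before the $v$-th failure in an i.i.d.\ Bernoulli$(1/2)$ sequence, hence $V_0 \sim \mathrm{NB}(v,1/2)$ exactly.

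Then I would compare $V_m$ with $V_0$ on the same sign realizations. Let $K$ denote the number of null features appearing among the first $T_v^m - 1$ examined features of the actual $v$-knockoffs scan. Since those $T_v^m-1$ features contain exactly $v-1$ negatives in total (counting both nulls and non-nulls), the number of null negatives among $s_1,\ldots,s_K$ is at most $v-1$. Because $T^{\mathrm{null}}-1$ is the largest index whose prefix contains at most $v-1$ null negatives, we obtain $K \le T^{\mathrm{null}} - 1$, and consequently $V_m$ (which counts positives among $s_1,\ldots,s_K$) is dominated by $V_0$ (which counts positives among $s_1,\ldots,s_{T^{\mathrm{null}}-1}$). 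This gives the pointwise inequality $V_m \le V_0$ on the coupled probability space, which immediately yields the claimed stochastic domination.

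The main obstacle is making the sign-flipping step fully rigorous and verifying that the relative ordering of null features by $|W^m|$ is measurable with respect to the chosen conditioning $\sigma$-algebra, so that ``the $i$-th null'' is unambiguous. Ties in the $|W_j^m|$'s should be broken by independent randomization, or ruled out under a continuity assumption on $W$. Once these housekeeping points are dispatched, the coupling argument itself is essentially one line.
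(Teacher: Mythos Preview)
Your argument is correct and is essentially the standard proof of this fact. The paper itself does not spell out a proof: it merely records the coin-flip property (conditional on $(|W_1^m|,\ldots,|W_p^m|)$, the null signs are i.i.d.\ fair coins) and defers to \cite{janson2016familywise} for the rest, noting only that the coin-flip property carries over to the model-X setting. Your coupling of the actual scan with the null-only scan is exactly how the stochastic domination is established in that reference, so there is no methodological difference to discuss; your write-up is simply a fleshed-out version of what the paper leaves implicit.
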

This lemma is originally stated in \cite{janson2016familywise} for
fixed knockoffs in linear models. However, the result is a direct
consequence of the coin-flip property (see~\citet[Lemma 3.3]{candes2018panning}); here the coin-flip property means that conditional on $(|W^m_1|,\ldots , |W^m_p|)$, the signs of the null $W^m_j$’s ($j \in \mathcal{H}_{0}$) are i.i.d.~coin flips.
Therefore, Lemma~\ref{LemVknock} is also applicable in the model-X setting.

\section{Proofs of the main results}
\subsection{Proof of Proposition \ref{prop:asst_ratio}}
\label{PfThmFB}
\paragraph*{Proof of (a)}

For every $j\in \calH_0$, the selection probability $\Pi_j$ is supported on $\{0,1/M,2/M,\ldots,1\}$. Letting $p_m:=\p\left(\Pi_j = m/M\right)$ for $m=0,1,\ldots,M$, we can express the ratio as 
\begin{align*}
  \frac{\p(\Pi_j\ge \eta)}{\E[\Pi_j]} = \Big(\sum_{m\ge \eta M}p_m\Big)/\Big(\sum^M_{m=0} p_m\cdot \dfrac{m}{M}\Big).
\end{align*}
By definition, $p_m\ge 0$ and $\sum^M_{m=0}p_m=1$. In addition, by the
coin-flip property (see~\citet[Lemma 3.3]{candes2018panning}),
$\E[\Pi_j]\le 1/2$ since $j\in \calH_{0}$, which translates into
$\sum^M_{m=0}p_m\cdot(m/M) \le 1/2$. Together with the monotonicity
assumption, the optimization problem can be written as
% \begin{align}

%   \max~ &\\
%     s.t.~& \\
%          & \\
%          & 
% \end{align}
\begin{equation}
  \label{eq:raw_opt}
  \begin{array}{ll}
    \text{maximize} & \quad \Big(\sum_{m\ge \eta M} p_m \Big) / \Big( \sum^M_{m=0} p_m\cdot \dfrac{m}{M}\Big) \\
    \text{subject to} & \quad p_m\ge 0,\\
                    & \quad p_{m-1}\ge p_m,~m\in[M],\\
                    & \quad \sum^M_{m=0} p_m =1,~\sum^M_{m=0} p_m\cdot \dfrac{m}{M}\le 1/2.
                    \end{array} 
\end{equation}
Setting $y_m := p_m/\Big( \sum^M_{m=0} p_m\cdot \dfrac{m}{M}\Big)$,
optimization problem~\eqref{eq:raw_opt} reduces to
\begin{equation*}
  \begin{array}{ll}
    \text{maximize} & \quad \sum_{m\ge \eta  M} y_m \\
    \text{subject to} & \quad y_m\ge 0,\\
                    & \quad y_{m-1}\ge y_m, ~m\in[M],\\
                    & \quad \sum^M_{m=0} y_m\cdot \dfrac{m}{M} =1,
                    \end{array} 
\end{equation*}
which proves (a). Here, we use the fact that
$$\sum^M_{m=0} p_m\cdot \dfrac{m}{M} \leq (\sum^M_{m=0} p_m) \cdot
(\frac{1}{M+1} \sum^M_{m=0} \frac{m}{M}) = \frac{1}{2}$$
holds for any non-increasing sequence $\{p_{m}\}$.

\paragraph*{Proof of (b), (c) and (d)}
\begin{itemize}
\item The proof of (b) is similar except that we replace the
  monotonicity condition by the condition on the partial summation.
\item For (c), note that if the pmf of $\Pi_j$ is unimodal where the
  mode is less than or equal to $\eta$ and
  $\p(\Pi_j = 0)\ge \p(\Pi_j = \ceil{\eta M}/M)$, then
  \eqref{eq:asst_partialsum} holds.
\item The proof of (d) is essentially the same as that of part (a), thus omitting here. 
\end{itemize}

%%%%%%%%%%%%%%%%%%%%%%%%%%%%%%%%%%%%%%%%%%%%%%
\subsection{Proof of Corollary \ref{cor:kfwer_genbnd}}
\label{sec:PfCorfdx}
For any monotonically non-decreasing and non-negative function
$h:\R\mapsto \R$, Markov's inequality gives 
\begin{align*}
  \p(V\geq k) &\leq \dfrac{\E [h(V)]}{h(k)}.
\end{align*}
By the definition of $V$, 
\begin{align}
  \label{eq:V_upperbnd}
  V = \sum_{j\in \calH_0} \Indc\{\Pi_j\ge \eta \} 
  \le \sum_{j\in \calH_0} \dfrac{\Pi_j}{\eta} 
  = \sum_{j\in\calH_0}\sum_{m=1}^M \dfrac{\Indc\{j\in\hat{\calS}^m\}}{\eta M}
  = \sum^M_{m=1}\dfrac{V_m}{\eta M}.
\end{align}
As a consequence,
\begin{align*}
  \frac{\Exs[h(V)]}{h(k)} 
  &\stackrel{(i)}{\leq} \frac{\Exs \Big[ h\left(\sum^M_{m=1}V_m/\eta M\right) \Big]}{h(k)} 
  \stackrel{(ii)}{\leq} \sum_{m=1}^M\frac{\Exs[ h(V_m/\eta) ]}{Mh(k)}
  \stackrel{(iii)}{=} \frac{\Exs[h(V_1/\eta)]}{h(k)}. 
\end{align*}
Step (i) holds since $h$ is non-decreasing and $V$ obeys
\eqref{eq:V_upperbnd}; step (ii) follows from the convexity of $h$ and
Jensen's inequality; step (iii) uses the fact that each $V_m$ has the
same marginal distribution.  Putting things together, the $k$-FWER
satisfies
\begin{align*}
  \p(V\geq k) & ~\leq~ \frac{\Exs[ h(V_1/\eta) ]}{h(k)}.
\end{align*}
To complete the proof, use the fact that $V_1$ is stochastically
dominated by $\mathrm{NB}(v,1/2)$ by Lemma~\ref{LemVknock}.

%%%%%%%%%%%%%%%%%%%%%%%%%%%%%%%%%%%%%%%%%%%%%%%%%%%%%%%%%%%%%%%%%%%%%%%
\subsection{Proof and extension of Proposition \ref{prop:fwer_finerbnd}}
\label{sec:proof_of_finerfwer}
\paragraph*{Proof of Proposition \ref{prop:fwer_finerbnd}}
The proof of Proposition \ref{prop:fwer_finerbnd} is a direct 
consequence of the following result, whose proof is provided in 
Section~\ref{SecPfLemTighter}. 
% and Section~\ref{SecPfChebyshev}.

\begin{lemma}[\cite{huber2018halving}]\label{lem:tighterMarkov}
	Let $X$ denote a non-negative random variable. Assume there exists $\xi>0$, such that 
	\begin{align*}
	\int^\eta_0 \p (X\in [\eta-u,\eta))\dd u \geq \int^\xi_0 \p (X\in [\eta,\eta+u))\dd u,
	\end{align*}
	then the following Markovian type tail bound holds:   
	\begin{align*}
	\p (X\geq \eta)\leq \dfrac{\E [X]}{\eta+\xi}.
	\end{align*}
\end{lemma}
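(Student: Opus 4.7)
The statement is a sharpening of Markov's inequality that exploits the left-skewness of $X$ near $\eta$: the gain over the usual bound $\p(X\ge\eta)\le \Exs[X]/\eta$ is precisely the extra $\xi$ in the denominator. My plan is to split $\Exs[X]$ according to whether $X<\eta$ or $X\ge\eta$, rewrite each contribution via the layer-cake (Fubini) identity so that the hypothesis plugs in directly, and then add the two pieces so that an auxiliary tail integral cancels.

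First, Fubini (using $X\ge 0$) gives
\[
\Exs[X\,\Indc\{X<\eta\}] \;=\; \int_0^\eta \p(X\in[\eta-u,\eta))\,\dd u.
\]
The hypothesis lower-bounds this integral by $\int_0^\xi \p(X\in[\eta,\eta+u))\,\dd u$, and rewriting $\p(X\in[\eta,\eta+u)) = \p(X\ge\eta)-\p(X\ge\eta+u)$ produces
\[
\Exs[X\,\Indc\{X<\eta\}] \;\ge\; \xi\,\p(X\ge\eta) \;-\; \int_0^\xi \p(X\ge\eta+u)\,\dd u.
\]

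Second, I will handle the upper truncation via $X\,\Indc\{X\ge\eta\} = \eta\,\Indc\{X\ge\eta\} + (X-\eta)_+$ together with the layer-cake identity $\Exs[(X-\eta)_+] = \int_0^\infty \p(X>\eta+u)\,\dd u$, which (since strict and non-strict inequalities agree Lebesgue-a.e.) gives
\[
\Exs[X\,\Indc\{X\ge\eta\}] \;\ge\; \eta\,\p(X\ge\eta) \;+\; \int_0^\xi \p(X\ge\eta+u)\,\dd u.
\]
Adding the two lower bounds, the integrals $\int_0^\xi \p(X\ge\eta+u)\,\dd u$ cancel, and I am left with $\Exs[X] \;\ge\; (\eta+\xi)\,\p(X\ge\eta)$, which is the claimed inequality after dividing by $\eta+\xi$.

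\paragraph*{Main obstacle.}
There is no deep obstacle; the proof is essentially a bookkeeping manipulation wrapped around the hypothesis by two applications of Fubini. The only care needed is to match the half-open interval conventions $[\eta-u,\eta)$ and $[\eta,\eta+u)$ in the hypothesis with the tails produced by layer cake, which is harmless because any mismatch occurs on a Lebesgue-null set of $u$. The conceptual content is the observation that the hypothesis converts left-truncated mass into a quantity of the form $\xi\,\p(X\ge\eta)$ minus a right-tail integral, and the same right-tail integral appears (with the opposite sign) in the standard lower bound on $\Exs[X\,\Indc\{X\ge\eta\}]$.
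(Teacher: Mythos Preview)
Your argument is correct. The layer-cake identities are applied correctly, and the handling of the half-open endpoints via the ``Lebesgue-a.e.'' remark is sound.

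The paper takes a different route: it introduces an independent auxiliary variable $U\sim\mathrm{Unif}[-\xi,\eta]$ and sandwiches the tail via
\[
\p(X\ge\eta)\ \le\ \p(X+U\ge\eta)\ \le\ \frac{\E[X]}{\eta+\xi}.
\]
The right inequality is obtained by computing $\p(U\ge\eta-X\mid X)$ in closed form and observing that the resulting expression is pointwise dominated by $X/(\eta+\xi)$; the left inequality follows by writing $\p(X+U\ge\eta)$ as an average of $\p(X\ge\eta-u)$ over $u\in[-\xi,\eta]$ and recognizing that the hypothesis makes the averaged quantity at least $\p(X\ge\eta)$. Your approach avoids the auxiliary $U$ entirely and instead splits $\E[X]$ at the event $\{X\ge\eta\}$, bounding each piece from below so that a shared term $\int_0^\xi \p(X\ge\eta+u)\,\dd u$ cancels upon addition. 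Under the hood both proofs are doing the same integral bookkeeping, but your decomposition is arguably more transparent, while the paper's smoothing-by-$U$ device is the one that the authors reuse to obtain the analogous sharpened Chebyshev and Chernoff inequalities in their Lemma~\ref{lem:tighterChebyshev}.
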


Applying Lemma~\ref{lem:tighterMarkov} to the random variable $V$ with
$\eta = \xi = k$ completes the proof of
Proposition~\ref{prop:fwer_finerbnd}.

\paragraph*{Extension of Proposition \ref{prop:fwer_finerbnd}}
The result below provides conditions under which the bounds in
Corollary \ref{cor:kfwer_genbnd} can be tightened.

\begin{proposition}
\label{prop:fwer_finerbnd_extension}
Consider the setting of Corollary \ref{cor:kfwer_genbnd}.
\begin{enumerate}[label = (\alph*)]
\item Let $h(x) = x^{\nu}$ for $\nu \ge 1$. Suppose the pmf 
of $V$ obeys 
\begin{multline}
\label{eq:asst_v_moment}
\sum^{\floor{k/\nu}-1}_{u=1}\p\left(V\in[k-u,k)\right)+
(k/\nu-\floor{k/\nu})\cdot\p(V\in [k-\floor{k/\nu},k)) \ge \\
\sum^{\floor{k/\nu}}_{u=1}\p(V\in [k,k+u))+ (k/\nu-\floor{k/\nu})\cdot
\p(V\in[k,k+\floor{k/\nu}+1)),
\end{multline}
then \eqref{eq:condition_hofv} holds with $\rho = 1/2$.

\item Let $h(x) = \exp{(\lambda x)}$ for $\lambda\in (0,1]$. 
Suppose the pmf of $V$ obeys 
\begin{multline}
\label{eq:asst_v_exp}
\sum^{\floor{1/\lambda}-1}_{u=1}\p\left(V\in[k-u,k)\right)+
  (1/\lambda-\floor{1/\lambda})\cdot\p(V\in [k-\floor{1/\lambda},k))
  \ge \\
  \sum^{\floor{1/\lambda}}_{u=1}\p(V\in [k,k+u))+
  (1/\lambda-\floor{1/\lambda})\cdot \p(V\in[k,k+\floor{1/\lambda}+1)),
    \end{multline}
    then \eqref{eq:condition_hofv} holds with $\rho = 1/2$.
\end{enumerate}
\end{proposition}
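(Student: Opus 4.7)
The plan is to mimic the proof of Proposition~\ref{prop:fwer_finerbnd}, only now applying Lemma~\ref{lem:tighterMarkov} to $X = h(V)$ (rather than to $V$ itself) with $\eta = \xi = h(k)$. Since $h$ is non-decreasing, $\p(V \ge k) = \p(h(V) \ge h(k))$, and once the lemma's hypothesis is verified I obtain $\p(V \ge k) \le \E[h(V)]/(2 h(k))$, which is exactly \eqref{eq:condition_hofv} with $\rho = 1/2$. The work then splits into (i) rewriting the lemma's integral hypothesis as a condition on the pmf of $V$, and (ii) deducing it from \eqref{eq:asst_v_moment}/\eqref{eq:asst_v_exp} via the convexity of $h$.

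For step~(i) a Fubini-style calculation, exploiting that $V$ is non-negative and integer-valued, yields
\[
\int_0^{h(k)} \p\bigl(h(V) \in [h(k)-u, h(k))\bigr)\, du = \sum_{j=0}^{k-1} h(j)\, \p(V=j),
\]
and analogously
\[
\int_0^{h(k)} \p\bigl(h(V) \in [h(k), h(k)+u)\bigr)\, du = \sum_{j \ge k} (2 h(k) - h(j))_+\, \p(V=j).
\]
So the hypothesis of Lemma~\ref{lem:tighterMarkov} reduces to $\sum_{j<k} h(j)\, \p(V=j) \ge \sum_{j \ge k}(2h(k)-h(j))_+\, \p(V=j)$.

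For step~(ii) I would invoke the tangent-line inequality $h(j) \ge h(k) + h'(k)(j-k)$, valid for all $j \ge 0$ by convexity of $h$ in both cases. On the left this yields $h(j) \ge (h(k) - h'(k)(k-j))_+$; on the right it yields $(2h(k) - h(j))_+ \le (h(k) - h'(k)(j-k))_+$. Setting $N := \lfloor h(k)/h'(k) \rfloor$---which equals $\lfloor k/\nu \rfloor$ in case~(a) and $\lfloor 1/\lambda \rfloor$ in case~(b)---these linearizations reduce the hypothesis to
\[
\sum_{u=1}^{N}\bigl(h(k)/h'(k) - u\bigr)\p(V=k-u) \;\ge\; \sum_{w=0}^{N}\bigl(h(k)/h'(k) - w\bigr)\p(V=k+w).
\]
Writing $h(k)/h'(k) = N + \delta$ with $\delta \in [0,1)$ and expanding $\p(V \in [k-u,k)) = \sum_{j=k-u}^{k-1}\p(V=j)$ by Abel summation recovers exactly \eqref{eq:asst_v_moment} in case~(a) and \eqref{eq:asst_v_exp} in case~(b).

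The main obstacle is essentially bookkeeping: one needs the linear upper bound $(h(k) - h'(k)(j-k))_+$ to dominate $(2h(k) - h(j))_+$ everywhere the latter is nonzero, which reduces to checking that the support $\{j \ge k : h(j) < 2h(k)\}$ sits inside $\{k, \ldots, k+N\}$. For $h(x) = x^\nu$ this becomes the elementary inequality $2^{1/\nu} \le 1 + 1/\nu$ for $\nu \ge 1$ (proved by noting $f(t) = 2^t - 1 - t$ vanishes at $t=0,1$ and is convex on $[0,1]$), and for $h(x) = e^{\lambda x}$ it is just $\ln 2 \le 1$. Granted these, the whole derivation runs just like Proposition~\ref{prop:fwer_finerbnd}, with $h(V)$ in place of $V$ and with \eqref{eq:asst_v_moment}/\eqref{eq:asst_v_exp} playing the role of \eqref{eq:asst_v}.
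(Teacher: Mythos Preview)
Your argument is correct and reaches the same conclusion as the paper, but the route differs in organization. The paper does \emph{not} apply Lemma~\ref{lem:tighterMarkov} to $h(V)$; instead it proves a separate Lemma~\ref{lem:tighterChebyshev} that works directly on $V$ with a uniform perturbation of width $(1+t)k/\nu$ (resp.\ $(1+t)/\lambda$) on the $V$-scale, and the tangent-line inequality is used inside that lemma's proof to bound $\E[\,\cdot\,]$ by the moment (resp.\ moment-generating) quantity. Specializing Lemma~\ref{lem:tighterChebyshev} to integer-valued $V$ with $t=1$ gives \eqref{eq:asst_v_moment} and \eqref{eq:asst_v_exp} immediately. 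Your approach instead perturbs on the $h$-scale, obtains the condition $\sum_{j<k}h(j)\,\p(V=j)\ge\sum_{j\ge k}(2h(k)-h(j))_+\,\p(V=j)$, and then linearizes via the same tangent inequality afterward. Both routes rest on the same convexity step and land on the same assumption; the paper's is a bit more direct, yours is more modular in that it reuses Lemma~\ref{lem:tighterMarkov} verbatim.

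One remark: the ``main obstacle'' you flag is not actually needed. From $h(j)\ge h(k)+h'(k)(j-k)$ you get $2h(k)-h(j)\le h(k)-h'(k)(j-k)$, and since $x\mapsto x_+$ is non-decreasing this already gives $(2h(k)-h(j))_+\le (h(k)-h'(k)(j-k))_+$ for every $j\ge k$, without any support-containment argument. The inequalities $2^{1/\nu}\le 1+1/\nu$ and $\ln 2\le 1$ are therefore superfluous (though true), and you can drop that paragraph entirely.
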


The proof of Proposition \ref{prop:fwer_finerbnd_extension} follows 
from the following lemma, whose proof is provided in Section~\ref{SecPfChebyshev}.

\begin{lemma}
Suppose $X$ is a non-negative random variable. 
\label{lem:tighterChebyshev}
  \begin{enumerate}
    \item[(a)] If
    \begin{align}
    \label{eqn:condition-moments}
      \int^{\frac{k}{\nu}}_0 \p (X\in [k-u,k))\dd u\geq 	\int^{\frac{tk}{\nu}}_0 \p (X\in [k,k+u))\dd u
    \end{align}
    for some $k,t>0$ and $\nu\ge 1$, then 
    \begin{align}
    \label{eqn:chebyshev}
    \p (X\geq k)\le \dfrac{\E [X^\nu]}{(1+t)k^\nu}.
    \end{align}
  
  \item[(b)] If 
  \begin{align*}
    \int^{\frac{1}{\lambda}}_0 \p (X\in [k-u,k))\dd u\geq \int^{\frac{t}{\lambda}}_0 \p (X\in[k,k+u)) \dd u
  \end{align*}
   holds for some $k,t,\lambda>0$, then 
  \begin{align}
  \label{eqn:chernoff}
    \p (X\geq k)\le \dfrac{\E[\exp(\lambda X)]}{(1+t)\exp(\lambda k)}.
  \end{align}
  \end{enumerate}
\end{lemma}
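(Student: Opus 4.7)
My plan is to prove parts (a) and (b) in parallel by reducing each tail inequality to Huber's halving Markov inequality (Lemma~\ref{lem:tighterMarkov}) applied to a clipped and shifted version of $X$. Write $h$ for the target function ($h(x)=x^{\nu}$ in (a), $h(x)=e^{\lambda x}$ in (b)). The key object is the tangent line $T(x)=h(k)+h'(k)(x-k)$, and I choose $a<k<b$ so that $T(a)=0$ and $T(b)=(1+t)h(k)$. In case (a) this yields $(a,b)=(k-k/\nu,\,k+tk/\nu)$, and in case (b) $(a,b)=(k-1/\lambda,\,k+t/\lambda)$; note $b-a=(1+t)(k-a)$.

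Next, I introduce the piecewise-linear surrogate $g(x)$ equal to $0$ for $x<a$ (omit this branch if $a<0$), to $T(x)$ on $[a,b]$, and to $(1+t)h(k)$ for $x>b$, and verify $g\le h$ on $[0,\infty)$. The nontrivial piece is the tangent-line inequality on $[a,b]$, which follows from convexity of $h$, and the flat piece on $[b,\infty)$, which reduces to Bernoulli's inequality $(1+t/\nu)^{\nu}\ge 1+t$ in case (a) and to $e^{t}\ge 1+t$ in case (b). I then introduce the clipped variable
\begin{align*}
Z \;:=\; \max(\min(X,b),\,a) - a \;\in\; [0,\,b-a].
\end{align*}
A routine check on the three pieces gives the key identity $g(X)=\tfrac{(1+t)h(k)}{b-a}\,Z$, and one sees $\{X\ge k\}=\{Z\ge k-a\}$. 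Consequently the bound $\p(X\ge k)\le \E[h(X)]/((1+t)h(k))$ follows from $\E[h(X)]\ge \E[g(X)]$ together with
\begin{align*}
\E[Z] \;\ge\; (b-a)\,\p(Z\ge k-a), \tag{$\ast$}
\end{align*}
which is exactly the conclusion of Lemma~\ref{lem:tighterMarkov} applied to the non-negative variable $Z$ with $\eta=k-a$ and $\xi=b-k$ (so $\eta+\xi=b-a$).

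Finally, I translate Huber's hypothesis for $Z$ into the stated hypothesis on $X$. Because $Z=X-a$ on $\{a<X<b\}$ and takes the boundary values $0$ or $b-a$ otherwise, one checks directly that $\p(Z\in[k-a-u,\,k-a))=\p(X\in[k-u,\,k))$ for $u\in[0,\,k-a]$ and $\p(Z\in[k-a,\,k-a+u))=\p(X\in[k,\,k+u))$ for $u\in[0,\,b-k]$. Substituting $k-a=k/\nu$, $b-k=tk/\nu$ in case (a), respectively $k-a=1/\lambda$, $b-k=t/\lambda$ in case (b), this converts Huber's hypothesis into the assumption of the lemma, giving $(\ast)$ and hence the claim. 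The main conceptual step is spotting the surrogate $g$ and its linear reparametrization through $Z$; once that is set up, the verification is bookkeeping, with atom contributions at endpoints (notably $\p(X=k)$) being harmless because they have $du$-measure zero in the integral hypotheses and are absorbed into the matching identities for $Z$.
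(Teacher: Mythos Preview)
Your proof is correct and, once unwound, coincides with the paper's argument, though the packaging differs. The paper introduces an independent $U\sim\mathrm{Unif}[-\tfrac{tk}{\nu},\tfrac{k}{\nu}]$ (respectively $U\sim\mathrm{Unif}[-\tfrac{t}{\lambda},\tfrac{1}{\lambda}]$) and shows $\p(X\ge k)\le\p(X+U\ge k)\le\E[h(X)]/((1+t)h(k))$ by direct computation, the upper bound resting on the same tangent-line inequality $T(x)\le h(x)$ that you use. In your notation one has $\p(X+U\ge k)=\E[g(X)]/((1+t)h(k))=\E[Z]/(b-a)$, so the two proofs are the same computation: you route the lower-bound half through Lemma~\ref{lem:tighterMarkov} applied to $Z$, whereas the paper redoes the uniform-smoothing calculation inline. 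Your version is a clean modular reformulation that makes the tangent-line surrogate $g$ explicit and reuses the earlier lemma rather than repeating its proof.
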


Applying Lemma~\ref{lem:tighterChebyshev} to $V$ with $t=1$ yields
Proposition~\ref{prop:fwer_finerbnd_extension}.

%%%%%%%%%%%%%%%%%%%%%%%%%%%%%%%%%%%%%%%%%%%%%%%%%%%%%%%%%%%%%%%%%
\section{Proofs of auxiliary lemmas}
\subsection{Proof of Lemma \ref{lem:tighterMarkov}}
\label{SecPfLemTighter}

For each $\xi > 0$, we introduce an auxiliary random variable
$U \sim \text{Unif}[-\xi, \eta]$ independent of $X$. We claim that
	\begin{align*}
	\p (X\geq \eta) \le \p (X+U\geq \eta)\leq \dfrac{\E [X]}{\xi+\eta},
	\end{align*}  
        which completes the proof.  To see this, let us start with
        establishing the second inequality. A direct computation
        yields
	\begin{align*}
	\p (X+U\ge \eta) &= \E[\p (U\geq \eta-X|X)]\\
	& =\E[\ind \{\eta-X\leq -\xi\}] +\E\left[\dfrac{X}{\eta+\xi}\ind\{-\xi < \eta -X \leq\eta \}\right]
	\leq \E \left[\dfrac{X}{\xi+\eta} \right].
	\end{align*}
By definition, 
	\begin{align*}
	\p(X+U\geq \eta) &=\dfrac{1}{\eta+\xi} \int^\eta_{-\xi} \p(X\geq \eta -u)\dd u\\
	& = \dfrac{1}{\eta+\xi} \left[\int^0_{-\xi}\p(X\geq \eta -u)\dd u+\int^\eta_0 \p(X\geq \eta -u)\dd u\right]\\
	& = \dfrac{1}{\eta+\xi}\left[\int^\eta_{-\xi}\p(X\geq \eta)\dd u-\int^0_{-\xi}\p(X\in[\eta,\eta-u))\dd u+\int^\eta_0\p (X\in [\eta-u,\eta))\dd u\right]\\
	& = \p(X\geq \eta) +\dfrac{1}{\eta+\xi}\left[-\int^\xi_{0}\p(X\in[\eta,\eta+u))\dd u+\int^\eta_0\p (X\in [\eta-u,\eta))\dd u\right]\\
	& \geq \p (X\ge \eta),
	\end{align*}
where the last inequality follows from our assumption.

\subsection{Proof of Lemma \ref{lem:tighterChebyshev}}
\label{SecPfChebyshev}
Let $U \sim \text{Unif}[-\frac{t k}{\nu},\frac{k}{\nu}]$ be
independent of $X$. We prove that
\begin{align*}
  \p (X\geq k)\leq \p (X+U\geq k) \leq \dfrac{\E [X^\nu]}{(1+t)k^\nu}.
\end{align*}
The second inequality follows from direct calculations:  
\begin{align*}
  \p (X+U\ge k) &=\E \left[\ind\left\{k-X\leq -\frac{tk}{\nu}\right\}\right] +\E \left[\dfrac{\frac{k}{\nu}-k+ X}{(1+t)\frac{k}{\nu}} \cdot \ind\left\{-\frac{tk}{\nu}< k-X\leq \frac{k}{\nu}\right\} \right]\\
                &\leq \E \left[\dfrac{X-\frac{(\nu-1)k}{\nu}}{(1+t)\frac{k}{\nu}}\right] \leq \dfrac{\E [X^\nu]}{(1+t)k^\nu},
\end{align*}
where the last inequality uses
$x-\frac{\nu-1}{\nu}k\leq \frac{x^\nu}{\nu k^{\nu-1}}$. As for the
first inequality, 
\begin{align*}
  \p (X+U\geq k)& = \dfrac{1}{(1+t)\frac{k}{\nu}}\left[\int^{\frac{k}{\nu}}_{-\frac{tk}{\nu}} \p (X\ge k-u)\dd u\right]\\
                & = \dfrac{1}{(1+t)\frac{k}{\nu}}\left[\int^{\frac{k}{\nu}}_{-\frac{tk}{\nu}} \p (X \ge k)\dd u +\int^{\frac{k}{\nu}}_0 \p (X\in [k-u,k))\dd u-\int^{\frac{tk}{\nu}}_{0}\p (X\in[k,k+u))\dd u\right]\\
&\geq \p (X\geq k)
\end{align*}
(the last inequality follows from our assumption).

For the second part, introduce an independent uniform random variable
$U$ supported on $[-\frac{t}{\lambda},\frac{1}{\lambda}]$. We prove
that
	\begin{align*}
    \p (X\geq k)\leq \p (X+U\geq k)\leq \dfrac{\E [\exp(\lambda X)]}{(1+t)\exp(\lambda k)}.
	\end{align*}
        The proof of the first inequality is exactly the same as that
        of Lemma \ref{lem:tighterMarkov} and
        \ref{lem:tighterChebyshev} and is omitted. Direct calculations
        give 
	\begin{align*}
	\p (X+U\geq k) & =\E \left[ \ind\left\{k-X\leq -\frac{t}{\lambda}\right\} \right] + 
	\E \left[\dfrac{X-k+\frac{1}{\lambda}}{\frac{1+t}{\lambda}}\ind\left\{-\frac{t}{\lambda}< k-X \leq \frac{1}{\lambda}\right\}\right]\\
	& \leq\frac{1}{1+t} \E [1+\lambda (X-k)]\\
  & \leq \frac{1}{1+t}\E[\exp(\lambda (X-k))].
	\end{align*}
        The last inequality uses $1+x\leq e^x$ for each $x\in \R$.

%%%%%%%%%%%%%%%%%%%%%%%%%%%%%%%%%%%%%%%%%%%%%%%%%%%%%%%%%

% \subsection{Proof of Lemma \ref{lem:tighterChernoff}}
% \label{SecPfChernoff}

\section{Auxiliary details}

%%%%%%%%%%%%%%%%%%%%%%%%%%%%%%%%%%%%%%%%%%%%%%%%%%%%%%%%%
\subsection{Construction of the conditional HMM knockoffs}
\label{Sec:construct_kn}
In this section, we provide further details on the construction of the conditional HMM knockoff copies discussed in Section \ref{sec:syntheticex}. 
Suppose the feature vector $X=(X_1,\ldots,X_p)$ follows a hidden Markov model and let $M= (M_1,\ldots,M_p)$ denote the corresponding hidden Markov chain. 
Given a set $C\subset [p]$, we define the ``adjacent set'' to $C$ as 
\begin{align}
  A(C) \defn \{j\in [p]\mid j\notin C,~j+1 \text{ or } j-1\in C \}. 
\end{align}
The adjacent set contains all the SNPs that are adjacent to $C$ on the Markov chain. 
For example, in Figure \ref{fig:adjset}, the red nodes represent $C$
and the blue nodes correspond to $A(C)$.

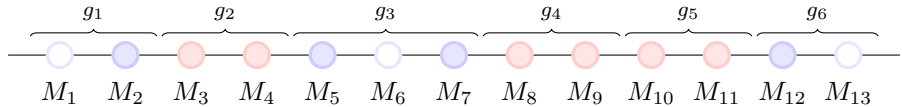
\begin{figure}[ht]
	\centering
  \begin{tikzpicture}[scale = 0.7,
    bluenode/.style = {circle,draw=blue!20, fill=blue!10, very thick, minimum size = 0.5mm},
    rednode/.style = {circle,draw=red!20, fill=red!10, very thick, minimum size = 0.5mm},
    graynode/.style = {circle,draw=blue!10, fill=blue!0, very thick, minimum size = 0.5mm},
    ]
  %Nodes
    \node(0)  {};
    \node[graynode] (1) [right=5mm of 0] {};
    \node[below = 0.5mm of 1] {$M_1$};
    \node[bluenode] (2) [right=5mm of 1] {};
    \node[below = 0.5mm of 2] {$M_2$};
    \node[rednode] (3) [right=5mm of 2] {};
    \node[below = 0.5mm of 3] {$M_3$};
    \node[rednode] (4) [right=5mm of 3] {};
    \node[below = 0.5mm of 4] {$M_4$};
    \node[bluenode] (5) [right=5mm of 4] {};
    \node[below = 0.5mm of 5] {$M_5$};
    \node[graynode] (6) [right=5mm of 5] {};
    \node[below = 0.5mm of 6] {$M_6$};
    \node[bluenode] (7) [right=5mm of 6] {};
    \node[below = 0.5mm of 7] {$M_7$};
    \node[rednode] (8) [right=5mm of 7] {};
    \node[below = 0.5mm of 8] {$M_8$};
    \node[rednode] (9) [right=5mm of 8] {};
    \node[below = 0.5mm of 9] {$M_9$};
    \node[rednode] (10) [right=5mm of 9] {};
    \node[below = 0.5mm of 10] {$M_{10}$};
    \node[rednode] (11) [right=5mm of 10] {};
    \node[below = 0.5mm of 11] {$M_{11}$};
    \node[bluenode] (12) [right=5mm of 11] {};
    \node[below = 0.5mm of 12] {$M_{12}$};
    \node[graynode] (13) [right=5mm of 12] {};
    \node[below = 0.5mm of 13] {$M_{13}$};
    \node(14) [right=5mm of 13] {};
  %Lines
    \draw (0) -- (1);
    \draw (1) -- (2);
    \draw (2) -- (3);
    \draw (3) -- (4);
    \draw (4) -- (5);
    \draw (5) -- (6);
    \draw (6) -- (7);
    \draw (7) -- (8);
    \draw (8) -- (9);
    \draw (9) -- (10);
    \draw (10) -- (11);
    \draw (11) -- (12);
    \draw (12) -- (13);
    \draw (13) -- (14);
  %Braces
    \draw [decorate,decoration={brace},xshift=0pt,yshift=-4pt]
      (0.6,0.5) -- (3,0.5) node [black,midway,xshift=0cm,yshift = .3cm]
      {\footnotesize $g_1$};
    \draw [decorate,decoration={brace},xshift=0pt,yshift=-4pt]
      (3.1,0.5) -- (5.5,0.5) node [black,midway,xshift=0cm,yshift = .3cm]
      {\footnotesize $g_2$};
    \draw [decorate,decoration={brace},xshift=0pt,yshift=-4pt]
      (5.6,0.5) -- (9.1,0.5) node [black,midway,xshift=0cm,yshift = .3cm]
      {\footnotesize $g_3$};
    \draw [decorate,decoration={brace},xshift=0pt,yshift=-4pt]
      (9.2,0.5) -- (11.8,0.5) node [black,midway,xshift=0cm,yshift = .3cm]
      {\footnotesize $g_4$};
    \draw [decorate,decoration={brace},xshift=0pt,yshift=-4pt]
      (11.9,0.5) -- (14.3,0.5) node [black,midway,xshift=0cm,yshift = .3cm]
      {\footnotesize $g_5$};
    \draw [decorate,decoration={brace},xshift=0pt,yshift=-4pt]
      (14.4,0.5) -- (16.7,0.5) node [black,midway,xshift=0cm,yshift = .3cm]
      {\footnotesize $g_6$};
  \end{tikzpicture}
  \caption{Visual illustration of the adjacent set: the red nodes correspond to $C=E$ and the blue ones represent $A(C)$. }
	\label{fig:adjset}
\end{figure}

Recall that $\calG$ refers to the set of SNP {\em clusters} identified by the first stage of GWAS.
We define $E$ to be the set of SNPs that appear in the clusters of $\calG$.
The construction of $\widetilde{X}_{\calG}$ proceeds in the following three steps:
\begin{enumerate}
  \item Sample the hidden Markov chain $M$ conditionally on $X$.
  \item Generate a (conditional) knockoff copy $\widetilde{M}_{\calG}$ of $M_{\calG}$ such that for any $g\in\calG$,
    \begin{align}\label{eq:condex}
      (M_{\calG},\widetilde{M}_{\calG})_{\text{swap}(g)}\mid M_{-\calG}\stackrel{\mathrm{d}}{=} (M_{\calG},\widetilde{M}_{\calG})\mid M_{-\calG}.
    \end{align}
%    To do this, we generate a HMM knockoff copy
%    $(\widetilde{M}_{\calG},\widetilde{M}_{A(\calG)})$ for
%    $(M_{\calG},M_{A(\calG)})$ (as detailed in
%    \cite{sesia2020controlling}); $\widetilde{M}_{\calG}$
%    automatically satisfies \eqref{eq:condex}.
   % \yutingcomment{should we make it a footnote or leave it like this?}

  \item Conditional on $\widetilde{M}_{\calG}$, sample $\widetilde{X}_{\calG}$ from the emission probability of the HMM. 
\end{enumerate} 
\begin{proposition}\label{prop:conditionalkn}
  Suppose $X=(X_1,\ldots,X_p)$ follows a hidden Markov model and use
  $M = (M_1,\ldots,M_p)$ to denote the hidden Markov chain. The
  procedure above (steps 1--3) produces valid 
conditional knockoff copies as defined in \eqref{eq:cond_ex} in the
sense that 
  \begin{align}
    (X_{\calG},\widetilde{X}_{\calG})_{\text{swap}(g)} \mid X_{-\calG} \stackrel{\mathrm{d}}{=}(X_{\calG},\widetilde{X}_{\calG}) \mid X_{-\calG}.
  \end{align}
\end{proposition}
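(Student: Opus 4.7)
The strategy is to establish the exchangeability at the \emph{joint} level of hidden states and observations, then marginalize out the latent chain. The two structural facts I would exploit are: (i) in an HMM the emissions factor as $X_j \mid M$ depending only on $M_j$; (ii) by construction in Step 2, the latent conditional knockoffs satisfy $(M_\calG,\widetilde M_\calG)_{\text{swap}(g)}\mid M_{-\calG}\eqd (M_\calG,\widetilde M_\calG)\mid M_{-\calG}$ for every $g\in\calG$; (iii) $\widetilde X_\calG$ depends on $(M,\widetilde M_\calG)$ only through $\widetilde M_\calG$, via the same emission kernels $e_j(\cdot\mid\cdot)$.

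First, I would write down the joint density of $(X,M,\widetilde M_\calG,\widetilde X_\calG)$ produced by the three-step procedure:
\[
p(x,m,\widetilde m_\calG,\widetilde x_\calG) \;=\; p_M(m)\,\prod_{j=1}^p e_j(x_j\mid m_j)\,\cdot\, q(\widetilde m_\calG\mid m)\,\prod_{j\in E} e_j(\widetilde x_j\mid \widetilde m_j),
\]
where $E=\bigcup_{g\in\calG}g$, $p_M$ is the law of the hidden chain, and $q$ is the conditional knockoff kernel from Step 2 satisfying the latent swap property. Next, for a fixed $g\in\calG$, I would consider the \emph{lifted} transformation $T_g$ that simultaneously swaps $(X_g,\widetilde X_g)$ \emph{and} $(M_g,\widetilde M_g)$ and check that it leaves the above density invariant. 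The factor $p_M(m)\,q(\widetilde m_\calG\mid m)$ is invariant by the latent-level swap property together with the fact that $M_{-\calG}$ is untouched; the emission factor is invariant because for $j\in g$, the lifted swap merely relabels the two identical kernels $e_j(x_j\mid m_j)$ and $e_j(\widetilde x_j\mid \widetilde m_j)$, while for $j\notin g$ nothing is altered.

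Then I would marginalize out $m$ and $\widetilde m_\calG$. Since $T_g$ only permutes names of variables that are being integrated out \emph{jointly with} the observed swap of $(X_g,\widetilde X_g)$, the marginal density of $(X,\widetilde X_\calG)$ inherits invariance under the swap of $(X_g,\widetilde X_g)$ alone. Since $X_{-\calG}$ is fixed by the transformation, conditioning on $X_{-\calG}$ preserves this invariance, which is exactly the defining identity
\[
(X_\calG,\widetilde X_\calG)_{\text{swap}(g)}\mid X_{-\calG}\;\eqd\;(X_\calG,\widetilde X_\calG)\mid X_{-\calG}.
\]
The conditional independence $\widetilde X_\calG \indep Y \mid (X_\calG,X_{-\calG})$ that also appears in \eqref{eq:cond_ex} is automatic because $\widetilde X_\calG$ is generated from $(X,M,\widetilde M_\calG)$ without ever looking at $Y$.

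The main obstacle is a careful bookkeeping of which variables the lifted swap acts on and ensuring that the marginalization inherits the invariance. The cleanest way is to observe that $T_g$ is an involution whose restriction to the integrated coordinates $(M_g,\widetilde M_g)$ is a measure-preserving bijection of the integration domain; invariance of the integrand then passes to the marginal by a change of variables. Everything else—the emission product structure and the conditioning on $X_{-\calG}$—is routine, but it is essential that the \emph{same} emission kernels $e_j$ are used in Steps 1 and 3 (so that the $j\in g$ emission factors are truly interchangeable under $T_g$), which is why the construction samples $\widetilde X_\calG$ from the HMM emission distribution rather than any other density.
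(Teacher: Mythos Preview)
Your proposal is correct and follows essentially the same approach as the paper: lift to the joint level including the latent chain, use the HMM emission factorization together with the Step~2 latent swap property to obtain swap-invariance of the joint law under the simultaneous swap $T_g$, then marginalize out the hidden states and condition on $X_{-\calG}$. The paper organizes the same computation by first conditioning on $(X_{-\calG},M_{-\calG})$ and then averaging over $M_{-\calG}$, but the logical content is identical.
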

\begin{proof}
  Let $(x,x',m,m',a,b)$ be a possible realization of
  $(X_{\calG},\widetilde{X}_{\calG},M_{\calG},\widetilde{M}_{\calG},X_{-\calG},M_{-\calG})$. For
  simplicity, we use $\bar{x}$ and $\bar{x}'$ to denote the resulting
  vectors after swapping $x_g$ and $x'_g$. Define $\bar{m}$ and
  $\bar{m}'$ similarly. Then direct calculations give
  \begin{align*}
    &\p(X_{\calG} = x,\widetilde{X}_{\calG}=x',M_{\calG}=m,\widetilde{M}_{\calG}=m'\mid X_{-\calG}=a,M_{-\calG}=b)\\
    &\quad= \p(X_{\calG} = x,\widetilde{X}_{\calG}=x'\mid M_{\calG}=m,\widetilde{M}_{\calG}=m',X_{-\calG}=a,M_{-\calG}=b)\p ( M_{\calG}=m,\widetilde{M}_{\calG}=m'\mid X_{-\calG}=a,M_{-\calG}=b)\\
    &\quad= \p(X_{\calG} = x,\widetilde{X}_{\calG}=x'\mid M_{\calG}=m,\widetilde{M}_{\calG}=m')\p ( M_{\calG}=m,\widetilde{M}_{\calG}=m'\mid M_{-\calG}=b)\\
    &\quad= \p(X_{\calG} = \bar{x},\widetilde{X}_{\calG}=\bar{x}'\mid M_{\calG}=\bar{m},\widetilde{M}_{\calG}=\bar{m}')\p ( M_{\calG}=\bar{m},\widetilde{M}_{\calG}=\bar{m}'\mid M_{-\calG}=b)\\
    &\quad= \p(X_{\calG} = \bar{x},\widetilde{X}_{\calG}=\bar{x}'\mid M_{\calG}=\bar{m},\widetilde{M}_{\calG}=\bar{m}',X_{-\calG} =a,M_{-\calG}=b)\p ( M_{\calG}=\bar{m},\widetilde{M}_{\calG}=\bar{m}'\mid X_{-\calG}=a,M_{-\calG}=b)\\
    &\quad= \p(X_{\calG} = \bar{x},\widetilde{X}_{\calG}=\bar{x}',M_{\calG}=\bar{m},\widetilde{M}_{\calG}=\bar{m}'\mid X_{-\calG}=a,M_{-\calG}=b).
  \end{align*}
  Above, the second equality uses the hidden Markov structure and the third applies the conditional exchangeability of $M$. Taking expectation w.r.t. $M_{-\calG}$ and 
  marginalizing over $(M_{\calG},\widetilde{M}_{\calG})$ completes the proof.
\end{proof}

The implementation of Steps 1 and 3 is standard and can be carried 
out using internal functions of the \textsf{SNPknock} R-package;
for more details, refer to~\cite{sesia2020controlling}. 
We now elaborate on the implementation of Step 2.
As is shown in the Proposition \ref{prop:adjset} below, \eqref{eq:condex} is equivalent to
\begin{align}\label{eq:condex2}
  (M_{\calG},\widetilde{M}_{\calG})_{\text{swap}(g)} \mid M_{A(E)} \stackrel{\mathrm{d}}{=}(M_{\calG},\widetilde{M}_{\calG})\mid M_{A(E)}
\end{align}
for any $g\in\calG$. We proceed to discuss how to generate
$\widetilde{M}_{\calG}$ obeying \eqref{eq:condex2}. Let us call the
consecutive clusters in $\calG$ a clip; for example in Figure
\ref{fig:adjset}, $g_2$ is a clip, $g_4$ and $g_5$ form a clip, and so on.
Conditional on $A(E)$, the clips are independent of each other thanks 
to the Markov property.  It is then sufficient to generate
knockoff copies separately for each clip: in the example from Figure \ref{fig:adjset}, we shall thus sample
$\widetilde{M}_{g_2}$ and $\widetilde{M}_{(g_4,g_5)}$
independently. To sample $\widetilde{M}_{g_2}$---a knockoff copy for
$M_{g_2}$ conditional on $A(E)$---we shall generate
$(\widetilde{M}_2,\widetilde{M}_{g_2},\widetilde{M}_5)$, which is a
knockoff copy for $(M_2,M_{g_2},M_5)$. An immediate consequence is that the
subset $\widetilde{M}_{g_2}$ is a knockoff copy for $M_{g_2}$
conditional on $(M_2,M_5)$; i.e.,
\[
  (M_{g_2},\widetilde{M}_{g_2})\mid M_{(2,5)} \stackrel{d}{=} (\widetilde{M}_{g_2},M_{g_2})\mid M_{(2,5)},
\]
which by the Markov property is equivalent to 
\[
  (M_{g_2},\widetilde{M}_{g_2})\mid M_{A(E)} \stackrel{d}{=} (\widetilde{M}_{g_2},M_{g_2})\mid M_{A(E)}.
\]
The same reasoning can be used to construct knockoffs for
$M_{(g_4,g_5)}$ and all the other clips. In the end, we hold $\widetilde{M}_{\calG}$.

\begin{proposition}\label{prop:adjset}
Suppose $M=(M_1,\ldots,M_p)$ is a Markov chain. Then 
\begin{subequations}
\begin{align}\label{eq:condproof1}
(M_{\calG},\widetilde{M}_{\calG})_{\text{swap}(g)} \mid M_{A(E)} \stackrel{\mathrm{d}}{=}(M_{\calG},\widetilde{M}_{\calG})\mid M_{A(E)}
\end{align}
if and only if 
\begin{align}\label{eq:condproof2}
(M_{\calG},\widetilde{M}_{\calG})_{\text{swap}(g)} \mid M_{-\calG} \stackrel{\mathrm{d}}{=}(M_{\calG},\widetilde{M}_{\calG})\mid M_{-\calG}.
\end{align}
\end{subequations}
\end{proposition}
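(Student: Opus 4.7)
The strategy is to reduce the two conditional statements to one another using the Markov property of $M$ together with the fact (baked into Step 2 of the construction described just above the proposition) that $\widetilde{M}_{\calG}$ is generated clip-by-clip from $M_{\calG}$ and $M_{A(E)}$ via independent auxiliary randomness. The punchline we will extract is the following conditional independence, which I will call $(\star)$:
\[
  (M_\calG, \widetilde{M}_\calG) \perp M_{-(\calG \cup A(E))} \mid M_{A(E)}.
\]

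I would first dispatch the direction $\eqref{eq:condproof2}\Rightarrow\eqref{eq:condproof1}$, which is pure probability bookkeeping and requires nothing at all about the structure of $\widetilde{M}_\calG$. Assuming the conditional law of $(M_\calG, \widetilde{M}_\calG)$ given $M_{-\calG}$ is swap-invariant almost surely, I integrate both sides against the regular conditional distribution of $M_{-(\calG\cup A(E))}$ given $M_{A(E)}$; the tower property then hands back the swap-invariance of the conditional law given the sub-$\sigma$-algebra generated by $M_{A(E)}$, which is precisely $\eqref{eq:condproof1}$.

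The reverse direction $\eqref{eq:condproof1}\Rightarrow\eqref{eq:condproof2}$ is where $(\star)$ earns its keep. Given $(\star)$, the conditional distribution of $(M_\calG, \widetilde{M}_\calG)$ given $M_{-\calG}=(M_{A(E)}, M_{-(\calG\cup A(E))})$ coincides with the conditional distribution given $M_{A(E)}$ alone, and the same holds after the swap $\mathrm{swap}(g)$ since it merely permutes coordinates within indices in $\calG$ and therefore preserves $(\star)$. Hence if the latter conditional law is swap-invariant by $\eqref{eq:condproof1}$, so is the former, which is exactly $\eqref{eq:condproof2}$.

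It remains to prove $(\star)$, and this is the one genuinely substantive step. I would split it into two ingredients: (i) $M_\calG \perp M_{-(\calG\cup A(E))} \mid M_{A(E)}$, which is the global Markov property applied to the path graph underlying the chain---$A(E)$ is exactly the vertex boundary of $E=\bigcup_{g\in\calG} g$, so it separates $E$ from its complement on the line---and (ii) $\widetilde{M}_\calG \perp M_{-(\calG\cup A(E))} \mid (M_\calG, M_{A(E)})$, which follows directly from Step 2: each clip of $\widetilde{M}_\calG$ is sampled using only the restriction of $M$ to that clip and to the two adjacent nodes in $A(E)$, plus auxiliary randomness independent of $M$. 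Combining (i) and (ii) via the standard conditional-independence glueing lemma yields $(\star)$. The only mildly delicate point is justifying (ii); if one prefers to avoid invoking the specific sampling scheme, it can instead be imposed as a standing assumption on the knockoff construction, but it comes for free from the procedure actually proposed in the paper.
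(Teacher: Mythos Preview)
Your proposal is correct and follows essentially the same route as the paper. The paper's proof likewise dispatches $\eqref{eq:condproof2}\Rightarrow\eqref{eq:condproof1}$ by the tower property (noting $A(E)\subset -\calG$), and for the reverse direction invokes exactly your two ingredients---the Markov separation $M_\calG\perp M_{-(E\cup A(E))}\mid M_{A(E)}$ and the fact that $\widetilde{M}_\calG$ is a function of $M_{E\cup A(E)}$ and independent auxiliary randomness---to conclude that conditioning on $M_{-\calG}$ and on $M_{A(E)}$ give the same law for $(M_\calG,\widetilde{M}_\calG)$; you simply make the intermediate conditional independence $(\star)$ and the glueing step explicit.
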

\begin{proof}
The derivation from \eqref{eq:condproof2} to \eqref{eq:condproof1} is straightforward since 
$A(E)$ is a subset of the complement of $E$ (and $M_{-\calG}=M_{-E}$). To see 
the reverse direction, note that by the Markov property, $M_{\calG}$ is independent of 
$M_{-(E\cup A(E))}$ conditional on $M_{A(E)}$. 
Additionally, the construction of $\widetilde{M}_{\calG}$ only depends on $M_{E \cup A(E)}$.\footnote{Technically,
  $\widetilde{M}_{\calG} = f(M_{E \cup A(E)},U)$,
  where $U \sim \text{Unif}[0,1]$ is independent of everything.}
Consequently, one has 
\begin{align*}
(M_{\calG},\widetilde{M}_{\calG})\mid M_{-\calG}
\stackrel{\mathrm{d}}{=} (M_{\calG},\widetilde{M}_{\calG})\mid
M_{A(E)} \stackrel{\mathrm{d}}{=} (M_{\calG},\widetilde{M}_{\calG})_{\text{swap}(g)}\mid
M_{A(E)} \stackrel{\mathrm{d}}{=}
  (M_{\calG},\widetilde{M}_{\calG})_{\text{swap}(g)}\mid M_{-\calG}. 
\end{align*} 
\end{proof}
%\subsection{Implementation details}
%\label{sec:implementation}

%This section provides details about the implementation of \algoname.
%\paragraph{Non-integer $v$}
%Recall that the base procedure $v$-knockoffs requires $v$ to be an integer. When $v$ is not an integer, a naive remedy is by using the $\floor{v}$-knockoffs. 
%However, when $v<1$, the procedure suffers from a lose of the statistical power. To overcome this, \citet{janson2016familywise} proposed a randomized version of the $v$-knockoffs. Specifically, with target PFER level $v$, we sample a random variable $U\sim \mathrm{Bern}((v-\floor{v})/(\ceil{v}-\floor{v}))$. If $U=1$, run the $\floor{v}$-knockoffs and run $\ceil{v}$-knockoffs otherwise. The randomized algorithm controls the PFER at level $v$.

%For our \algoname~with $M$ knockoff copies, we randomly sample without replacement $\floor{M\cdot(v-\floor{v})/(\ceil{v}-\floor{v})}$ elements from $[M]$ and run the $\floor{v}$-knockoffs on the knockoff copies corresponding to the selected elements, and $\ceil{v}$-knockoffs on the remaining copies. The resulting algorithm guarantees the PFER control at the desired level, $v$.

%\paragraph{Choices of $v$ for smaller $k$}
%In the $k$-FWER case, for smaller values of $k$, the corresponding $v$ can be less than the selection probability. In that case, we use $v = \eta.$

\subsection{Additional figures}
\label{sec:additional_figs}
%In this section, we provide some addition figures for our numerical experiments.

\paragraph{Large-scale simulation from Section \ref{sec:pfer}} 
Figure \ref{fig:pfer_amp_gaussian_large_ratio} and \ref{fig:pfer_amp_gaussian_large_pidist} plot the realized
ratio between $\p(\Pi_j\ge 0.5)$ and $\E[\Pi_j]$ (with $95\%$ confidence intervals), and the pooled histograms 
of all nonzero null $\Pi_j$'s. 
\begin{figure}[ht]
	\centering
	\includegraphics[width = .95\textwidth]{./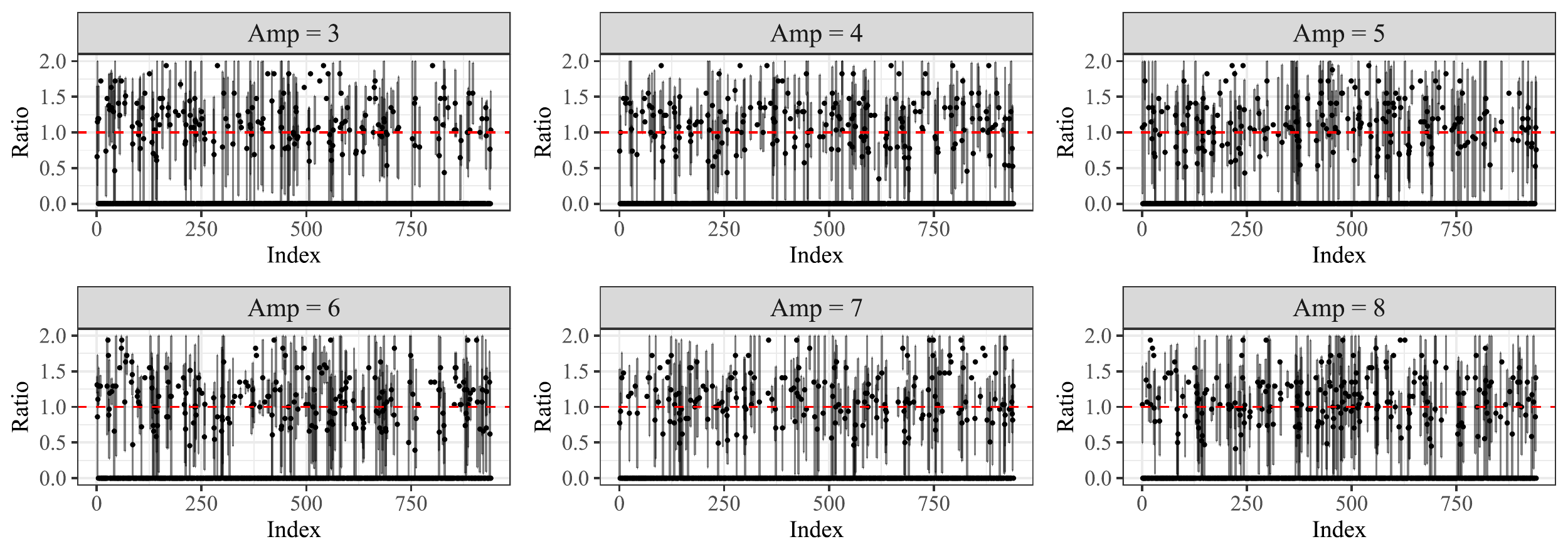}
        \caption{Realized ratios $\p(\Pi_j\ge 1/2)/\E[\Pi_j]$ with
          $95\%$ confidence intervals estimated from $200$
          repetitions. The experiment setting is the same as in Figure
          \ref{fig:pfer_amp_gaussian_large_within}.}
	\label{fig:pfer_amp_gaussian_large_ratio}
\end{figure}
\begin{figure}[ht]
	\centering
	\includegraphics[width = .95\textwidth]{./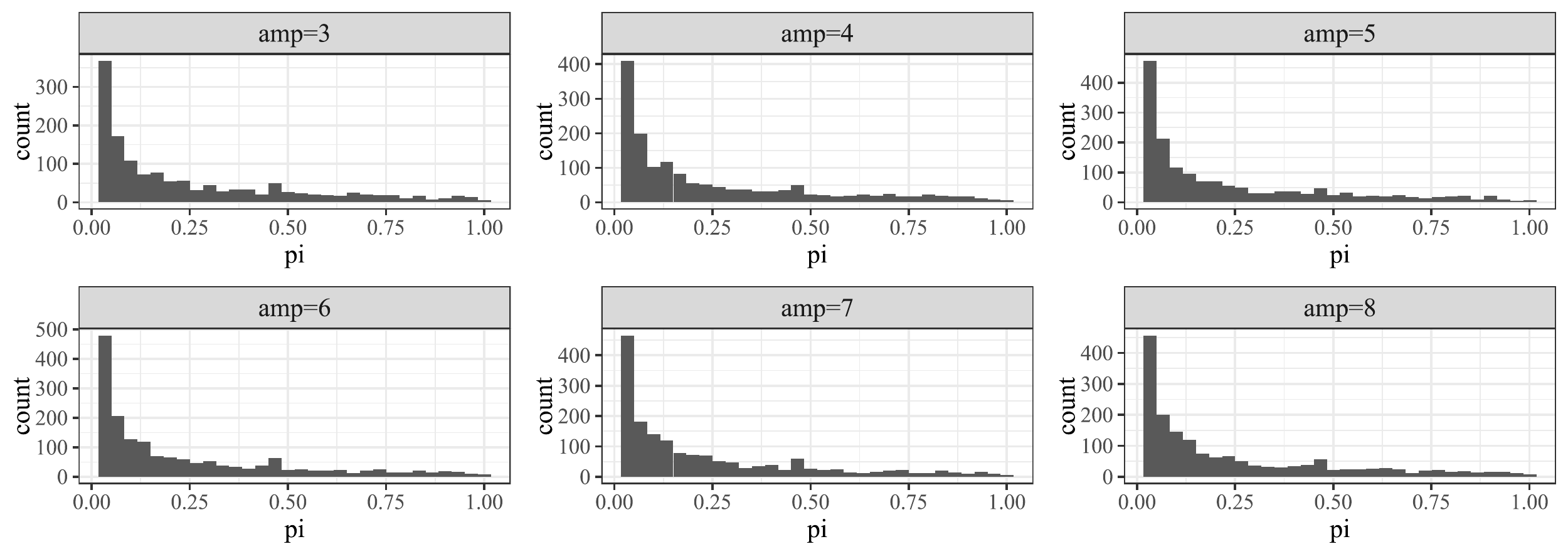}
	\caption{Pooled histograms of all nonzero null $\Pi_j$'s for
    different signal amplitudes. The
          experiment setting is the same as in Figure
          \ref{fig:pfer_amp_gaussian_large_within}.}
	\label{fig:pfer_amp_gaussian_large_pidist}
\end{figure}
%\paragraph{Small-scale simulation in Section \ref{sec:fwer}}
%Figure \ref{fig:fwer_amp_gaussian_pidist} and \ref{fig:fwer_amp_gaussian_Vdist} are the (pooled) histograms of all nonzero null $\Pi_j$'s and the number of false discoveries respectively, resulted from the small-scale experiment in Section \ref{sec:fwer}. 
\paragraph{Large-scale simulation from Section \ref{sec:fwer}}
Figure \ref{fig:fwer_amp_gaussian_large_ratio} plots the realized ratio between $\p(\Pi_j\ge 0.5)$ and 
$\E[\Pi_j]$ (with $95\%$ confidence intervals); Figure \ref{fig:fwer_amp_gaussian_large_pidist} and 
\ref{fig:fwer_amp_gaussian_large_Vdist} are the (pooled) histograms of all nonzero null $\Pi_j$'s 
and the number of false discoveries. 

\begin{figure}[ht]
	\centering
	\includegraphics[width = .95\textwidth]{./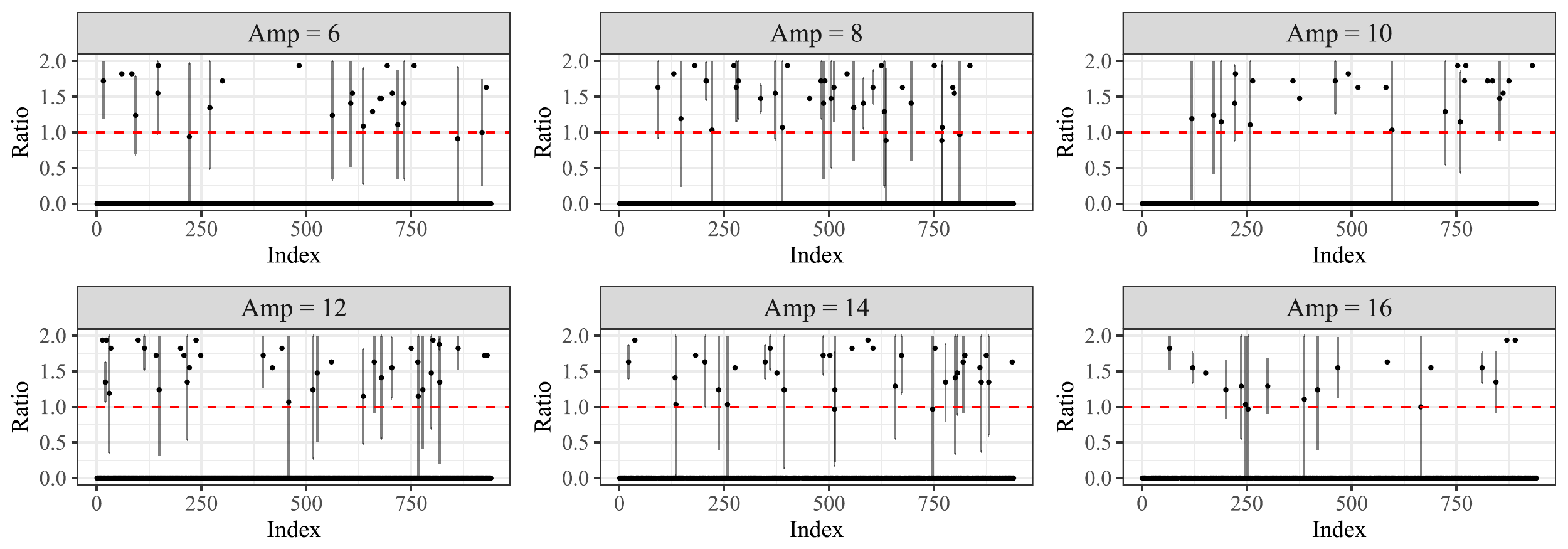}
        \caption{Realized ratios $\p(\Pi_j\ge 1/2)/\E[\Pi_j]$ with
          $95\%$ confidence intervals estimated from $200$
          repetitions. The experiment setting is the same as in Figure
          \ref{fig:fwer_amp_gaussian_large_within}.}
	\label{fig:fwer_amp_gaussian_large_ratio}
\end{figure}
\begin{figure}[ht]
	\centering
	\includegraphics[width = .95\textwidth]{./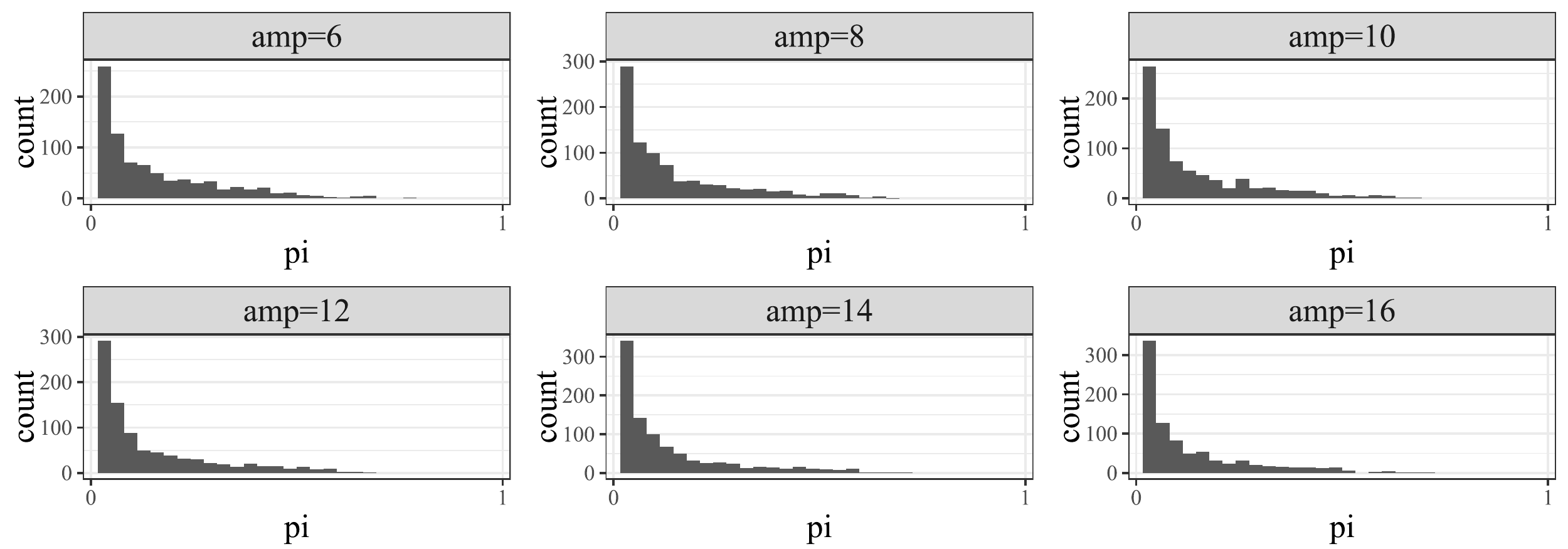}
	\caption{Pooled histograms of all nonzero null $\Pi_j$'s for
          different signal amplitudes. The experiment setting is the
          same as in Figure \ref{fig:fwer_amp_gaussian_large_within}.}
	\label{fig:fwer_amp_gaussian_large_pidist}
\end{figure}

\begin{figure}[H]
  \centering
  \includegraphics[width = .95\textwidth]{./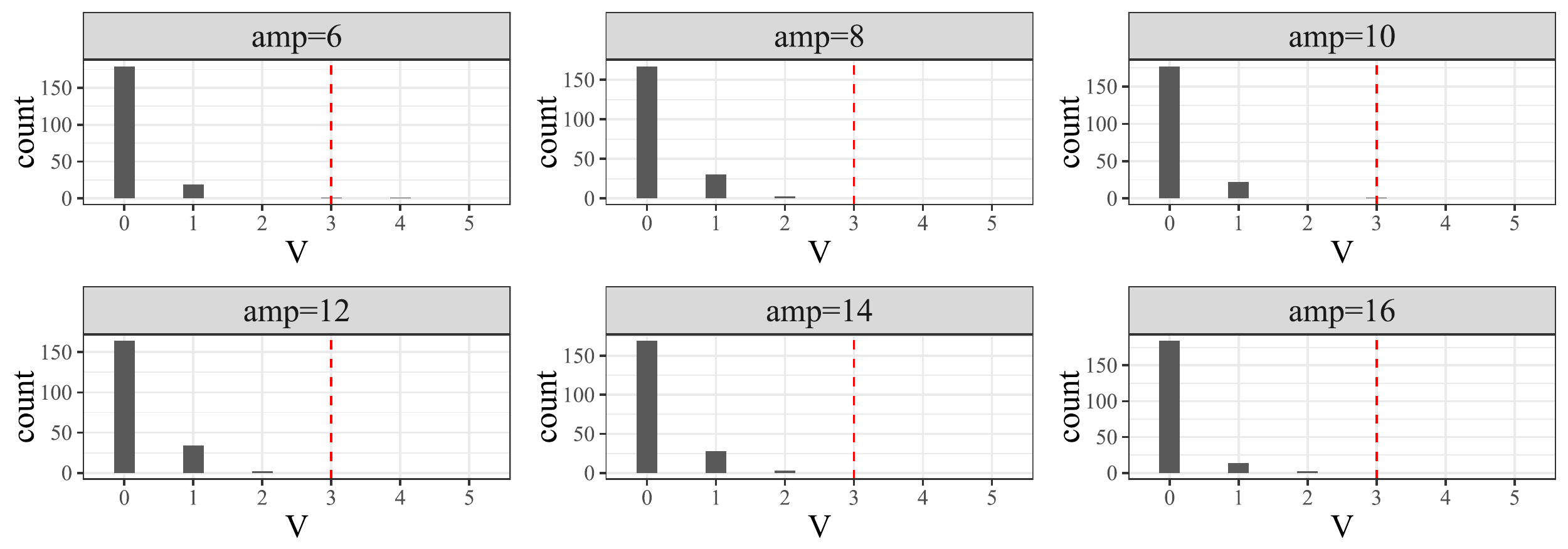}
  \caption{Histograms of the number of false discoveries $V$ for
    different signal amplitudes. The
    experiment setting is the same as in Figure
    \ref{fig:fwer_amp_gaussian_large_within}.}
  \label{fig:fwer_amp_gaussian_large_Vdist}
\end{figure}

% \clearpage
\subsection{Additional tables}
\label{sec:extra_tables}
\begin{table}[ht]
  \centering
  
\begin{tabular}[t]{ccccccccccccc}
\toprule
\multicolumn{1}{c}{ } & \multicolumn{1}{c}{ } & \multicolumn{11}{c}{V} \\
\cmidrule(l{3pt}r{3pt}){3-13}
Amplitude & Method & 0 & 1 & 2 & 3 & 4 & 5 & 6 & 7 & 8 & 9 & 10+\\
\rowcolor{gray!6}
\midrule
 & Derandomized Knockoffs & 147 & 42 & 9 & 1 & 1 & 0 & 0 & 0 & 0 & 0 & 0\\

\rowcolor{gray!6}
\multirow{-2}{*}{\centering\arraybackslash 3} & Vanilla Knockoffs & 119 & 47 & 17 & 11 & 5 & 1 & 0 & 0 & 0 & 0 & 0\\

 & Derandomized Knockoffs & 159 & 33 & 8 & 0 & 0 & 0 & 0 & 0 & 0 & 0 & 0\\

\multirow{-2}{*}{\centering\arraybackslash 4} & Vanilla Knockoffs & 117 & 42 & 24 & 8 & 6 & 1 & 1 & 0 & 1 & 0 & 0\\

\rowcolor{gray!6}
 & Derandomized Knockoffs & 163 & 31 & 3 & 3 & 0 & 0 & 0 & 0 & 0 & 0 & 0\\

\rowcolor{gray!6}
\multirow{-2}{*}{\centering\arraybackslash 5} & Vanilla Knockoffs & 116 & 42 & 23 & 11 & 5 & 1 & 1 & 1 & 0 & 0 & 0\\

 & Derandomized Knockoffs & 153 & 35 & 10 & 2 & 0 & 0 & 0 & 0 & 0 & 0 & 0\\

\multirow{-2}{*}{\centering\arraybackslash 6} & Vanilla Knockoffs & 110 & 45 & 30 & 7 & 1 & 3 & 3 & 0 & 1 & 0 & 0\\

\rowcolor{gray!6}
 & Derandomized Knockoffs & 145 & 48 & 6 & 1 & 0 & 0 & 0 & 0 & 0 & 0 & 0\\

\rowcolor{gray!6}
\multirow{-2}{*}{\centering\arraybackslash 7} & Vanilla Knockoffs & 102 & 44 & 20 & 18 & 8 & 5 & 3 & 0 & 0 & 0 & 0\\

 & Derandomized Knockoffs & 152 & 40 & 7 & 1 & 0 & 0 & 0 & 0 & 0 & 0 & 0\\

\multirow{-2}{*}{\centering\arraybackslash 8} & Vanilla Knockoffs & 102 & 51 & 22 & 10 & 7 & 3 & 3 & 0 & 1 & 1 & 0\\
\bottomrule
\end{tabular}

  \caption{Frequencies (out of $200$ runs) of the number of false 
    discoveries. The simulation setting 
    is the same as in Figure \ref{fig:pfer_amp_gaussian_within}.}
  \label{tab:pfer_amp_gaussian}
\end{table}

\begin{table}[ht]
  \centering
  \begin{table}[H]
\centering
\resizebox{\linewidth}{!}{
\begin{tabular}[t]{ccccccccccccccccccc}
\toprule
\multicolumn{1}{c}{ } & \multicolumn{1}{c}{ } & \multicolumn{17}{c}{V} \\
\cmidrule(l{3pt}r{3pt}){3-19}
Amplitude & Method & 0 & 1 & 2 & 3 & 4 & 5 & 6 & 7 & 8 & 9 & 10 & 11 & 12 & 13 & 14 & 15 & 16+\\
\rowcolor{gray!6}
\midrule
 & Derandomized Knockoffs & 62 & 57 & 45 & 21 & 15 & 0 & 0 & 0 & 0 & 0 & 0 & 0 & 0 & 0 & 0 & 0 & 0\\

\rowcolor{gray!6}
\multirow{-2}{*}{\centering\arraybackslash 3} & Vanilla Knockoffs & 57 & 43 & 35 & 23 & 27 & 6 & 4 & 4 & 0 & 0 & 1 & 0 & 0 & 0 & 0 & 0 & 0\\

 & Derandomized Knockoffs & 41 & 77 & 51 & 22 & 8 & 1 & 0 & 0 & 0 & 0 & 0 & 0 & 0 & 0 & 0 & 0 & 0\\

\multirow{-2}{*}{\centering\arraybackslash 4} & Vanilla Knockoffs & 49 & 48 & 44 & 20 & 17 & 11 & 7 & 1 & 2 & 1 & 0 & 0 & 0 & 0 & 0 & 0 & 0\\

\rowcolor{gray!6}
 & Derandomized Knockoffs & 51 & 65 & 53 & 17 & 10 & 2 & 1 & 1 & 0 & 0 & 0 & 0 & 0 & 0 & 0 & 0 & 0\\

\rowcolor{gray!6}
\multirow{-2}{*}{\centering\arraybackslash 5} & Vanilla Knockoffs & 52 & 45 & 40 & 24 & 18 & 8 & 4 & 7 & 0 & 2 & 0 & 0 & 0 & 0 & 0 & 0 & 0\\

 & Derandomized Knockoffs & 48 & 69 & 44 & 22 & 15 & 1 & 1 & 0 & 0 & 0 & 0 & 0 & 0 & 0 & 0 & 0 & 0\\

\multirow{-2}{*}{\centering\arraybackslash 6} & Vanilla Knockoffs & 47 & 44 & 35 & 23 & 17 & 15 & 6 & 6 & 4 & 3 & 0 & 0 & 0 & 0 & 0 & 0 & 0\\

\rowcolor{gray!6}
 & Derandomized Knockoffs & 57 & 61 & 53 & 20 & 6 & 2 & 1 & 0 & 0 & 0 & 0 & 0 & 0 & 0 & 0 & 0 & 0\\

\rowcolor{gray!6}
\multirow{-2}{*}{\centering\arraybackslash 7} & Vanilla Knockoffs & 49 & 40 & 39 & 26 & 21 & 11 & 10 & 3 & 1 & 0 & 0 & 0 & 0 & 0 & 0 & 0 & 0\\

 & Derandomized Knockoffs & 40 & 71 & 54 & 20 & 9 & 5 & 1 & 0 & 0 & 0 & 0 & 0 & 0 & 0 & 0 & 0 & 0\\

\multirow{-2}{*}{\centering\arraybackslash 8} & Vanilla Knockoffs & 45 & 54 & 44 & 21 & 13 & 11 & 6 & 2 & 1 & 1 & 0 & 1 & 0 & 0 & 0 & 1 & 0\\
\bottomrule
\end{tabular}}
\end{table}

  \caption{Frequencies (out of $200$ runs) of the number of false 
    discoveries. The simulation setting 
    is the same as in Figure \ref{fig:pfer_amp_gaussian_large_within}.}
  \label{tab:pfer_amp_gaussian_large}
\end{table}

\begin{table}[ht]
  \centering
  
\begin{tabular}[t]{ccccccccccc}
\toprule
\multicolumn{1}{c}{ } & \multicolumn{1}{c}{ } & \multicolumn{9}{c}{V} \\
\cmidrule(l{3pt}r{3pt}){3-11}
Amplitude & Method & 0 & 1 & 2 & 3 & 4 & 5 & 6 & 7 & 8+\\
\rowcolor{gray!6}
\midrule
 & Derandomized Knockoffs & 196 & 4 & 0 & 0 & 0 & 0 & 0 & 0 & 0\\

\rowcolor{gray!6}
\multirow{-2}{*}{\centering\arraybackslash 10} & Vanilla Knockoffs & 171 & 20 & 9 & 0 & 0 & 0 & 0 & 0 & 0\\

 & Derandomized Knockoffs & 196 & 4 & 0 & 0 & 0 & 0 & 0 & 0 & 0\\

\multirow{-2}{*}{\centering\arraybackslash 15} & Vanilla Knockoffs & 172 & 18 & 6 & 2 & 1 & 1 & 0 & 0 & 0\\

\rowcolor{gray!6}
 & Derandomized Knockoffs & 194 & 6 & 0 & 0 & 0 & 0 & 0 & 0 & 0\\

\rowcolor{gray!6}
\multirow{-2}{*}{\centering\arraybackslash 20} & Vanilla Knockoffs & 166 & 18 & 9 & 4 & 3 & 0 & 0 & 0 & 0\\

 & Derandomized Knockoffs & 192 & 8 & 0 & 0 & 0 & 0 & 0 & 0 & 0\\

\multirow{-2}{*}{\centering\arraybackslash 25} & Vanilla Knockoffs & 152 & 22 & 15 & 6 & 4 & 0 & 0 & 1 & 0\\

\rowcolor{gray!6}
 & Derandomized Knockoffs & 193 & 7 & 0 & 0 & 0 & 0 & 0 & 0 & 0\\

\rowcolor{gray!6}
\multirow{-2}{*}{\centering\arraybackslash 30} & Vanilla Knockoffs & 161 & 25 & 7 & 5 & 1 & 0 & 0 & 1 & 0\\

 & Derandomized Knockoffs & 190 & 9 & 1 & 0 & 0 & 0 & 0 & 0 & 0\\

\multirow{-2}{*}{\centering\arraybackslash 35} & Vanilla Knockoffs & 155 & 27 & 4 & 10 & 3 & 1 & 0 & 0 & 0\\
\bottomrule
\end{tabular}

  \caption{Frequencies (out of $200$ runs) of the number of false 
    discoveries. The simulation setting 
    is the same as in Figure \ref{fig:fwer_amp_gaussian_within}.}
  \label{tab:fwer_amp_gaussian}
\end{table}

\begin{table}[ht]
  \centering
  
\begin{tabular}[t]{ccccccccccccc}
\toprule
\multicolumn{1}{c}{ } & \multicolumn{1}{c}{ } & \multicolumn{11}{c}{V} \\
\cmidrule(l{3pt}r{3pt}){3-13}
Amplitude & Method & 0 & 1 & 2 & 3 & 4 & 5 & 6 & 7 & 8 & 9 & 10+\\
\rowcolor{gray!6}
\midrule
 & Derandomized Knockoffs & 179 & 19 & 0 & 1 & 1 & 0 & 0 & 0 & 0 & 0 & 0\\

\rowcolor{gray!6}
\multirow{-2}{*}{\centering\arraybackslash 6} & Vanilla Knockoffs & 127 & 35 & 19 & 10 & 5 & 4 & 0 & 0 & 0 & 0 & 0\\

 & Derandomized Knockoffs & 167 & 30 & 3 & 0 & 0 & 0 & 0 & 0 & 0 & 0 & 0\\

\multirow{-2}{*}{\centering\arraybackslash 8} & Vanilla Knockoffs & 111 & 35 & 28 & 15 & 9 & 0 & 1 & 1 & 0 & 0 & 0\\

\rowcolor{gray!6}
 & Derandomized Knockoffs & 177 & 22 & 0 & 1 & 0 & 0 & 0 & 0 & 0 & 0 & 0\\

\rowcolor{gray!6}
\multirow{-2}{*}{\centering\arraybackslash 10} & Vanilla Knockoffs & 125 & 37 & 15 & 14 & 5 & 2 & 1 & 0 & 0 & 1 & 0\\

 & Derandomized Knockoffs & 164 & 34 & 2 & 0 & 0 & 0 & 0 & 0 & 0 & 0 & 0\\

\multirow{-2}{*}{\centering\arraybackslash 12} & Vanilla Knockoffs & 123 & 42 & 18 & 6 & 6 & 3 & 0 & 0 & 2 & 0 & 0\\

\rowcolor{gray!6}
 & Derandomized Knockoffs & 169 & 28 & 3 & 0 & 0 & 0 & 0 & 0 & 0 & 0 & 0\\

\rowcolor{gray!6}
\multirow{-2}{*}{\centering\arraybackslash 14} & Vanilla Knockoffs & 125 & 42 & 18 & 7 & 3 & 3 & 1 & 1 & 0 & 0 & 0\\

 & Derandomized Knockoffs & 184 & 14 & 2 & 0 & 0 & 0 & 0 & 0 & 0 & 0 & 0\\

\multirow{-2}{*}{\centering\arraybackslash 16} & Vanilla Knockoffs & 126 & 45 & 17 & 7 & 4 & 0 & 1 & 0 & 0 & 0 & 0\\
\bottomrule
\end{tabular}

  \caption{Frequencies (out of $200$ runs) of the number of false 
    discoveries. The simulation setting 
    is the same as in Figure \ref{fig:fwer_amp_gaussian_large_within}.}
  \label{tab:fwer_amp_gaussian_large}
\end{table}

\end{document}